\newcommand{\odec}[1]{\breve{#1}}
\newcommand{\dec}[1]{\wt{#1}}
\newcommand{\Normal}{\mathrm{N}}    \newcommand{\normal}{\Normal}
\begin{document}

\title{Polynomial bounds for decoupling, with applications}
\author{Ryan O'Donnell\thanks{Computer Science Department, Carnegie Mellon University.  Supported in part by NSF grant CCF-1319743. \texttt{\{odonnell,yuzhao1\}@cs.cmu.edu}} \and Yu Zhao$^*$}
\maketitle

\begin{abstract}
    Let $f(x) = f(x_1, \dots, x_n) = \sum_{|S| \leq k} a_S \prod_{i \in S} x_i$ be an $n$-variate real multilinear polynomial of degree at most~$k$, where $S \subseteq [n] = \{1, 2, \dots, n\}$.  For its \emph{one-block decoupled} version,
    \[
        \odec{f}(y,z) = \sum_{\abs{S} \leq k} a_S \sum_{i \in S\vphantom{\ \}}} y_i \prod_{\mathclap{j \in S \setminus \{i\}}} z_j,
    \]
    we show tail-bound comparisons of the form
    \[
        \Pr\Brak{\abs{\odec{f}(\by,\bz)} > C_k t} \leq D_k \Pr\Brak{\abs{f(\bx)} > t}.
    \]
    Our constants $C_k, D_k$ are significantly better than those known for ``full decoupling''. For example, when $\bx, \by, \bz$ are independent Gaussians we obtain $C_k = D_k = O(k)$; when $\bx, \by, \bz$ are $\pm 1$ random variables we obtain $C_k = O(k^2)$,  $D_k = k^{O(k)}$.  By contrast, for full decoupling only $C_k = D_k = k^{O(k)}$ is known in these settings.

    We describe consequences of these results for query complexity (related to conjectures of Aaronson and Ambainis) and for analysis of Boolean functions (including an optimal sharpening of the DFKO~Inequality).
\end{abstract}
\thispagestyle{empty}

\newpage
\setcounter{page}{1}
\section{Introduction}

Broadly speaking, \emph{decoupling} refers to the idea of analyzing a complicated random sum involving dependent random variables by comparing it to a simpler random sum where some independence is introduced between the variables.  For perhaps the simplest example, if $(a_{ij})_{i,j=1}^n \in \R$ and $\bx_1, \dots, \bx_n, \by_1, \dots, \by_n$ are independent uniform $\pm 1$  random variables, we might ask how the moments of
\[
    \sum_{i,j=1}^n a_{ij} \bx_i \bx_j,  \text{ and its ``decoupled version''\ } \sum_{i,j=1}^n a_{ij} \bx_i \by_j
\]
compare.  The theory of decoupling inequalities developed originally in the study of Banach spaces, stochastic processes, and $U$-statistics, mainly between the mid-'80s and mid-'90s; see~\cite{dlPG99} for a book-length treatment.

The powerful tool of decoupling seems to be relatively under-used in theoretical computer science. (A recent work of Makarychev and Sidirenko~\cite{MS14} provides an exception, though they use a much different kind of decoupling than the one studied in this paper.)  In this work we will observe several places where decoupling can be used in a  ``black-box'' fashion to solve or simplify problems quite easily.

The main topic of the paper, however, is to study a partial form decoupling that we call ``one-block decoupling''.  The advantage of one-block decoupling is that for degree-$k$ polynomials we can achieve bounds with only \emph{polynomial} dependence on~$k$, as opposed to the exponential dependence on~$k$ that arises for the standard full decoupling.  Although one-block decoupling does not introduce as much independence as full decoupling does, we show several applications where one-block decoupling is sufficient.

The applications we describe in this paper are the following:
\begin{itemize}
    \item (Theorem~\ref{thm:our-aa}.) Aaronson and Ambainis's conjecture concerning the generality of their~\cite[Theorem~4]{AA15} holds.  I.e., there is a sublinear-query algorithm for estimating any bounded, constant-degree Boolean function.
    \item (Theorem~\ref{thm:AA-WLOG}.) The Aaronson--Ambainis Conjecture~\cite{Aar08,AA14} holds if and only if it holds for one-block decoupled functions. We also show how the best known result towards the conjecture can be proven extremely easily~\eqref{eqn:one-liner} in the case of one-block decoupled functions.
    \item (Corollary~\ref{cor:improved-dfko-tail}.) An optimal form of the DFKO Fourier Tail Bound~\cite{DFKO07}: any bounded Boolean function~$f$ that is far from being a junta satisfies $\sum_{\abs{S} > k} \wh{f}(S)^2 \geq \exp(-O(k^2))$.  Relatedly (Corollary~\ref{cor:improved-dfko-ineq}), any degree-$k$ real-valued Boolean function with $\Omega(1)$ variance and small influences must exceed~$1$ in absolute value with probability at least $\exp(-O(k^2))$; this can be further improved to $\exp(-O(k))$ if $f$ is homogeneous.
\end{itemize}

\subsection{Definitions}
Throughout this section, let $f$ denote a multilinear polynomial of degree at most~$k$ in $n$ variables $x = (x_1, \dots, x_n)$, with coefficients $a_S$ from a separable Banach space:
\[
    f(x) = \sum_{\substack{S \subseteq [n] \\ |S| \leq k}} a_S x_S,
\]
where we write $x_S = \prod_{i \in S} x_i$ for brevity.
(The coefficients $a_S$ will be real in all of our applications; however we allow them to be from a Banach space since the proofs are no more complicated.)

We begin by defining our notion of partial decoupling:
\begin{definition}
    The \emph{one-block decoupled} version of $f$, denoted $\odec{f}$, is the multilinear polynomial over $2n$ variables $y = (y_1, \dots, y_n)$ and $z = (z_1, \dots, z_n)$ defined by
    \[
        \odec{f}(y,z) = \sum_{\substack{S \subseteq [n] \\ 1 \leq |S| \leq k}} a_S \sum_{i \in S} y_i z_{S \setminus i}.
    \]
\end{definition}
In other words, each monomial term like $x_1x_3x_7$ is replaced with $y_1z_3z_7 + z_1y_3z_7 + z_1z_3y_7$.  In case $f$ is homogeneous we have the relation $\odec{f}(x,x) = k f(x)$.

Let us also recall the traditional notion of decoupling:
\begin{definition}
    The \emph{(fully) decoupled} version of $f$, which we denote by $\dec{f}$, is a multilinear polynomial over $k$ \emph{blocks} $x^{(1)}, \dots, x^{(k)}$ of $n$ variables; each $x^{(i)}$ is $x^{(i)} = (x^{(i)}_1, \dots, x^{(i)}_n)$. It is formed as follows:  for each monomial $x_S$ in~$f$, we replace it with the average over all ways of assigning its variables to different blocks.  More formally,
    \[
        \dec{f}(x^{(1)}, \dots, x^{(k)}) = a_\emptyset + \sum_{\substack{S \subseteq [n] \\ 1 \leq |S| \leq k}} \frac{(k-|S|)!}{k!}\cdot a_S \sum_{\substack{\text{injective} \\ b : S \to [k]}}\ \prod_{i \in S} x_i^{(b(i))}.
    \]
\end{definition}
The definition is again simpler if $f$ is homogeneous. For example, if $f$ is homogeneous of degree~$3$, then each monomial in~$f$ like $x_1x_3x_7$ is replaced in~$\dec{f}$ with
\[
    \frac16 	\paren{w_1y_2z_3+w_1z_2y_3+y_1w_2z_3+y_1z_2w_3+z_1w_2y_3+z_1y_2w_3}.
\]
(Here we wrote $w$, $y$, $z$ instead of $x^{(1)}$, $x^{(2)}$, $x^{(3)}$, for simplicity.)  Note that $\dec{f}(x, x, \dots, x) = f(x)$  always holds, even if $f$ is not homogeneous.\\

We conclude by comparing the two kinds of decoupling.  Assume for simplicity that $f$ is homogeneous of degree~$k$.  The fully decoupled version~$\dec{f}(x^{(1)}, \dots, x^{(k)})$ is in ``block-multilinear form''; i.e., each monomial contains exactly one variable from each of the~$k$ ``blocks''.  This kind of structure has often been recognized as useful in theoretical computer science; see, e.g.,~\cite{KN08,Lov10,KM13,AA15}. By contrast, the one-block decoupling~$\odec{f}(y,z)$ does not have such a simple structure; we only have that each monomial contains exactly one $y$-variable.  Nevertheless we will see several examples in this paper where having one-block decoupled form is just as useful as having fully decoupled form.  And as mentioned, we will show that it is possible to achieve one-block decoupling with only $\poly(k)$ parameter losses, whereas full decoupling in general suffers exponential losses in~$k$.
\begin{remark}  \label{rem:scaling}
    We have also chosen  different ``scalings'' for the two kinds of decoupling. For example, in the homogeneous case, we have $ \dec{f}(y, z, z, \dots, z) = \frac{1}{k} \cdot \odec{f}(y,z)$ and also $\Var[\dec{f}] = \frac{1}{(k-1)!} \Var[\odec{f}]$ for $f : \{\pm 1\}^n \to \R$.
\end{remark}

\subsection{A useful inequality}
Several times we will use the following basic inequality from analysis of Boolean functions, which relies on hypercontractivity; see~\cite[Theorems~9.24,~10.23]{OD14}.
\begin{theorem}                                     \label{thm:hypercon}
    Let $f(x) = \sum_{|S| \leq k} a_S x_S$ be a nonconstant $n$-variate multilinear polynomial of degree at most~$k$, where the coefficients $a_S$ are real.  Let $\bx_1, \dots, \bx_n$ be independent uniform $\pm 1$ random variables.  Then
    \[
        \Pr\bigl[f(\bx) > \E[f]\bigr] \geq \tfrac14 e^{-2k}.
    \]
    This also holds if some of the $\bx_i$'s are standard Gaussians.\footnote{Although it is not stated in~\cite{OD14}, an identical proof works since Gaussians have the same hypercontractivity properties as uniform $\pm 1$ random variables.}   Finally, if the $\bx_i$'s are not uniform $\pm 1$ random variables, but they take on each value $\pm 1$ with probability at least~$\lambda$, then we may replace $\tfrac14 e^{-2k}$ by $\tfrac14 (e^2/2\lambda)^{-k}$.
\end{theorem}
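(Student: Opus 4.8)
The plan is to combine the Paley--Zygmund method with hypercontractivity. First I would pass to the mean-zero polynomial $g := f - \E[f]$: this does not change the degree, and since $f$ is nonconstant we have $g \not\equiv 0$, while $g$ has all moments finite (being a polynomial in bounded, or Gaussian, coordinates). It then suffices to prove $\Pr\Brak{g > 0} \geq \tfrac14 e^{-2k}$. A one-line Cauchy--Schwarz step reduces this to an $L^p$-norm comparison: writing $g^+ = \max\{g,0\}$, the hypothesis $\E[g] = 0$ forces $\E[g^+] = \tfrac12 \E\abs{g} = \tfrac12 \|g\|_1$, while $\E[g^+] = \E\bigl[g \cdot \mathbf{1}\{g > 0\}\bigr] \leq \|g\|_2 \sqrt{\Pr\Brak{g > 0}}$ by Cauchy--Schwarz. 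Hence $\Pr\Brak{g > 0} \geq \tfrac14 \paren{\|g\|_1/\|g\|_2}^2$, and everything comes down to showing $\|g\|_1 \geq e^{-k} \|g\|_2$.

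This last inequality is the heart of the matter, and it is where hypercontractivity enters. For any $q > 2$, Hölder's inequality gives $\|g\|_2 \leq \|g\|_1^{1-\theta} \|g\|_q^{\theta}$ with $\theta = \tfrac{q}{2(q-1)}$ (so that $\tfrac12 = (1-\theta) + \tfrac\theta q$), which rearranges to $\|g\|_1/\|g\|_2 \geq \paren{\|g\|_q/\|g\|_2}^{-q/(q-2)}$. The $(2,q)$-hypercontractive inequality for degree-$\leq k$ multilinear polynomials of independent uniform $\pm 1$ bits, $\|g\|_q \leq (q-1)^{k/2}\|g\|_2$, then yields $\|g\|_1/\|g\|_2 \geq \bigl[(q-1)^{q/(2(q-2))}\bigr]^{-k}$. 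I would finish by optimizing over $q$: an elementary calculus check shows $\inf_{q > 2} (q-1)^{q/(2(q-2))} = e$, the value approached as $q \to 2^+$, so letting $q \to 2^+$ gives $\|g\|_1 \geq e^{-k}\|g\|_2$, and therefore $\Pr\Brak{g > 0} \geq \tfrac14 e^{-2k}$.

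For the variants I would rerun the same two steps. If some coordinates are standard Gaussians, Nelson's hypercontractivity supplies the identical estimate $\|g\|_q \leq (q-1)^{k/2}\|g\|_2$ (and the hypercontractive inequality tensorizes across a mixture of $\pm 1$ and Gaussian coordinates), so nothing changes. If the bits are only assumed to take each value $\pm 1$ with probability $\geq \lambda$, I would instead invoke the $\lambda$-biased $(2,q)$-hypercontractive inequality from~\cite{OD14}; there the per-coordinate constant is larger, of the form $\|g\|_q \leq C_\lambda(q)^k \|g\|_2$ with $C_\lambda(q)$ behaving like $(q-1)^{1/2}(2\lambda)^{-(1/2 - 1/q)}$, so that $C_\lambda(q)^{q/(q-2)} = (q-1)^{q/(2(q-2))} \cdot (2\lambda)^{-1/2}$ and its infimum over $q > 2$ is $e/\sqrt{2\lambda}$. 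This gives $\|g\|_1 \geq (2\lambda/e^2)^{k/2}\|g\|_2$, hence $\Pr\Brak{g > 0} \geq \tfrac14 (e^2/2\lambda)^{-k}$.

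The only genuine obstacle is extracting the \emph{sharp} constant. The naive choice $q = 4$ only delivers $\Pr\Brak{g > 0} \geq \tfrac14 \cdot 9^{-k}$; improving $9^{-k}$ to $e^{-2k}$ forces one to use $(2,q)$-hypercontractivity with $q$ approaching $2$ and to carry out the optimization above. In the $\lambda$-biased case one must in addition keep careful track of the joint dependence of the hypercontractivity constant on $q$ \emph{and} $\lambda$, so that the two contributions combine into the clean exponent $(2\lambda/e^2)^{k/2}$; everything else is routine.
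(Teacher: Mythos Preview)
The paper does not give its own proof of this theorem; it simply quotes the statement from~\cite[Theorems~9.24,~10.23]{OD14} as a known inequality. Your proposal is precisely the standard argument that appears there: the Paley--Zygmund/Cauchy--Schwarz step reduces matters to $\|g\|_1 \geq e^{-k}\|g\|_2$, and that in turn follows from H\"older interpolation against $(2,q)$-hypercontractivity $\|g\|_q \leq (q-1)^{k/2}\|g\|_2$, optimized by sending $q \to 2^+$. The uniform $\pm 1$ and Gaussian cases are correct as written. For the $\lambda$-biased case the structure is the same and the target inequality $\|g\|_2 \leq (e/\sqrt{2\lambda})^k\|g\|_1$ is exactly what is needed; the expression you write for $C_\lambda(q)$ is the $q\to 2^+$ asymptotic of the biased Bonami constant rather than its exact value for every $q$, but since only that limit enters your computation, the derivation goes through.
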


\section{Decoupling theorems, and query complexity applications}
\subsection{Classical decoupling inequalities, and an application in query complexity}
Traditional decoupling inequalities compare the probabilistic behavior of $f$ and $\dec{f}$ under independent random variables (usually symmetric ones; e.g., standard Gaussians).  The easier forms of the inequalities compare expectations under a convex test function; e.g., they can be used to compare $p$-norms.  The following was essentially proved in~\cite{dlPen92}; see~\cite[Theorem~3.1.1,(3.4.23)--(3.4.27)]{dlPG99}:
\begin{theorem}                                     \label{thm:classic-convex}
    Let $\Phi : \R^{\geq 0} \to \R^{\geq 0}$ be convex and nondecreasing.  Let $\bx = (\bx_1, \dots, \bx_n)$ consist of independent real random variables with all moments finite, and let $\bx^{(1)}, \dots, \bx^{(k)}$ denote independent copies. Then
    \[
        \E\Brak{\Phi\Paren{\norm{\dec{f}\paren{\bx^{(1)}, \dots, \bx^{(k)}}}}} \leq \E\Brak{\Phi\Paren{C_k\norm{f\paren{\bx}}}},
    \]
    where $C_k = k^{O(k)}$ is a constant depending only on~$k$.
\end{theorem}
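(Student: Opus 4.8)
\medskip
\noindent\textit{Proof plan.}
The plan is to prove the comparison by a \emph{random-block} argument: reassign each coordinate to one of the $k$ blocks uniformly at random, and then extract everything using Jensen's inequality, applied to the three convex operations in sight -- the norm $\norm{\cdot}$, the function $\Phi$, and averaging. Two packaging reductions come first. \textbf{(i)~Centering.} Writing $\mu_i=\E[\bx_i]$, the substitution $\bx_i\mapsto\bx_i+\mu_i$ turns $f$ into another degree-$\le k$ multilinear polynomial whose value at the centered vector $(\bx_i-\mu_i)_i$ equals $f(\bx)$, and -- by a short computation directly from the definition of $\dec f$ -- the analogous block-by-block substitution carries $\dec f$ to the decoupling of the new polynomial; so we may assume from now on that $\E[\bx_i]=0$ for every $i$. \textbf{(ii)~Splitting by degree.} Write $f=\sum_{s=0}^{k}f_s$ with $f_s$ the degree-$s$ homogeneous part. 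The degree-$s$ part of $\dec f$ depends only on $f_s$, and in fact equals the average, over all $s$-element subsets $J\subseteq[n]$ of indices among $[k]$, i.e.\ over $s$-subsets $J\subseteq[k]$, of the decoupling $\dec{f_s}$ of the homogeneous polynomial $f_s$ using the $s$ blocks indexed by $J$; Jensen over the choice of $J$, together with $\Phi\paren{\sum_{s=0}^{k}t_s}\le\tfrac1{k+1}\sum_{s}\Phi((k+1)t_s)$, reduces the theorem to proving, for each $s$ and every convex nondecreasing $\Psi$, both $\E\Psi\paren{\norm{\dec{f_s}(\bx^{(1)},\dots,\bx^{(s)})}}\le\E\Psi\paren{C_s\norm{f_s(\bx)}}$ and $\E\Psi\paren{\norm{f_s(\bx)}}\le\E\Psi\paren{\Lambda_s\norm{f(\bx)}}$, with $C_s,\Lambda_s$ depending only on $k$, and then chaining and summing.

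\textbf{Extracting a homogeneous part.} For the second of these I would use a ``noise operator'': for $\rho\in[0,1]$, replace each $\bx_i$ independently by a fresh copy with probability $1-\rho$, obtaining $\by^{\rho}$ with the same law as $\bx$ and -- precisely because $\E[\bx_i]=0$ -- $\E\Brak{f(\by^{\rho})\mid\bx}=\sum_{j}\rho^{j}f_j(\bx)$. Lagrange interpolation in $\rho$ at $k+1$ distinct nodes in $[0,1]$ then writes $f_s(\bx)=\sum_r\lambda^{(s)}_r\,\E\Brak{f(\by^{\rho_r})\mid\bx}$, and Jensen for $\norm{\cdot}$, for $\Psi$, and for the probability vector $\bigl(\abs{\lambda^{(s)}_r}/\Lambda_s\bigr)_r$, where $\Lambda_s:=\sum_r\abs{\lambda^{(s)}_r}$, gives the claim, using that each $\by^{\rho_r}$ has the law of $\bx$.

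\textbf{The homogeneous case.} It remains to treat a homogeneous $g=f_s$ of degree $s$ in centered variables and prove $\E\Psi\paren{\norm{\dec g(\bx^{(1)},\dots,\bx^{(s)})}}\le\E\Psi\paren{C_s\norm{g(\bx)}}$. Let $\pi\colon[n]\to[s]$ be a uniformly random colouring, independent of the blocks, with colour classes $I_1,\dots,I_s$. Because $g$ is homogeneous of degree $s$, an $s$-set $S$ is ``rainbow for $\pi$'' (meets each colour exactly once) with probability $s!/s^s$, and conditioned on this $\pi|_S$ is a uniform bijection onto $[s]$; averaging over $\pi$ yields
\[
    \dec g(\bx^{(1)},\dots,\bx^{(s)})=\frac{s^s}{s!}\;\E_{\pi}\Brak{\ \sum_{S\text{ rainbow for }\pi}a_S\prod_{i\in S}\bx_i^{(\pi(i))}\ }.
\]
Two applications of Jensen (for $\norm{\cdot}$, then $\Psi$), together with the fact that conditionally on $\pi$ the vector $\bigl(\bx_i^{(\pi(i))}\bigr)_i$ has the same law as $\bx$, bound $\E\Psi\bigl(\tfrac{s!}{s^s}\norm{\dec g}\bigr)$ by $\E_\pi\E_{\bx}\Psi\paren{\norm{g_\pi(\bx)}}$, where $g_\pi:=\sum_{S\text{ rainbow for }\pi}a_Sx_S$ is the rainbow restriction of $g$. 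Finally, I extract $g_\pi$ from $g$ by inclusion--exclusion: $g_\pi=\sum_{J\subseteq[s]}(-1)^{s-\abs J}\,g\bigl(\bx\text{ with all coordinates outside }\bigcup_{j\in J}I_j\text{ set to }0\bigr)$. Since $\E[\bx_i]=0$, each summand is a conditional expectation of $g(\bx)$, so Jensen bounds its $\Psi$-moment by $\E\Psi\paren{\norm{g(\bx)}}$; hence $\E_{\bx}\Psi\paren{\norm{g_\pi(\bx)}}\le\E_{\bx}\Psi\paren{2^s\norm{g(\bx)}}$ and $C_s=\tfrac{(2s)^s}{s!}$. Chaining all the steps gives the required bound $C_k\le k^{O(k)}$ (in fact $2^{O(k)}$ with a careful choice of the interpolation nodes).

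The step I expect to be the crux is this last one -- the ``restriction lemma'' controlling the rainbow projection $g\mapsto g_\pi$. It is where the hypothesis $\E[\bx_i]=0$ does its real work, by identifying a coordinate-restriction of $g$ with a conditional expectation of $g(\bx)$, and it is what forces the centering reduction to the front of the argument. If one only assumed the $\bx_i$ symmetric, one could instead peel off $g_\pi$ using a random sign flip within each colour class, at no cost in the constant; but for general centered -- hence, via reduction~(i), for general -- $\bx$, the inclusion--exclusion identity appears to be the right substitute, and verifying that it loses only a $2^{O(k)}$ factor (via the conditional-expectation identity and Jensen) is the technical heart.
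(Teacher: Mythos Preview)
The paper does not supply its own proof of this statement: Theorem~\ref{thm:classic-convex} is quoted from the classical literature (attributed to de~la~Pe\~na~\cite{dlPen92}, with pointers to~\cite[Theorem~3.1.1,(3.4.23)--(3.4.27)]{dlPG99}), and no argument appears beyond the citation. So there is nothing in the paper to compare your plan against. That said, your plan is essentially a correct reconstruction of the classical argument: the random-colouring identity $\dec g=\tfrac{s^{s}}{s!}\,\E_\pi[\,\cdot\,]$ together with inclusion--exclusion to control the rainbow projection $g\mapsto g_\pi$ is precisely the skeleton of de~la~Pe\~na's proof, and your centering and degree-splitting reductions are the standard preprocessing. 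The verification that centering commutes with $\dec{\cdot}$, that the degree-$s$ part of $\dec f$ is the average over $s$-subsets $J\subseteq[k]$ of $\dec{f_s}$ on the blocks in~$J$, and that setting coordinates to~$0$ equals a conditional expectation once $\E[\bx_i]=0$, are all correct as you have them.

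One caveat on the parenthetical: the claim that a careful choice of Lagrange nodes in $[0,1]$ yields an overall constant $2^{O(k)}$ looks optimistic. The inverse Vandermonde for $k{+}1$ nodes confined to $[0,1]$ is in general badly conditioned, and the coefficient-extraction weights $\Lambda_s=\sum_r|\lambda^{(s)}_r|$ are not obviously $2^{O(k)}$ for any such choice. Note that Remark~\ref{rem:Kwapien} records $C_k\le e^k$ only for uniform $\pm 1$ (and $k^{k/2}/k!$ for Gaussians), via Kwapie\'n's sign-randomisation argument; for \emph{general} independent $\bx_i$ the classical bound is $k^{O(k)}$, and your Lagrange step would not improve on that. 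This does not affect the theorem as stated, which asks only for $C_k=k^{O(k)}$, but you should drop or qualify the $2^{O(k)}$ aside.
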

\begin{remark}
    A reverse inequality also holds, with worse constant $k^{O(k^2)}$.
\end{remark}
Another line of research gave comparisons between tail bounds for $f$ and $\dec{f}$. This culminated in the following theorem from~\cite{dlPM95,Gin98}; see also~\cite[Theorem~3.4.6]{dlPG99}:
\begin{theorem}                                     \label{thm:classic-tail}
    In the setting of Theorem~\ref{thm:classic-convex}, for all $t > 0$,
    \[
        \Pr\Brak{\norm{\dec{f}\paren{\bx^{(1)}, \dots, \bx^{(k)}}} > C_kt} \leq D_k \Pr\Brak{\norm{f\paren{\bx}} > t},
    \]
    where $C_k = D_k = k^{O(k)}$. The analogous reverse bound also holds.
\end{theorem}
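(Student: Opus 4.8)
The plan is to obtain the tail comparison from the convex (moment) comparison of Theorem~\ref{thm:classic-convex} together with a combinatorial ``random relabelling'' identity. Note first that a tail comparison is \emph{not} a formal consequence of moment comparison: knowing $\norm{\dec{f}}_p \le C_k\norm{f}_p$ for every~$p$ tells us only about the moments of~$f$, not about $\Pr[\norm{f} > t]$, which can be far smaller than any moment bound predicts. I would reduce to the case that $f$ is homogeneous of degree~$k$ (by a standard manipulation, or by working through $\dec{f}(x,\dots,x)=f(x)$). Then, drawing $\ell=(\ell_1,\dots,\ell_n)\in[k]^n$ uniformly at random and independently of $\bx^{(1)},\dots,\bx^{(k)}$, form the ``masked'' $j$-th block $\bx^{[j]}$, whose $i$-th coordinate is $\bx^{(j)}_i\cdot\mathbf{1}[\ell_i=j]$, and the ``interleaved'' block $\bx$, whose $i$-th coordinate is $\bx^{(\ell_i)}_i$ --- this $\bx$ has exactly the distribution of the variable~$\bx$ in the theorem, so $\Pr[\norm{f(\bx)}>t]$ is the right-hand probability. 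Writing $\bx^{(\bullet)}=(\bx^{(1)},\dots,\bx^{(k)})$ and $\bx^{[\bullet]}=(\bx^{[1]},\dots,\bx^{[k]})$, and splitting each monomial $a_S\prod_{i\in S}\bx^{(\ell_i)}_i$ of $f(\bx)$ according to whether $\ell$ is injective on~$S$, yields the identity
\[
    f(\bx) \;=\; k!\,\dec{f}(\bx^{[\bullet]}) \;+\; R_\ell ,
\]
where $R_\ell$ collects the non-injective monomials, each of which involves strictly fewer than~$k$ of the blocks; moreover $\E_\ell\Brak{\dec{f}(\bx^{[\bullet]})} = k^{-k}\,\dec{f}(\bx^{(\bullet)})$, so that a rescaled random \emph{restriction} of $\dec{f}$ sits inside~$f$, up to the lower-complexity error~$R_\ell$.

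The tail comparison must now be extracted from this identity, and this is the heart of the matter: no one-shot argument suffices. In particular, a single second-moment / Paley--Zygmund estimate over the randomness of~$\ell$ alone cannot work, because for a fixed bad outcome of $\bx^{(\bullet)}$ the interleaved value $f(\bx)$ can be small for \emph{every}~$\ell$ while $\dec{f}(\bx^{(\bullet)})$ is large, so one must use the randomness of the $\bx^{(j)}$'s as well. The classical route (as in the cited references) is a stopping-time / good-$\lambda$ argument along a filtration that reveals~$\ell$ and the variables gradually: roughly, on all but a $k^{-O(k)}$-fraction of the event $\{\norm{\dec{f}} > C_k t\}$, a random interleaving of the blocks achieves $\norm{f} > t$. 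Inside this argument, Theorem~\ref{thm:classic-convex} supplies the moment bounds feeding the maximal inequalities, the hypercontractivity behind Theorem~\ref{thm:hypercon} supplies the needed anticoncentration (with constant $k^{O(k)}$), and the error $R_\ell$ --- being of strictly smaller decoupling complexity --- is peeled off by recursion. The reverse inequality $\Pr\Brak{\norm{f}>C_kt}\le D_k\Pr\Brak{\norm{\dec{f}}>t}$ is handled in the same framework run in the ``recoupling'' direction; here one uses that a masked block is a full block with some mean-zero coordinates zeroed out, so $\dec{f}(\bx^{[\bullet]})=\E\Brak{\dec{f}(\bx^{(\bullet)})\mid \text{unmasked coordinates}}$, which gives the ``easy'' moment bounds cheaply.

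The main obstacle is exactly this stopping-time step. Because the decoupled main term $k!\,\dec{f}(\bx^{[\bullet]})$ and the remainder $R_\ell$ are correlated --- and each can individually be much smaller than $\dec{f}(\bx^{(\bullet)})$ even when the latter is large --- the sum $f = k!\,\dec{f}(\bx^{[\bullet]}) + R_\ell$ must be controlled as a whole, simultaneously for enough labellings~$\ell$; this is what rules out treating~$\ell$'s randomness in a single shot and forces the gradual-revelation argument. What remains is bookkeeping: checking that the $k^{O(k)}$ losses at the various steps --- the anticoncentration constant, the maximal inequality for~$R_\ell$, and the at most~$k$ levels of the recursion on decoupling complexity --- multiply up to the claimed $C_k=D_k=k^{O(k)}$, and likewise for the reverse bound.
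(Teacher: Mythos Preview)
The paper does not prove Theorem~\ref{thm:classic-tail} at all: it is quoted as a classical result from the literature, with pointers to~\cite{dlPM95,Gin98} and to~\cite[Theorem~3.4.6]{dlPG99}. There is therefore no ``paper's own proof'' to compare your proposal against; the authors simply invoke the theorem as a black box, exactly as they do for Theorem~\ref{thm:classic-convex}. Your sketch is an attempt to reconstruct the classical argument, not a comparison target.

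As a sketch of the classical proof, your outline is in roughly the right spirit (random assignment of coordinates to blocks, an identity relating $f$ to a restriction of $\dec{f}$ plus lower-complexity terms, and an iterative/conditional argument), but two points deserve caution. First, you invoke Theorem~\ref{thm:hypercon} for the anticoncentration step; that theorem is specific to Gaussian and $\pm 1$ inputs, whereas Theorem~\ref{thm:classic-tail} is stated for arbitrary independent random variables with finite moments, so hypercontractivity in that form is not available and the cited proofs do not use it. Second, the ``stopping-time / good-$\lambda$'' description is more of a label than an argument: the actual proofs in~\cite{dlPM95,Gin98} use rather delicate conditional symmetrization and iteration that you have not supplied, and your remainder term $R_\ell$ is not obviously of ``strictly smaller decoupling complexity'' in a way that makes the recursion terminate cleanly. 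If your goal is a self-contained proof, you would need to fill in these steps; if your goal is to match the paper, you should simply cite the references as the paper does.
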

\begin{remark}  \label{rem:Kwapien}
    Kwapie\'{n}~\cite{Kwa87} showed that when the $\bx_i$'s are $\alpha$-stable random variables, the constant $C_k$ in Theorem~\ref{thm:classic-convex}, can be improved to $k^{k/\alpha}/k!$; this is $k^{k/2}/k!$ for standard Gaussians. Furthermore, for uniform $\pm 1$ random variables Kwapie\'{n}'s proof goes through as if they were $1$-stable; thus in this case one may take $C_k = k^k/k! \leq e^k$.  In the Gaussian setting with homogeneous~$f$, Kwapie\'{n} obtains $C_k = k^{k/2}/k!$ and $D_k = 2^k$ for Theorem~\ref{thm:classic-tail}.
\end{remark}
\begin{corollary}                                       \label{cor:classic-max}
    In the setting of Theorem~\ref{thm:classic-tail}, it holds that $\|\dec{f}\|_\infty \leq k^{O(k)} \|f\|_\infty$. Further, if $f \co \{\pm 1\}^n \to \R$ then $\|\dec{f}\|_\infty \leq (2e)^k \|f\|_\infty$.
\end{corollary}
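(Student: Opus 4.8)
The plan is to deduce both bounds from the convex-function decoupling inequality, Theorem~\ref{thm:classic-convex}, by the standard device of specializing to $\Phi(u) = u^p$ and letting $p \to \infty$. Concretely, for the first bound I would apply Theorem~\ref{thm:classic-convex} with $\Phi(u) = u^p$ for an arbitrary real $p \geq 1$ (so that $\Phi$ is convex and nondecreasing), obtaining
\[
    \norm{\dec{f}\paren{\bx^{(1)}, \dots, \bx^{(k)}}}_p \;\leq\; C_k \, \norm{f(\bx)}_p, \qquad C_k = k^{O(k)},
\]
where crucially the constant $C_k$ does not depend on $p$. I would then let $p \to \infty$, using the elementary fact that $\norm{Y}_p \uparrow \norm{Y}_\infty$ for any random variable $Y$ (with $\norm{Y}_\infty$ the essential supremum of $\norm{Y}$); this yields $\norm{\dec{f}}_\infty \leq C_k \norm{f}_\infty = k^{O(k)} \norm{f}_\infty$. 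When $\norm{f}_\infty = \infty$ --- for instance in the Gaussian case --- the statement is vacuous.

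For the $\pm 1$ refinement I would run exactly the same argument starting from the sharper form of Theorem~\ref{thm:classic-convex} recorded in Remark~\ref{rem:Kwapien}: when $\bx$ is uniform on $\{\pm 1\}^n$, Kwapie\'{n}'s proof ``as if $1$-stable'' gives the convex inequality with $C_k = k^k/k!$. Taking $\Phi(u) = u^p$ and $p \to \infty$ as above then gives $\norm{\dec{f}}_\infty \leq (k^k/k!)\norm{f}_\infty \leq e^k \norm{f}_\infty \leq (2e)^k \norm{f}_\infty$, which is in fact slightly stronger than claimed.

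An alternative, self-contained route for the first bound is a one-line contradiction straight from the tail comparison Theorem~\ref{thm:classic-tail}: if $\norm{\dec{f}}_\infty$ exceeded $C_k \norm{f}_\infty$, I would choose $t$ with $\norm{f}_\infty < t < \norm{\dec{f}}_\infty / C_k$; then $\Pr\Brak{\norm{\dec{f}\paren{\bx^{(1)}, \dots, \bx^{(k)}}} > C_k t} > 0$ by definition of the essential supremum, while $D_k \cdot \Pr\Brak{\norm{f(\bx)} > t} = 0$ because $t > \norm{f}_\infty$ --- a contradiction. I do not anticipate a genuine obstacle, since both claims are near-immediate corollaries; the only points I would double-check are the measure-theoretic facts that $\norm{Y}_p \uparrow \norm{Y}_\infty$ and that $C_k, D_k$ in Theorems~\ref{thm:classic-convex}--\ref{thm:classic-tail} are independent of $p$ and $t$, and --- for the second bound --- that the improved $\pm 1$ constant quoted in Remark~\ref{rem:Kwapien} indeed applies to non-homogeneous multilinear polynomials of degree at most $k$ and not merely homogeneous ones. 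The slack between the $e^k$ this argument produces and the stated $(2e)^k$ comfortably absorbs any such bookkeeping.
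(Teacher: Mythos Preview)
Your treatment of the first bound is correct and matches the paper exactly: the paper also notes that it follows from Theorem~\ref{thm:classic-convex} via $\Phi(u)=u^p$ and $p\to\infty$, or equivalently from Theorem~\ref{thm:classic-tail} by taking $t=\|f\|_\infty$.

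For the $\pm 1$ refinement, the point you flag as needing to be double-checked is in fact the crux. Kwapie\'{n}'s bound $C_k = k^k/k!$ is proved for \emph{homogeneous} polynomial chaos, so your direct route to $e^k$ does not go through for general degree-$\leq k$ polynomials. The paper closes this gap by decomposing $f = \sum_{j=0}^k f^{=j}$ into homogeneous parts and invoking a second fact of Kwapie\'{n} (\cite[Lemma~2]{Kwa87}), namely $\|f^{=j}\|_\infty \leq 2^j \|f\|_\infty$. Applying the homogeneous bound $j^j/j!$ to each $\dec{f^{=j}}$ and summing gives
\[
\|\dec{f}\|_\infty \leq \sum_{j=0}^k (j^j/j!)\,\|f^{=j}\|_\infty \leq \sum_{j=0}^k (j^j/j!)\,2^j\,\|f\|_\infty \leq (2e)^k \|f\|_\infty.
\]
So your instinct that the gap between $e^k$ and $(2e)^k$ is there to absorb ``bookkeeping'' is exactly right, but the bookkeeping is not free: the factor $2^j$ comes from a separate nontrivial inequality, and the stated $(2e)^k$ is the actual output of this reduction rather than slack.
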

\begin{proof}
    The first statement is an immediate corollary of either Theorem~\ref{thm:classic-convex} (taking $\Phi(u) = u^p$ and $p \to \infty$) or Theorem~\ref{thm:classic-tail} (taking $t = \|f\|_\infty$).  The second statement is immediate from Remark~\ref{rem:Kwapien}, with the better constant $k^k/k!$ in case $f$ is homogeneous.  In the general case, we use the fact that if $f^{=j}$ denotes the degree-$j$ part of~$f$, then $\|f^{=j}\|_\infty \leq 2^j \|f\|_\infty$; this is also proved by Kwapie\'{n}~\cite[Lemma~2]{Kwa87}.  Then
    \[
        \norm{\dec{f}}_\infty = \norm{\sum_{j=0}^k \dec{f^{=j}}}_\infty \leq \sum_{j=0}^k \norm{\dec{f^{=j}}}_\infty \leq \sum_{j=0}^k (j^j/j!) \norm{f^{=j}}_\infty \leq \sum_{j=0}^k (j^j/j!)2^j \norm{f}_\infty \leq (2e)^{k} \|f\|_\infty. \qedhere
    \]
\end{proof}
\begin{remark}
    Classical decoupling theory has not been too concerned with the dependence of constants on~$k$, and most statements like Theorem~\ref{thm:classic-tail} in the literature simply write $D_k = C_k$ to conserve symbols.  However there are good reasons to retain the distinction, since making $C_k$ small is usually much more important than making $D_k$ small.  For example, we can deduce Corollary~\ref{cor:classic-max} from Theorem~\ref{thm:classic-tail} regardless of $D_k$'s value.
\end{remark}

Let us give an example application of these fundamental decoupling results.  In a recent work comparing quantum query complexity to classical randomized query complexity, Aaronson and Ambainis~\cite{AA15} proved\footnote{Actually, there is a small gap in their proof. In the line reading ``By the concavity of the square root function\dots'', they claim that $\|\bX\|_1 \geq \|\bX\|_2$ when $\bX$ is a degree-$k$ polynomial of uniformly random $\pm 1$ bits.  In fact the inequality goes the other way in general. But the desired inequality does hold up to a factor of $e^k$ by~\cite[Theorem~9.22]{OD14}, and this is sufficient for their proof.} the following:
\begin{theorem}                                     \label{thm:aa1}
    Let $f$ be an $N$-variate degree-$k$ homogeneous block-multilinear polynomial with real coefficients.  Assume that under uniformly random $\pm 1$ inputs we have $\|f\|_\infty \leq 1$.  Then there is a randomized query algorithm making $2^{O(k)} (N/\eps^2)^{1-1/k}$ nonadaptive queries to the coordinates of $x \in \{\pm 1\}^N$ that outputs an approximation to $f(x)$ that is accurate to within $\pm \eps$ (with high probability).
\end{theorem}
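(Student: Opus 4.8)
The plan is to estimate $f(x)$ with a nonadaptive importance‑sampling estimator. Since the algorithm knows $f$, it can precompute the influence $I^{(i)}_\ell := \sum_{\vec j : j_i = \ell} a_{\vec j}^2$ of each variable $x^{(i)}_\ell$ (here $x^{(1)},\dots,x^{(k)}$ are the $k$ blocks, of sizes $n_1,\dots,n_k$ with $\sum_i n_i = N$). For a rate parameter $\Lambda$ to be chosen, it forms a query set $Q$ by independently including each $x^{(i)}_\ell$ with probability $\pi^{(i)}_\ell := \min\{1,\ \Lambda\sqrt{I^{(i)}_\ell}\}$, queries the variables in $Q$, and outputs the Horvitz--Thompson estimator
\[
    \widehat{f} \ :=\ \sum_{\vec j} a_{\vec j}\,\prod_{i=1}^{k}\frac{\mathbf 1[\,x^{(i)}_{j_i}\in Q\,]}{\pi^{(i)}_{j_i}}\, x^{(i)}_{j_i},
\]
which is unbiased, $\E[\widehat f] = f(x)$, because the inclusions are independent. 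One then repeats this $O(\log(1/\delta))$ times with fresh randomness and takes the median, turning a constant‑probability guarantee into a $(1-\delta)$‑probability one; and whenever $N$ is below a threshold depending only on $k$ and $\eps$ one simply queries all of $x$, so we may assume below that $N$ is large.

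Controlling the number of queries is the easy part. Writing $h^{(i)}_\ell$ for the coefficient of $x^{(i)}_\ell$ in $f$ --- a multilinear polynomial of degree $\le k-1$ in the remaining variables, with $\|h^{(i)}_\ell\|_2 = \sqrt{I^{(i)}_\ell}$ --- we have $\E[|Q|] \le \Lambda\sum_{i}\sum_\ell \|h^{(i)}_\ell\|_2$. For every fixing of the variables outside block $i$, $f$ is a bounded linear form in block $i$, so $\sum_\ell |h^{(i)}_\ell| \le \|f\|_\infty \le 1$ pointwise, hence $\sum_\ell\|h^{(i)}_\ell\|_1 \le 1$; the hypercontractive inequality $\|g\|_2 \le 3^{\deg g}\|g\|_1$ then upgrades this to $\sum_\ell \|h^{(i)}_\ell\|_2 \le 3^{k-1}$. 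So $\E[|Q|] \le 2^{O(k)}\Lambda$, and by a Chernoff bound the total number of queries over all repetitions is $2^{O(k)}\log(1/\delta)\cdot\Lambda$ with high probability.

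The core is a bound on $\Var[\widehat f]$ valid uniformly over the input $x$. Writing $\sigma^{(i)}_\ell{}^2 := 1/\pi^{(i)}_\ell - 1 \le 1/(\Lambda\sqrt{I^{(i)}_\ell})$ for the per‑coordinate variances, the standard Horvitz--Thompson computation gives
\[
    \Var[\widehat f] \;=\; \sum_{\emptyset\ne B\subseteq[k]}\ \sum_{\vec j_B}\Bigl(\prod_{i\in B}\sigma^{(i)}_{j_i}{}^2\Bigr)\,h_{\vec j_B}(x)^2 \;\le\; \sum_{\emptyset\ne B\subseteq[k]} \Lambda^{-|B|}\sum_{\vec j_B}\Bigl(\prod_{i\in B}(I^{(i)}_{j_i})^{-1/2}\Bigr)\,h_{\vec j_B}(x)^2,
\]
where $h_{\vec j_B}$ is the coefficient in $f$ of the partial monomial $\prod_{i\in B}x^{(i)}_{j_i}$, a polynomial in the blocks outside $B$. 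Three structural facts should drive the estimate of the inner sums: (i) $\|h_{\vec j_B}\|_\infty \le \|f\|_\infty \le 1$ for every $x$, since $h_{\vec j_B}(x) = \E[\prod_{i\in B}(\text{fresh }\pm1\text{ on }x^{(i)}_{j_i})\cdot f]$ is an average of restrictions of $f$; (ii) each influence dominates the relevant $\ell_2$‑mass, $I^{(i)}_{j_i} \ge \sum_{\vec j_{B^c}} a_{(\vec j_B,\vec j_{B^c})}^2 = \|h_{\vec j_B}\|_2^2$; and (iii) hypercontractivity, to bound how far $h_{\vec j_B}(x)^2$ can exceed $\|h_{\vec j_B}\|_2^2$, so that after grouping the $\vec j_B$ by the dyadic scale of $\|h_{\vec j_B}\|_2^2$ (using $\sum_{\vec j_B}\|h_{\vec j_B}\|_2^2 = \|f\|_2^2\le1$ to bound the count at each scale and $\sum_{\vec j_B}h_{\vec j_B}(x)^2 = \E_{x_B}[f^2]\le1$ pointwise) the scales telescope. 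The outcome one aims for is $\Var[\widehat f] \le 2^{O(k)}\bigl(\Lambda^{-1} + \Lambda^{-2}S_2 + \cdots + \Lambda^{-k}S\bigr)$ with $S := \sum_{\vec j}a_{\vec j}^2\prod_{i}(I^{(i)}_{j_i})^{-1/2}$ and each $S_m \le S$; choosing $\Lambda$ to be the larger of $2^{O(k)}/\eps^2$ and $(2^{O(k)}S/\eps^2)^{1/k}$ then makes $\Var[\widehat f] \le \eps^2/9$, so Chebyshev together with the median boost yields accuracy $\pm\eps$ with probability $1-\delta$.

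Finally one bounds $S$ against $N$ to read off the query count: for $k=2$ this is immediate from AM--GM ($S\le N/2$), and for general $k$ it is a Bohnenblust--Hille‑type statement --- that a degree‑$k$ polynomial with $\|f\|_\infty\le1$ has its coefficients controlled in $\ell^{2k/(k+1)}$ with loss at most $2^{O(k)}$ --- giving $S\le 2^{O(k^2)}N^{k-1}$. Then $\Lambda\le 2^{O(k)}N^{1-1/k}\eps^{-2/k}$ works, and the total query count is $2^{O(k)}\log(1/\delta)\cdot N^{1-1/k}\eps^{-2/k}\le 2^{O(k)}(N/\eps^2)^{1-1/k}$, using $\eps^{-2/k}\le\eps^{-2(1-1/k)}$ for $k\ge2$ and absorbing the logarithm. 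The step I expect to be the main obstacle is the uniform‑in‑$x$ variance bound: one must simultaneously control all $2^k$ block‑subset contributions and, inside each, the whole range of coefficient magnitudes --- in particular the possibly very many monomials with tiny coefficients, where $(I^{(i)}_{j_i})^{-1/2}$ is large. This is precisely the place where boundedness of $f$ (and of all its coefficient‑subpolynomials), channelled through hypercontractivity and Bohnenblust--Hille, is indispensable, and it is also what forces the final bound to be sublinear in $N$ with exponent $1-1/k$ rather than trivially linear.
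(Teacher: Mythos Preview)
A framing point first: the paper does not prove this theorem. It is quoted as a result of Aaronson and Ambainis~\cite{AA15} and then used as a black box in the proof of Theorem~\ref{thm:our-aa}. So there is no ``paper's own proof'' to compare against; what follows is an assessment of your sketch on its own merits, with a word on how it relates to the actual~\cite{AA15} argument.

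Your high-level plan (an unbiased importance-sampling estimator whose variance is controlled using $\|f\|_\infty\le 1$ and hypercontractivity) is indeed the shape of the~\cite{AA15} proof, but your specific instantiation has a genuine error: the claimed bound $S \le 2^{O(k^2)} N^{k-1}$ is false, and with it the whole query-count conclusion. Take $k=3$ and $f(y,z,w)=\sum_{j=1}^n 2^{-j}\,y_j z_j w_j$, so $\|f\|_\infty<1$ and $N=3n$. Each influence is $I^{(i)}_j=4^{-j}$, hence
\[
    S \;=\; \sum_{j=1}^n 4^{-j}\bigl(4^{-j}\bigr)^{-3/2} \;=\; \sum_{j=1}^n 2^{\,j}\;\approx\;2^{n},
\]
whereas $N^{k-1}=9n^2$. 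Bohnenblust--Hille controls $\sum_{\vec j}|a_{\vec j}|^{2k/(k+1)}$, which is a different object and does not bound $S$. Worse, on this same input your algorithm genuinely misses the target: since $\Var[\widehat f]=\sum_j 4^{-j}(\pi_j^{-3}-1)$, getting variance $\le\eps^2$ forces $\Lambda\gtrsim 2^{n/3}\eps^{-2/3}$, and then $\E[|Q|]=3\sum_j\min(1,\Lambda 2^{-j})=\Theta(n)=\Theta(N)$ rather than $N^{2/3}$. (Your other unproven assertions, ``$S_m\le S$'' and the dyadic/hypercontractivity step, are separate gaps, but this one is fatal by itself.)

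The underlying reason is that independent per-coordinate sampling makes the effective inclusion probability of a monomial $\prod_i\pi^{(i)}_{j_i}$ rather than something linear in the $\pi$'s; with rapidly decaying coefficients this blows up the variance. The~\cite{AA15} argument avoids this by, in effect, sampling \emph{monomials}: one draws $j_1$ from block~$1$ with probability proportional to $\|h^{(1)}_{j_1}\|_2$, then recurses on the (renormalized, still bounded) degree-$(k-1)$ polynomial $h^{(1)}_{j_1}$. The variance analysis becomes an induction on~$k$, and at each step the only estimate needed is exactly the one you already established, $\sum_\ell\|h^{(i)}_\ell\|_2\le e^{k-1}\sum_\ell\|h^{(i)}_\ell\|_1\le e^{k-1}$. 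If you want to salvage your approach, you would need to correlate the sampling across blocks; independent inclusion with $\pi^{(i)}_\ell\propto\sqrt{I^{(i)}_\ell}$ cannot work.
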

The authors ``strongly conjecture[d]'' that the assumption of block-multilinearity could be removed, and gave a somewhat lengthy proof of this conjecture in the case of $k = 2$, using~\cite{DFKO07} .  We note that the full conjecture follows almost immediately from full decoupling:
\begin{theorem}                                     \label{thm:our-aa}
    Aaronson and Ambainis's Theorem~\ref{thm:aa1} holds without the assumption of block-multilinearity or homogeneity.
\end{theorem}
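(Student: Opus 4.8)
The plan is to deduce this from Aaronson--Ambainis's Theorem~\ref{thm:aa1} by routing through the fully decoupled polynomial, controlling its sup-norm with Corollary~\ref{cor:classic-max}, and applying a standard homogenization (``padding'') trick to restore the homogeneous block-multilinear form that Theorem~\ref{thm:aa1} demands.

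So let $f : \{\pm 1\}^n \to \R$ have degree at most $k$ and $\|f\|_\infty \le 1$, and let $g = \dec{f}$, a multilinear polynomial over the $k$ blocks $x^{(1)}, \dots, x^{(k)}$, each of $n$ variables. Corollary~\ref{cor:classic-max} gives $\|g\|_\infty \le (2e)^k$, and $g(x, \dots, x) = f(x)$. But $g$ is in general neither homogeneous nor block-multilinear (a monomial arising from a set $S$ with $|S| < k$ touches only $|S|$ of the $k$ blocks), so Theorem~\ref{thm:aa1} does not apply to it directly. I would fix this by introducing a dummy variable $x_0^{(j)}$ in each block and defining $G$ over $\{\pm 1\}^{k(n+1)}$ by multiplying each monomial $\prod_{i \in S} x_i^{(b(i))}$ of $g$ (here $b : S \to [k]$ injective) by $\prod_{j \notin b(S)} x_0^{(j)}$. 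Then $G$ is homogeneous of degree exactly $k$ and block-multilinear (exactly one variable per block), and $G(\cdot\,; 1, \dots, 1) = g$. Writing $\epsilon_j = x_0^{(j)}$, a one-line computation using $\prod_{j \notin b(S)} \epsilon_j = (\prod_j \epsilon_j)\prod_{i \in S}\epsilon_{b(i)}$ shows $G\big((x_i^{(j)}); \epsilon\big) = \big(\textstyle\prod_{j} \epsilon_j\big)\, g\big((\epsilon_j x_i^{(j)})\big)$, whence $\|G\|_\infty = \|g\|_\infty \le (2e)^k$.

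Now apply Theorem~\ref{thm:aa1} to $G/(2e)^k$, which is an $N'$-variate ($N' = k(n+1)$) degree-$k$ homogeneous block-multilinear polynomial with sup-norm at most $1$, using error parameter $\eps/(2e)^k$: this yields a nonadaptive randomized algorithm making $2^{O(k)}\big(N' (2e)^{2k}/\eps^2\big)^{1-1/k} = 2^{O(k)}(n/\eps^2)^{1-1/k}$ queries (absorbing all $k$-dependent factors, and $(n+1)^{1-1/k} = O(n^{1-1/k})$, into the $2^{O(k)}$) that estimates $G/(2e)^k$, hence $G$, to within $\pm\eps$ with high probability. Finally, to estimate $f(x)$ for a given $x \in \{\pm 1\}^n$, run this algorithm on the input with $x_i^{(j)} := x_i$ for all $j$ and all $i \ge 1$ and $x_0^{(j)} := 1$ for all $j$; since $G(x, \dots, x; 1, \dots, 1) = g(x, \dots, x) = f(x)$ this returns the desired estimate, a query to a coordinate $x_i^{(j)}$ with $i \ge 1$ being answered by querying $x_i$ and a query to $x_0^{(j)}$ being answered by the constant $1$ for free. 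This gives the claimed $2^{O(k)}(n/\eps^2)^{1-1/k}$-query nonadaptive algorithm for $f$.

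There is no real obstacle once Theorem~\ref{thm:classic-tail} and Corollary~\ref{cor:classic-max} are available; the only points needing care are checking that the homogenization does not inflate the sup-norm (handled by the sign-flip identity above) and that queries to the padded decoupled polynomial simulate transparently as queries to $f$ — both routine.
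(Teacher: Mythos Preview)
Your proof is correct and follows essentially the same route as the paper's: pass to the fully decoupled $\dec{f}$, bound its sup-norm by Corollary~\ref{cor:classic-max}, pad with dummy variables to force homogeneity, and invoke Theorem~\ref{thm:aa1} on the diagonal input. The one place you are more careful than the paper is in verifying that padding does not inflate $\|\cdot\|_\infty$; your sign-flip identity $G((x_i^{(j)});\epsilon) = (\prod_j \epsilon_j)\,g((\epsilon_j x_i^{(j)}))$ is a clean way to see this, whereas the paper simply asserts that the dummies may be ``treated as $+1$'' without comment.
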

\begin{proof}
    Given a non-block-multilinear~$f$ on $N$ variables ranging in $\{\pm 1\}$, consider its full decoupling~$\dec{f}$ on $kN$ variables.  By Corollary~\ref{cor:classic-max} we have $\|\dec{f}\|_\infty \leq (2e)^k$.  Let $g = (2e)^{-k} \dec{f}$, so that $g : \{\pm 1\}^{kN} \to [-1,+1]$ is a degree-$k$ block-multilinear polynomial with $f(x) = (2e)^k g(x, x, \dots, x)$.  Now given query access to $x \in \{\pm 1\}^N$ and an error tolerance~$\eps$, we apply Theorem~\ref{thm:aa1} to $g(x, x, \dots, x)$ with error tolerance $\eps_1 = (2e)^{-k}\eps$; note that we can simulate queries to $(x, x, \dots, x)$ using queries to~$x$.  This gives the desired query algorithm, and it makes $2^{O(k)}(kN/\eps_1^2)^{1-1/k} = 2^{O(k)} (N/\eps^2)^{1-1/k}$ queries as claimed.  There is one more minor point: Theorem~\ref{thm:aa1} requires its function to be homogeneous in addition to block-multilinear.  However this assumption is easily removed by introducing $k$ dummy variables treated as $+1$, and padding the monomials with them.
\end{proof}

\subsection{Our one-block decoupling theorems, and the AA~Conjecture}           \label{sec:our-one-block}
We now state our new versions of Theorems~\ref{thm:classic-convex},~\ref{thm:classic-tail} which apply only to one-block decoupling, but that have \emph{polynomial} dependence of~$C_k$ on~$k$.  Proofs are deferred to Section~\ref{sec:proofs}.

As before, let $f(x) = \sum_{|S| \leq k} a_S x_S$ be an $n$-variate multivariate polynomial of degree at most~$k$ with coefficients $a_S$ in a Banach space; let $\bx = (\bx_1, \dots, \bx_n)$ consist of independent real random variables with all moments finite, and let $\by$,~$\bz$ be independent copies.  We  consider three slightly different hypotheses:
\begin{align*}
	\textbf{H1:} & \quad \text{$\bx_1, \dots, \bx_n \sim \Normal(0,1)$ are standard Gaussians.} \\
	\textbf{H2:} & \quad \text{$\bx_1, \dots, \bx_n$ are uniformly random $\pm 1$ values.} \\
	\textbf{H3:} & \quad \text{$\bx_1, \dots, \bx_n$ are uniformly random $\pm 1$ values and $f$ is homogeneous.}
\end{align*}

\begin{theorem} \label{thm:one-variable-full}
    If $\Phi : \R^{\geq 0} \to \R^{\geq 0}$ is convex and nondecreasing, then
    \[
        \E\Brak{\Phi\Paren{\norm{\odec{f}\paren{\by,\bz}}}} \leq \E\Brak{\Phi\Paren{C_k\norm{f\paren{\bx}}}}.
    \]
    Also, if $t > 0$ (and we assume $f$'s coefficients $a_S$ are real under \emph{\textbf{H2}, \textbf{H3}}), then
    \[
        \Pr\Brak{\norm{\odec{f}\paren{\by, \bz}} > C_kt} \leq D_k \Pr\Brak{\norm{f\paren{\bx}} > t}.
    \]
    Here
    \[
		C_k = \begin{cases}
		O(k) & \text{under \emph{\textbf{H1}}},\\
		O(k^{2}) &  \text{under \emph{\textbf{H2}}},\\
		O(k^{3/2}) &  \text{under \emph{\textbf{H3}}}, \qquad \qquad
		\end{cases}
		D_k = \begin{cases}
		O(k) & \text{under \emph{\textbf{H1}}},\\
		k^{O(k)} &  \text{under \emph{\textbf{H2, H3}}}.
		\end{cases}
	\]
\end{theorem}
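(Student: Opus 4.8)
The plan is to handle the Gaussian hypothesis \textbf{H1} by a rotation trick and the $\pm1$ hypotheses \textbf{H2},~\textbf{H3} by a one–dimensional interpolation argument along the segment from $\bz$ to $\by$. For \textbf{H1}, let $\bx,\bx'$ be independent standard Gaussian vectors and set $g(\theta)=f(\cos\theta\cdot\bx+\sin\theta\cdot\bx')$. Three observations drive this case: (i) $g$ is a (Banach‑valued) trigonometric polynomial in $\theta$ of degree $\le k$, since every degree‑$d$ monomial of $f$ contributes only terms $\cos^a\theta\sin^b\theta$ with $a+b=d\le k$; (ii) for each fixed $\theta$ the vector $\cos\theta\cdot\bx+\sin\theta\cdot\bx'$ is again standard Gaussian, so $g(\theta)$ is distributed like $f(\bx)$; (iii) $g'(0)=\langle\bx',\nabla f(\bx)\rangle=\odec f(\bx',\bx)$, which is distributed like $\odec f(\by,\bz)$. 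Applying M.~Riesz's interpolation formula (the identity underlying Bernstein's inequality for trigonometric polynomials) writes $g'(0)=\sum_{j=1}^{2k}\lambda_j\,g(\theta_j)$ with real nodes $\theta_j$, coefficients $\lambda_j$, and $\sum_j|\lambda_j|=k$; hence $\odec f(\bx',\bx)=\sum_j\lambda_j\,f(\cos\theta_j\cdot\bx+\sin\theta_j\cdot\bx')$ is an \emph{exact} linear combination of $2k$ terms, each equidistributed with $f(\bx)$. Jensen's inequality with weights $|\lambda_j|/k$ then gives $\E[\Phi(\norm{\odec f(\by,\bz)})]\le\E[\Phi(k\norm{f(\bx)})]$, so $C_k\le k$; and $\norm{\odec f}>kt$ forces $\norm{f(\cos\theta_j\cdot\bx+\sin\theta_j\cdot\bx')}>t$ for some $j$, so a union bound over $2k$ terms gives $D_k\le 2k$.

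For the $\pm1$ cases, with no rotation available, I would instead use the exact identity $\odec f(y,z)=\odec f(z,z)+\langle y-z,\nabla f(z)\rangle$, where $\odec f(z,z)=\sum_S|S|a_S z_S=h'(1)$ for $h(s):=f(sz)$ and $\langle y-z,\nabla f(z)\rangle=g'(0)$ for $g(t):=f\bigl((1-t)z+ty\bigr)$, each a real polynomial of degree $\le k$. The key point is that for $t\in[0,1]$ the random point $(1-t)\bz+t\by$ (resp.\ $s\bz$ for $s\in[0,1]$) has each coordinate dominated in the convex order by a uniform $\pm1$ bit, so by multilinearity $f$ evaluated there is dominated in the convex order by $f(\bx)$. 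The Markov brothers' inequality shows the functionals $p\mapsto p'(1)$ and $p\mapsto p'(0)$ on degree‑$\le k$ polynomials with $\sup_{[0,1]}\abs{p}\le1$ have norm $O(k^2)$, representable as $p'(\,\cdot\,)=\sum_l c_l\,p(t_l)$ with $t_l\in[0,1]$ and $\sum_l\abs{c_l}=O(k^2)$. Substituting these expresses $\odec f(\by,\bz)$ as a fixed linear combination, with coefficient–sum $O(k^2)$, of $O(k)$ evaluations of $f$ each convex‑dominated by $f(\bx)$; the Jensen argument from the Gaussian case then yields $C_k=O(k^2)$ under \textbf{H2}. Under \textbf{H3} two savings appear: $\odec f(\bz,\bz)=kf(\bz)$ exactly (cost only $k$), and since $f$ is homogeneous of degree $k$ the polynomial $g$ is controlled on the slightly larger interval $[-\delta,1+\delta]$ with $\delta=\Theta(1/k)$ — indeed $g(-\epsilon)=f((1+\epsilon)\bz-\epsilon\by)=(1+2\epsilon)^k f(\bv)$ with $\bv$'s coordinates convex‑dominated by $\pm1$ bits and $(1+2\epsilon)^k\le e^{2k\delta}=O(1)$ — so $0$ sits at distance $\Theta(1/k)$ from the endpoints and Bernstein's inequality (rather than Markov's) bounds $p\mapsto p'(0)$ by $O(k/\sqrt\delta)=O(k^{3/2})$, giving $C_k=O(k^{3/2})$.

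For the $\pm1$ tail bounds, the evaluations $f(P_r)$ produced above are only convex‑dominated by $f(\bx)$, not equidistributed, and since the indicator is not convex this does not directly give a tail comparison. I would therefore split on $t$: for $t$ below a threshold $T\asymp\|f\|_2$, the anti‑concentration bound of Theorem~\ref{thm:hypercon} gives $\Pr[\norm{f(\bx)}>t]\ge\Pr[\norm{f(\bx)}>T]\ge e^{-O(k)}$, so the inequality holds trivially with $D_k=e^{O(k)}$; for $t\ge T$, apply the already‑proven convex inequality with $\Phi(u)=u^p$ to get $\Pr[\norm{\odec f}>C_k t]\le\E[\norm{f(\bx)}^p]/t^p$, and bound this by $k^{O(k)}\Pr[\norm{f(\bx)}>t/2]$ using the hypercontractive moment and tail estimates for degree‑$\le k$ multilinear polynomials, choosing $p\asymp\max(k,(t/\|f\|_2)^{2/k})$. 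This gives $D_k=k^{O(k)}$ under \textbf{H2},~\textbf{H3}.

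The Gaussian case is essentially immediate once one spots the reduction to a trigonometric polynomial, and I do not expect difficulty there. The real work — and the main obstacle — is the $\pm1$ case: rotation is unavailable, so one only obtains convex‑order domination along the segment $[\bz,\by]$ rather than an exact identity with equidistributed evaluation points. This is adequate for the convex‑function inequality but not for tails, where converting convex‑order information into a tail comparison is inherently lossy and is what forces $D_k=k^{O(k)}$. The second delicate point is the homogeneous exponent $k^{3/2}$, which hinges entirely on the ``dilate by $1+\Theta(1/k)$'' observation that moves the evaluation point into the interior of an interval of control, so that Bernstein's inequality (with its $1/\sqrt{\text{distance to boundary}}$ gain) can be used in place of Markov's; getting the bookkeeping of this dilation to remain compatible with the convex‑order step is the fiddliest part of the plan.
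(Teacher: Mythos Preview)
Your approach to the convex inequality is correct and is a genuinely different—arguably cleaner—route than the paper's. Both strategies rest on the same structural fact: $\odec f(y,z)$ can be written as a linear combination $\sum_i c_i\,f(\alpha_i y+\beta_i z)$ with $\|c\|_1\le C_k$ and with the points $\alpha_i y+\beta_i z$ either equidistributed with $\bx$ (Gaussian) or convex-dominated by it (Boolean). The paper obtains this combination by an explicit Vandermonde construction with hand-picked nodes $\Delta_i=k/i$ (or $k^{3/2}/i$, $k^2/i^2$) and a somewhat lengthy estimation of $\|c\|_1$; you instead invoke the ready-made Riesz interpolation formula under \textbf{H1} (giving the sharp $C_k=k$ rather than the paper's $40k$) and the Markov/Bernstein extremal formulas under \textbf{H2}/\textbf{H3}. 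Your ``dilate by $1+\Theta(1/k)$'' trick for the homogeneous case is a nice alternative to the paper's node choice. The tail bound under \textbf{H1} is then identical in both approaches (pigeonhole over $O(k)$ equidistributed summands).

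The gap is in your tail argument under \textbf{H2}/\textbf{H3}. The step ``bound $\E[\|f\|^p]/t^p$ by $k^{O(k)}\Pr[\|f\|>t/2]$'' cannot hold in general: increasing-convex-order domination simply does not imply stochastic domination, and no choice of $p$ rescues this. Concretely, take $t$ just below $\|f\|_\infty$; then $\Pr[|f|>t]$ can be as small as $2^{-n}$, while $\E[|f|^p]/t^p$ stays bounded below by quantities depending only on $k$. (The fact that $\Pr[|\odec f|>C_k t]$ is \emph{also} tiny in this regime is true, but your moment intermediary has already thrown that information away.) The paper avoids this entirely: having written $\odec f=\sum c_i f(\alpha_i\by+\beta_i\bz)$, it pigeonholes to one $i$, \emph{conditions} on $(\by,\bz)$, observes that $f(\alpha_i\by+\beta_i\bz)$ is the conditional mean of $f(\bx^{(i)})$ under the randomized rounding, and then applies Theorem~\ref{thm:hypercon} for $\lambda$-biased bits to get $\Pr[|f(\bx^{(i)})|>t\mid\by,\bz]\ge k^{-O(k)}$. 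The crucial ingredient you are missing is that this last step requires the nodes to satisfy $\min(|\alpha_i|,|\beta_i|)\ge 1/\mathrm{poly}(k)$—a feature the paper builds into its explicit construction. Your Markov/Bernstein formulas actually \emph{do} have this property (the Chebyshev extrema on $[0,1]$ stay $\Theta(1/k^2)$ from the endpoints), so you can salvage your proof by abandoning the moment argument and using the paper's conditioning-plus-anti-concentration step instead; but some care is needed with the endpoint nodes that may appear in your $h'(1)$ piece.
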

\begin{remark}
    It may seem that for the $\Phi$-inequality in the Gaussian case, Kwapie\'{n}'s result mentioned in Remark~\ref{rem:Kwapien} is better than ours, since he achieves full decoupling with a better constant than we get for one-block decoupling.  But actually they are incomparable; the reason is the different scaling mentioned in Remark~\ref{rem:scaling}.
\end{remark}
\begin{remark}
    As we will explain later in Remark~\ref{rem:gauss-tight}, the bound $C_k = O(k)$ under \textbf{H1} is best possible (assuming that $D_k \leq \exp(O(k^2))$).
\end{remark}
An immediate consequence of the above theorem, as in Corollary~\ref{cor:classic-max}, is the following:
\begin{corollary}                                       \label{cor:one-vbl-max}
    If $f \co \{\pm 1\}^n \to \R$ then $\|\odec{f}\|_\infty \leq O(k^2) \|f\|_\infty$.
\end{corollary}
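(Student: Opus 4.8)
The plan is to derive this directly from the tail-bound half of Theorem~\ref{thm:one-variable-full}, exactly as Corollary~\ref{cor:classic-max} is derived from Theorem~\ref{thm:classic-tail}. Since $f \co \{\pm 1\}^n \to \R$, we are in hypothesis \textbf{H2}, under which the theorem supplies $C_k = O(k^2)$ together with some $D_k = k^{O(k)}$, whose actual value will turn out to be irrelevant. First I would take $t = \|f\|_\infty$ in the tail inequality. Then the right-hand side is $D_k \cdot \Pr[|f(\bx)| > \|f\|_\infty] = 0$, so the left-hand side $\Pr[|\odec{f}(\by,\bz)| > C_k \|f\|_\infty]$ is also $0$.

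Because $\by, \bz$ are independent uniform $\pm 1$ vectors with finite support, the event $\{|\odec{f}(\by,\bz)| > C_k\|f\|_\infty\}$ having probability zero forces it to be empty, i.e.\ $|\odec{f}(y,z)| \le C_k \|f\|_\infty$ for every $y,z \in \{\pm 1\}^n$. This is precisely $\|\odec{f}\|_\infty \le C_k \|f\|_\infty = O(k^2)\,\|f\|_\infty$. (A constant $f$ is a trivial special case, since then $\odec{f} \equiv 0$, the defining sum being empty.)

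Alternatively one could run the argument through the convex-function inequality with $\Phi(u) = u^p$ and let $p \to \infty$, obtaining $\|\odec{f}\|_p \le C_k \|f\|_p \to C_k \|f\|_\infty$; both routes give the same constant. There is essentially no obstacle here. The only point worth flagging is that the weak bound $D_k = k^{O(k)}$ available under \textbf{H2} does no harm, precisely because we evaluate the tail comparison at the threshold $t = \|f\|_\infty$ where the reference probability vanishes — echoing the remark following Corollary~\ref{cor:classic-max} that controlling $C_k$ matters far more than controlling $D_k$.
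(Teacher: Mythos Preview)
Your proposal is correct and matches the paper's approach: the paper simply states this as ``an immediate consequence of the above theorem, as in Corollary~\ref{cor:classic-max},'' and that earlier corollary is proved exactly by taking $t = \|f\|_\infty$ in the tail inequality (or equivalently $\Phi(u)=u^p$, $p\to\infty$), which is precisely what you do. Your observation that the value of $D_k$ is irrelevant here is also on point and echoes the paper's own remark.
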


Let us now give an example of how one-block decoupling can be as useful as full decoupling, and why it is important to obtain $C_k = \poly(k)$.  A very notable open problem in analysis of Boolean functions is the \emph{Aaronson--Ambainis (AA) Conjecture}, originally proposed in 2008~\cite{Aar08,AA14}:
\paragraph{AA Conjecture.} \emph{Let $f : \{\pm 1\}^n \to [-1,+1]$ be computable by a multilinear polynomial of degree at most~$k$, $f(x) = \sum_{|S| \leq k} a_S x_S$.  Then $\MaxInf_i[f] \geq \poly(\Var[f]/k)$.}\\

Here we use the standard notations for influences and variance:
\[
    \MaxInf_i[f] = \max_{i \in [n]} \left\{\Inf_i[f]\right\}, \qquad \Inf_i[f] = \sum_{S \ni i} a_S^2, \qquad \Var[f] = \sum_{S \neq \emptyset} a_S^2, \qquad \|f\|_2^2 = \sum_{S} a_S^2.
\]

The AA~Conjecture is known to imply (and was directly motivated by) the following folklore conjecture concerning the limitations of quantum computation, dated to 1999 or before~\cite{AA14}:
\paragraph{Quantum Conjecture.} \emph{Any quantum query algorithm solving a Boolean decision problem using~$T$ queries can be correctly simulated on a $1-\eps$ fraction of all inputs by a classical query algorithm using $\poly(T/\eps)$ queries.}\\

Because of their importance for quantum computation, Aaronson has twice listed these conjectures as ``semi-grand challenges for quantum computing theory''~\cite{Aar05a,Aar10a}.\\

The best known result in the direction of the AA~Conjecture~\cite{AA14} obtains an influence lower bound of $\poly(\Var[f])/\exp(O(k))$, using the DFKO~Inequality~\cite{DFKO07}.  Here we observe that there is a ``one-line'' deduction of this bound under the assumption that $f$ is one-block decoupled.\footnote{This observation is joint with John Wright.}  To see this, suppose that $f$ is indeed one-block decoupled, so it can be written as
$f(y,z) = \sum_{i=1}^n y_i g_i(z)$, where  $g_i(z) = \sum_{S \ni i} a_S z_{S \setminus i}$ is the $i$th ``derivative'' of~$f$.
Observe that $\|g_i\|_2^2 = \Inf_i[f]$ and hence $\sum_{i=1}^n \|g_i\|_2^2 \geq \Var[f]$.  Also note that for any $z \in \{\pm 1\}^n$ we must have $\sum_{i=1}^n |g_i(z)| \leq 1$, as otherwise we could achieve $|f(y,z)| > 1$ by choosing $y \in \{\pm 1\}^n$ appropriately.  Taking expectations we get $\sum_{i=1}^n \|g_i\|_1 \leq 1$, and hence
\begin{multline}    \label{eqn:one-liner}
e^{k-1} \geq e^{k-1} \sum_{i=1}^n \|g_i\|_1 \geq \sum_{i=1}^n \|g_i\|_2 \geq \frac{\sum_{i=1}^n\|g_i\|_2^2}{\max_{i=1}^n \|g_i\|_2} \geq \frac{\Var[f]}{\max_{i=1}^n{\sqrt{\Inf_i[f]}}} \\ \Rightarrow\qquad \MaxInf[f] \geq e^{2-2k} \Var[f]^2, \qquad
\end{multline}
where the second inequality used the basic fact in analysis of Boolean functions~\cite[Theorem~9.22]{OD14} that $\|g\|_2 \leq e^{k-1}\|g\|_1$ for $g : \{\pm 1\}^n \to \R$ of degree at most $k-1$.

The above gives a good illustration of how even one-block decoupling can already greatly simplify arguments in analysis of Boolean functions.  We feel that~\eqref{eqn:one-liner} throws into sharp relief the challenge of improving  $\exp(-O(k))$ to $1/\poly(k)$ for the AA~Conjecture.  We now use our results to show that the assumption that $f$ is one-block decoupled is completely without loss of generality.
\begin{theorem}                                     \label{thm:AA-WLOG}
    The AA~Conjecture holds if and only if it holds for one-block decoupled functions~$f$.
\end{theorem}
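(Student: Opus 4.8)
\emph{The easy direction} is trivial: a bounded degree-$\le k$ polynomial that happens to be one-block decoupled is still a bounded degree-$\le k$ polynomial, so the general AA~Conjecture immediately implies its one-block decoupled special case. All the content is in the converse, and the strategy is that Corollary~\ref{cor:one-vbl-max} lets us pass from an arbitrary bounded $f$ to a bounded one-block decoupled function while disturbing the relevant quantities ($\MaxInf$ and $\Var$) by only $\poly(k)$ factors --- exactly the slack that the AA~Conjecture's ``$\poly$'' can tolerate.

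In detail: assume the AA~Conjecture for one-block decoupled functions, and let $f : \{\pm 1\}^n \to [-1,+1]$ have degree at most~$k$, say $f = \sum_{|S| \le k} a_S x_S$. Let $C$ be the absolute constant of Corollary~\ref{cor:one-vbl-max}, set $g := (Ck^2)^{-1} f$ (so $g : \{\pm 1\}^n \to [-1,+1]$, still of degree $\le k$), and let $h := \odec{g}$. By Corollary~\ref{cor:one-vbl-max} we get $\|h\|_\infty \le Ck^2 \|g\|_\infty \le 1$, so $h$ is a bounded, degree-$\le k$, one-block decoupled function, and the hypothesis gives $\MaxInf[h] \ge \poly(\Var[h]/k)$.

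It remains to compare the Fourier data of $h$ with that of $f$, which is a short Parseval computation. Writing $b_S = (Ck^2)^{-1} a_S$ for the coefficients of~$g$, the polynomial $h$ equals $\sum_{1 \le |S| \le k} b_S \sum_{i \in S} y_i z_{S \setminus i}$, and the monomials $y_i z_{S \setminus i}$ (over all pairs $(i,S)$ with $i \in S$) are pairwise distinct --- the $y$-variable identifies~$i$, and then~$S$. Hence $\Inf_{y_i}[h] = \sum_{S \ni i} b_S^2 = (Ck^2)^{-2}\Inf_i[f]$; $\Inf_{z_j}[h] = \sum_{S \ni j} (|S|-1) b_S^2 \le (k-1)(Ck^2)^{-2}\Inf_j[f]$; and $\Var[h] = \sum_S |S|\, b_S^2 \ge (Ck^2)^{-2}\Var[f]$. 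Therefore $\MaxInf[h] \le (k-1)(Ck^2)^{-2}\MaxInf[f]$, while $\Var[h] \ge (Ck^2)^{-2}\Var[f]$. Plugging these into $\MaxInf[h] \ge \poly(\Var[h]/k)$ and rearranging yields $\MaxInf[f] \ge \poly(\Var[f]/k)$: the stray factors $Ck^2$, $(k-1)$, and $k$ are all polynomial in~$k$, and since $\Var[f] \le 1$ a lower bound of the shape $c\,(\Var[f]/k^{O(1)})^{O(1)}$ is again of the shape $c'\,(\Var[f]/k)^{O(1)}$.

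I do not anticipate a real obstacle: the theorem is essentially a corollary of the $C_k = O(k^2)$ bound in Theorem~\ref{thm:one-variable-full} (via Corollary~\ref{cor:one-vbl-max}), with the rest being bookkeeping. The only points requiring a little care are that rescaling keeps $h$ both $[-1,+1]$-valued and one-block decoupled, and that every loss incurred is polynomial in~$k$ --- in particular, the $z_j$-influences of $h$ can exceed the $x_j$-influences of $f$ by a factor up to $k-1$, but this is harmless since we only need an \emph{upper} bound on $\MaxInf[h]$ in terms of $\MaxInf[f]$.
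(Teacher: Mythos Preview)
Your proof is correct and essentially identical to the paper's: both apply Corollary~\ref{cor:one-vbl-max} to produce a bounded one-block decoupled function (your $h$ equals the paper's $g$, since rescaling and one-block decoupling commute by linearity), then compare $\Var$ and $\Inf$ via the same Parseval counts to absorb the $\poly(k)$ losses. The only difference is cosmetic --- you rescale before decoupling rather than after --- and your explicit influence formulas are a slightly more detailed version of the paper's one-line claim that $\Inf_{i'}[f] \geq \Inf_i[\odec{f}]/(k-1)$.
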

\begin{proof}
    Suppose $f : \{\pm 1\}^n \to [-1,+1]$ has degree at most~$k$.  By Corollary~\ref{cor:one-vbl-max} we get that $\|\odec{f}\|_\infty \leq C_k = O(k^2)$.  Now $g = C_k^{-1} \odec{f}$ is one-block decoupled and has range $[-1,+1]$. Assuming the AA~Conjecture holds for it, we get some $i \in [2n]$ such that $\Inf_i[g] \geq \poly(\Var[g]/k)$.  Certainly this implies $\Inf_i[\odec{f}] \geq \poly(\Var[\odec{f}]/k)$.  Letting $i' = \max\{i,i-n\} \in [n]$, it's easy to see that $\Inf_{i'}[f] \geq \Inf_{i}[\odec{f}]/(k-1)$, and also that $\Var[\odec{f}] \geq \Var[f]$.  Thus $\Inf_{i'}[f] \geq \poly(\Var[f]/k)$, confirming the AA~Conjecture for~$f$.
\end{proof}
In particular, by combining this with~\eqref{eqn:one-liner} we recover the known $\poly(\Var[f])/\exp(O(k))$ lower bound for the AA~Conjecture as applied to general~$f$.
\begin{remark}
    Aaronson and Ambainis~\cite{AA15} recently observed that for the purposes of deriving the Quantum Conjecture, it suffices to prove the AA~Conjecture for fully decoupled~$f$.  However the AA~Conjecture is of significant interest in analysis of Boolean functions in and of itself, even independent of the Quantum Conjecture.  Thus we feel Theorem~\ref{thm:AA-WLOG} is worth knowing, especially in light of the simple argument~\eqref{eqn:one-liner}.
\end{remark}

\section{Tight versions of the DFKO theorems}
This section is concerned with analysis of Boolean functions $f : \{\pm 1\}^n \to \R$. We will use traditional Fourier notation, writing $f(x) = \sum_{S \subseteq [n]} \wh{f}(S) x_S$.  A key theme in this field is the dichotomy between functions with ``Gaussian-like'' behavior and functions that are essentially ``juntas''.  Recall that $f$ is said to be an $(\eps,C)$-junta if $\|f-g\|_2^2 \leq \eps$ for some $g : \{\pm 1\}^n \to \R$ depending on at most~$C$ input coordinates. Partially exemplifying this theme is a family of theorems stating that any Boolean function~$f$ which is not essentially a junta must have a large ``Fourier tail'' --- something like $\sum_{|S| > k} \wh{f}(S)^2 > \delta$. Examples of such results include Friedgut's Average Sensitivity Theorem~\cite{Fri98}, the FKN Theorem~\cite{FKN02} (sharpened in~\cite{JOW12,OD14}), the Kindler--Safra Theorem~\cite{KS02,Kin02}, and the Bourgain Fourier Tail Theorem~\cite{Bou02}.  The last of these implies that any $f : \{\pm 1\}^n \to \{\pm 1\}$ which is not a $(.01, k^{O(k)})$-junta must satisfy $\sum_{|S|> k} \wh{f}(S)^2 > k^{-1/2+o(1)}$.  This $k^{-1/2+o(1)}$ bound was made more explicit in~\cite{KN06}, and the optimal bound of $\Omega(k^{-1/2})$ was obtained in~\cite{KO12}.  These ``Fourier tail'' theorems have had application in fields such as PCPs and inapproximability~\cite{Kho02,Din07}, sharp threshold theory~\cite{FK96}, extremal combinatorics~\cite{EFF12}, and social choice~\cite{FKN02}.

All of the aforementioned theorems concern Boolean-\emph{valued} functions; i.e., those with range~$\{\pm 1\}$.  By contrast, the DFKO Fourier Tail Theorem~\cite{DFKO07} is a result of this flavor for \emph{bounded} functions; i.e., those with range~$[-1,+1]$.
\paragraph{DFKO Fourier Tail Theorem.} \emph{Suppose $f : \{\pm 1\}^n \to [-1,+1]$ is not an $(\eps, 2^{O(k)}/\eps^2)$-junta.  Then
\[
    \sum_{|S| > k} \wh{f}(S)^2 > \exp(-O(k^2 \log k)/\eps).
\]}

Most applications do not use this Fourier tail theorem directly. Rather, they use a key intermediate result, \cite[Theorem~3]{DFKO07}, which we will refer to as the ``DFKO Inequality''.  This was the case, for example, in a recent work on approximation algorithms for the Max-$k$XOR problem~\cite{BMO+15}.
\paragraph{DFKO Inequality.} \emph{Suppose $f : \{\pm 1\}^n \to \R$ has degree at most~$k$ and $\Var[f] \geq 1$.  Let $t \geq 1$ and suppose that $\MaxInf[f] \leq 2^{-O(k)}/t^2$.  Then
$
    \Pr[|f(\bx)| > t] \geq \exp(-O(t^2 k^2 \log k)).
$}\\

Returning to the theme of ``Gaussian-like behavior'' versus ``junta'' behavior, we may add that the DFKO results straightforwardly imply (by the Central Limit Theorem) analogous, simpler-to-state results concerning functions on Gaussian space and Hermite tails.  We record these generic consequences here; see, e.g.,~\cite[Sections~11.1,~11.2]{OD14} for a general discussion of such implications, and the definitions of Hermite coefficients~$\wh{f}(\alpha)$.
\begin{corollary}   \label{cor:dfko-Gaussian}
    Any $f : \R^n \to [-1,+1]$ satisfies the Hermite tail bound
    \[
        \sum_{|\alpha| > k} \wh{f}(\alpha)^2  > \exp(-O(k^2 \log k)/\Var[f]).
    \]
    Furthermore, suppose $\bz$ is a standard $n$-dimensional Gaussian random vector and $t \geq 1$.  Then any $n$-variate polynomial~$f$ of degree at most~$k$ with $\Var[f(\bz)] \geq 1$ satisfies $\Pr[|f(\bz)| > t] \geq \exp(-O(t^2 k^2 \log k))$.
\end{corollary}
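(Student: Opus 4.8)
The plan is to obtain both parts of the corollary by the standard ``transference'' method: discretize each Gaussian coordinate into a normalized sum of independent $\pm1$ bits, apply the (discrete) DFKO Inequality on the resulting hypercube, and pass to the limit via the Central Limit Theorem; the Hermite tail bound then follows from the Gaussian tail inequality by an elementary truncation argument.

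\emph{The Gaussian tail bound (second part).} Let $f$ be $n$-variate of degree at most~$k$ with $\Var[f(\bz)]\ge1$ and let $t\ge1$; replacing $f,t$ by $\sqrt2\,f,\sqrt2\,t$ we may assume $\Var[f(\bz)]\ge2$. For $M=1,2,\dots$, replace each $\bz_i$ by $\frac1{\sqrt M}(\bx_{i,1}+\cdots+\bx_{i,M})$ with the $\bx_{i,j}$ independent uniform $\pm1$, obtaining a polynomial $g_M$ of degree $\le k$ in the $nM$ bits; since $\bx_{i,j}^2\equiv1$ on the cube, $g_M$ agrees there with a multilinear polynomial $\widetilde{g}_M$ whose degree is still $\le k$ (reducing even powers to~$1$ only drops distinct variables from each monomial). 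By the multivariate CLT, together with the fact that degree-$k$ polynomials of $\pm1$ bits or of Gaussians have fourth moments bounded uniformly in~$M$ (hypercontractivity), $\widetilde{g}_M(\bx)$ converges in distribution to $f(\bz)$ and $\Var[\widetilde{g}_M]\to\Var[f(\bz)]$, so $\Var[\widetilde{g}_M]\ge1$ for all large~$M$. The essential point is that $\MaxInf[\widetilde{g}_M]\to0$: writing $\widetilde{g}_M = x_{i,j}L_{i,j}+R_{i,j}$ with $L_{i,j},R_{i,j}$ free of $x_{i,j}$, we have $\Inf_{x_{i,j}}[\widetilde{g}_M]=\E[L_{i,j}^2]$, and $L_{i,j}$ is $\frac1{\sqrt M}$ times a bounded perturbation of the $i$-th partial derivative of~$f$ at the discretized point, so $\E[L_{i,j}^2]=O_{f,k}(1/M)$, again by uniform moment bounds. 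Thus for $M$ large the hypothesis $\MaxInf\le 2^{-O(k)}/t^2$ of the DFKO Inequality is met, and it gives $\Pr[\,|\widetilde{g}_M(\bx)|>t\,]\ge\exp(-O(t^2k^2\log k))$. Finally $f(\bz)$ is non-constant, so its law is atomless (the zero set of the nonzero polynomial $f-c$ has Lebesgue, hence Gaussian, measure~$0$), and therefore $\Pr[\,|f(\bz)|>t\,]=\lim_{M\to\infty}\Pr[\,|\widetilde{g}_M(\bx)|>t\,]\ge\exp(-O(t^2k^2\log k))$.

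\emph{The Hermite tail bound (first part).} Write $\sigma^2=\Var[f]$ and $\delta=\sum_{|\alpha|>k}\wh f(\alpha)^2$, so that $\sigma^2\le\|f\|_2^2\le1$ and $0\le\delta\le\sigma^2$. We may assume $\delta\le\sigma^2/2$, as otherwise $\delta>\sigma^2/2\ge\exp(-O(k^2\log k)/\sigma^2)$ for a suitable implied constant and we are done. Then $F:=f^{\le k}/\sqrt{\sigma^2-\delta}$ is a degree-$\le k$ polynomial with $\Var[F(\bz)]=1$, and applying the tail bound just proved to~$F$ with $t=2/\sqrt{\sigma^2-\delta}\ge2$ gives $\Pr[\,|f^{\le k}(\bz)|>2\,]\ge\exp(-O(k^2\log k/(\sigma^2-\delta)))\ge\exp(-O(k^2\log k)/\sigma^2)$. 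On the other hand $|f(\bz)|\le1$ pointwise, so $|f^{\le k}(\bz)|>2$ forces $|f^{\le k}(\bz)-f(\bz)|>1$, and Markov's inequality gives $\Pr[\,|f^{\le k}(\bz)-f(\bz)|>1\,]\le\E[(f^{\le k}-f)(\bz)^2]=\delta$. Chaining these bounds gives $\delta\ge\exp(-O(k^2\log k)/\sigma^2)$, as required.

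The main obstacle lies entirely in the transference step, and specifically in the estimate $\MaxInf[\widetilde{g}_M]=O(1/M)$ with a constant independent of~$M$: this is exactly what allows the DFKO smallness hypothesis on the influences to be satisfied by choosing $M$ large enough. The remaining verifications --- degree preservation under multilinearization, convergence of distributions and of second moments via uniform integrability, atomlessness of $f(\bz)$, and the harmless reduction to $\Var[f(\bz)]\ge2$ --- are routine.
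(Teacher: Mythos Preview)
Your proposal is correct and matches the paper's intent: the paper does not give a proof of this corollary at all, merely stating that the Boolean DFKO results ``straightforwardly imply (by the Central Limit Theorem)'' the Gaussian analogues and pointing to \cite[Sections~11.1,~11.2]{OD14}. You have spelled out exactly this transference for the second part, and your details (degree preservation under multilinearization, variance convergence via uniform integrability, atomlessness of $f(\bz)$, and the $\MaxInf[\widetilde g_M]=O_f(1/M)$ estimate) are all sound. For the influence bound, note that a slicker route than your derivative computation is available: by symmetry among the $M$ bits in each block, $M\cdot\Inf_{(i,1)}[\widetilde g_M]=\sum_{j}\Inf_{(i,j)}[\widetilde g_M]\le\sum_{i,j}\Inf_{(i,j)}[\widetilde g_M]\le k\Var[\widetilde g_M]$, which is bounded uniformly in~$M$.

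There is one modest difference worth flagging. For the first part (the Hermite tail bound), the paper's phrasing suggests transferring the Boolean \emph{Fourier Tail Theorem} by CLT, using the observation---made explicit later, just before Corollary~\ref{cor:dfko-Gaussian-tail}---that ``the only `junta' a Gaussian function can be close to is a constant function.'' You instead derive the Hermite tail bound directly from the second part via the truncation $f\mapsto f^{\le k}$ and Markov's inequality. Your route is correct and arguably cleaner: it sidesteps the junta hypothesis entirely and avoids having to argue that the Boolean Fourier tail $\sum_{|S|>k}\wh{\widetilde g_M}(S)^2$ converges to the Hermite tail $\sum_{|\alpha|>k}\wh f(\alpha)^2$. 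The paper itself, when revisiting this in the sharpened setting of Corollary~\ref{cor:dfko-Gaussian-tail}, explicitly acknowledges that both routes work.
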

Even though the Gaussian results in Corollary~\ref{cor:dfko-Gaussian} are formally easier than their Boolean counterparts, we are not aware of any way to prove them --- even in the case $n = 1$ --- except via DFKO.

\paragraph{Tightness of the bounds.} In~\cite[Section~6]{DFKO07} it is shown that the results in Corollary~\ref{cor:dfko-Gaussian} are tight, up to the $\log k$ factor in the exponent; this implies the same statement about the DFKO Fourier Tail Theorem and the DFKO Inequality.  The tight example in both cases is essentially the univariate, degree-$k$ Chebyshev polynomial.\footnote{Formally speaking, \cite[Section~6]{DFKO07} only argues tightness of the Boolean theorems, but their constructions are directly based on the degree-$k$ Chebyshev polynomial applied to a single standard Gaussian.} In the next section we will show how to use our one-block decoupling result to remove the $\log k$ in the exponential from both DFKO theorems.  The results immediately transfer to the Gaussian setting, and we therefore obtain the tight $\exp(-\Theta(k^2))$ bound for all versions of the inequality.

Our method of proof is actually to \emph{first} prove the results in the Gaussian setting, where the one-block decoupling makes the proofs quite easy.  Then we can transfer the results to the Boolean setting by using the Invariance Principle~\cite{MOO10}.  This methodology --- proving the more natural Gaussian tail bound first, then transferring the result to the Boolean setting via Invariance --- is quite reminiscent of how the optimal form of Bourgain's Fourier Tail Theorem was recently obtained~\cite{KO12}.

There is actually an additional, perhaps unexpected, bonus of our proof methodology; we show that the bound in the DFKO Inequality can be improved from $\exp(-O(t^2 k^2))$ to $\exp(-O(t^2 k))$ whenever $f$ is \emph{homogeneous}.

%

\subsection{Proofs of the tight DFKO theorems}
We begin with a tail-probability lower bound for one-block decoupled polynomials of Gaussians.
\begin{lemma} \label{lem:decoupledtail}
    Suppose $f(y,z) = \sum_{i=1}^n y_i g_i(z)$ is a one-block decoupled polynomial on $n+n$ variables, with real coefficients and degree at most~$k$.
    Let $\by, \bz \in \Normal(0,1)^n$ be independent standard $n$-dimensional Gaussians and write
    \begin{equation}    \label{eqn:variance-eq}
        \sigma^2 = \Var[f(\by,\bz)] = \sum_{i=1}^n \|g_i\|_2^2.
    \end{equation}
    Then for  $u > 0$ we have $\Pr[|f(\by,\bz)| > u] \geq \exp(-O(k + u^2/\sigma^2))$.
\end{lemma}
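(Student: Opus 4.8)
The plan is to exploit the defining feature of one-block decoupling: once we condition on $\bz$, the function $f(\by,\bz) = \sum_{i=1}^n \by_i g_i(\bz)$ becomes a \emph{linear} form in the independent standard Gaussians $\by_1,\dots,\by_n$, hence a genuine one-dimensional centered Gaussian. First I would normalize: assume $\sigma > 0$ (if $\sigma = 0$ then $f \equiv 0$ and there is nothing to prove), and rescale $f$ so that $\sigma = 1$; the claimed bound then reads $\Pr[|f(\by,\bz)| > u] \geq \exp(-O(k + u^2))$, and undoing the scaling at the end turns $u^2$ back into $u^2/\sigma^2$. Writing $v(z) := \sum_{i=1}^n g_i(z)^2$, conditioning on $\bz = z$ gives $f(\by,z) \sim \Normal(0,v(z))$, and by~\eqref{eqn:variance-eq} we have $\E_{\bz}[v(\bz)] = 1$.

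Next I would show that the conditional variance $v(\bz)$ is at least its mean with probability $\exp(-O(k))$. Each $g_i$ has degree at most $k-1$, so $v$ is a \emph{nonnegative} polynomial of degree at most $2k$ with $\E[v] = 1$. If $v$ is constant it equals $1$ almost surely; otherwise Theorem~\ref{thm:hypercon}, applied in the Gaussian case to the degree-$2k$ polynomial $v$, gives $\Pr[v(\bz) > 1] = \Pr[v(\bz) > \E[v(\bz)]] \geq \tfrac14 e^{-4k}$. Either way the event $E := \{v(\bz) \geq 1\}$ has probability at least $\tfrac14 e^{-4k} = \exp(-O(k))$.

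Then I would combine this with the standard one-dimensional Gaussian lower tail bound $\Pr[|\Normal(0,1)| > s] \geq \exp(-O(s^2 + 1))$, valid for all $s > 0$. On the event $E$ we have $v(\bz) \geq 1$, so conditionally $\Pr[|f(\by,\bz)| > u \mid \bz] = \Pr[|\Normal(0,v(\bz))| > u] = \Pr[|\Normal(0,1)| > u/\sqrt{v(\bz)}] \geq \Pr[|\Normal(0,1)| > u] \geq \exp(-O(u^2 + 1))$. Averaging over $\bz$,
\[
    \Pr[|f(\by,\bz)| > u] \;\geq\; \Pr[E]\cdot \exp(-O(u^2+1)) \;\geq\; \exp(-O(k))\cdot\exp(-O(u^2)) \;=\; \exp(-O(k + u^2)),
\]
which after the rescaling is the claimed bound.

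All the computations here are routine; the only real content is the first paragraph (one-block structure makes the conditional law of $f$ exactly Gaussian) together with the lower tail bound on $v(\bz)$ in the second paragraph. The latter is the step I would be most careful about, since it is where hypercontractivity enters and it is easy to get confused about the direction: we need $v(\bz)$ to be \emph{large} with not-too-small probability — a larger conditional variance means a heavier tail for $f$ — and this is exactly the anti-concentration estimate $\Pr[v(\bz) > \E v] \geq \tfrac14 e^{-4k}$. (One could instead obtain $\Pr[v(\bz) \geq \tfrac12] \geq \exp(-O(k))$ via Paley--Zygmund together with the hypercontractive moment bound $\E[v^2] \leq \exp(O(k))\,(\E v)^2$ for the degree-$2k$ polynomial $v$; this would serve equally well.)
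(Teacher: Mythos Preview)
Your proposal is correct and follows essentially the same approach as the paper: condition on $\bz$, observe that $f(\by,z)\sim\Normal(0,v(z))$ with $v(z)=\sum_i g_i(z)^2$, apply Theorem~\ref{thm:hypercon} to the degree-$(2k-2)$ polynomial $v$ to get $\Pr[v(\bz)\geq\sigma^2]\geq\exp(-O(k))$, and finish with the one-dimensional Gaussian tail bound. Your normalization step and explicit handling of the case where $v$ is constant are cosmetic additions (the paper omits them), and your Paley--Zygmund remark is a valid alternative but not needed.
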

\begin{proof}
    Let $v(z) = \sum_{i=1}^n g_i(z)^2$, a polynomial of degree at most $2(k-1)$ in $z_1, \dots, z_n$.  By~\eqref{eqn:variance-eq} we have $\E[v(\bz)] = \sigma^2$.  We now use Theorem~\ref{thm:hypercon} to get
    \[
        \Pr[v(\bz) > \sigma^2] \geq \frac14 e^{-2(2k-1)} = \exp(-O(k)).
    \]
    On the other hand, for any outcome $\bz = z$ we have that $f(\by,z) \sim \Normal(0,v(z))$.  Thus
     \[
        v(z) > \sigma^2 \quad \implies\quad \Pr[|f(\by,z)| > \Omega(e^{-u^2/2\sigma^2}).
     \]
     Combining the previous two statements completes the proof, since $\by$ and $\bz$ are independent.
\end{proof}

We can now prove an optimal version of the DFKO Inequality in the Gaussian setting.  It is also significantly better in the homogeneous case.
\begin{theorem}     \label{thm:gaussian-dfko-ineq}
    Let $f : \R^n \to \R$ be a polynomial of degree at most~$k$, and let $\bx \sim \normal(0,1)^n$ be a standard $n$-dimensional Gaussian vector.  Assume $\Var[f(\bx)] \geq 1$.  Then for  $t \geq 1$ it holds that $\Pr[|f(\bx)| > t] \geq \exp(-O(t^2 k^2))$.  Furthermore, if $f$ is multilinear and homogeneous then the lower bound may be improved to $\exp(-O(t^2 k))$.
\end{theorem}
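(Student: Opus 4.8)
The plan is to derive the bound by feeding our one-block decoupling tail inequality into the tail lower bound of Lemma~\ref{lem:decoupledtail}. As a preliminary reduction, it suffices to treat multilinear~$f$: by the standard device of replacing each Gaussian $\bx_i$ with $m^{-1/2}(\bx_{i,1}+\cdots+\bx_{i,m})$ and retaining only the multilinear part, the distributions of multilinear degree-$\le k$ polynomials of Gaussians are dense (with convergence of all moments) among those of all degree-$\le k$ polynomials of Gaussians; since nonconstant polynomials of Gaussians have no atoms, both $\Var[\,\cdot\,]$ and $\Pr[\,|\cdot|>t\,]$ pass to such limits, so a statement proved for all multilinear~$f$ transfers to all~$f$, after the usual normalization to keep $\Var[f]\ge1$. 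We may also discard the constant term of~$f$, since it does not occur in~$\odec f$.

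So let $f=\sum_{1\le|S|\le k}a_S x_S$ with $\Var[f]=\sum_S a_S^2\ge 1$, and write $\odec f(y,z)=\sum_{i=1}^n y_i\,g_i(z)$ with $g_i(z)=\sum_{S\ni i}a_S z_{S\setminus i}$. This is one-block decoupled of degree at most~$k$, with
\[
   \sigma^2 := \Var[\odec f(\by,\bz)] = \sum_{i=1}^n \|g_i\|_2^2 = \sum_S |S|\,a_S^2 \ \ge\ \Var[f] \ \ge\ 1 .
\]
For $t\ge1$, the tail part of Theorem~\ref{thm:one-variable-full} under~\textbf{H1} (with $C_k,D_k=O(k)$) gives $\Pr[|f(\bx)|>t]\ge\frac1{D_k}\Pr[|\odec f(\by,\bz)|>C_kt]$, and Lemma~\ref{lem:decoupledtail} applied with $u=C_kt$ gives $\Pr[|\odec f(\by,\bz)|>C_kt]\ge\exp\!\bigl(-O(k+C_k^2t^2/\sigma^2)\bigr)$. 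Since $\sigma^2\ge1$ and $C_k=O(k)$ we have $C_k^2t^2/\sigma^2=O(k^2t^2)$; using $t\ge1$ to absorb the additive $O(k)$ and the prefactor $1/D_k=\exp(-O(\log k))$ into this, we obtain $\Pr[|f(\bx)|>t]\ge\exp(-O(t^2k^2))$. For the stronger conclusion, if $f$ is moreover homogeneous of degree~$k$ then $|S|=k$ for every term, so $\sigma^2=\sum_S|S|a_S^2=k\,\Var[f]\ge k$; then $C_k^2t^2/\sigma^2=O(k^2t^2/k)=O(kt^2)$, and (as $k\le kt^2$) the additive $O(k)$ and the $1/D_k$ prefactor are again absorbed, yielding $\Pr[|f(\bx)|>t]\ge\exp(-O(t^2k))$.

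Since the chain above is pure bookkeeping, I expect the only real point of care to be the reduction to multilinear~$f$: one must verify that the ``split each Gaussian into many i.i.d.\ copies'' construction converges in the claimed way (in particular, that the limit is the Wick/Hermite-ordered version of the polynomial being split, so that ranging over that polynomial one reaches every multilinear target) and that variances and tail probabilities survive the limit and the accompanying rescalings. This can be avoided altogether if the proof of Theorem~\ref{thm:one-variable-full} under~\textbf{H1} is set up to apply to arbitrary polynomials of Gaussians rather than only multilinear ones, in which case the computation above applies directly.
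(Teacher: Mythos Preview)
Your argument is correct and matches the paper's proof essentially line for line: reduce to multilinear, apply the tail part of Theorem~\ref{thm:one-variable-full} under~\textbf{H1} with $C_k=D_k=O(k)$, then feed into Lemma~\ref{lem:decoupledtail} using $\sigma^2\ge1$ (respectively $\sigma^2\ge k$ in the homogeneous case). The only slip is the remark about discarding the constant term of~$f$: do not literally remove it (that would change $\Pr[|f(\bx)|>t]$, which is what you must lower-bound); instead simply apply Theorem~\ref{thm:one-variable-full} to~$f$ as is and note, as the paper does, that $a_\emptyset$ does not enter~$\odec f$ or the computation of~$\sigma^2$.
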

\begin{proof}
    It is well known that for any $n$-variate polynomial of Gaussians, we can find an $N$-variate multilinear polynomial of Gaussians of no higher degree that is arbitrarily close in L\'{e}vy distance (see, e.g.,~\cite[Lemma~15]{Kan11}, or use the CLT to pass to~$\pm 1$ random variables, then Invariance to pass back to Gaussians).  Note, however, that this transformation does not preserve homogeneity.  In any case, we can henceforth assume $f$ is multilinear, $f(x) = \sum_{|S| \leq k} a_S x_S$.

    For independent $\by, \bz \sim \normal(0,1)^n$, observe that
    \[
        \Var[\odec{f}(\by,\bz)] = \sum_{j=1}^k j \sum_{|S| = j} a_S^2 \geq \sum_{S \neq \emptyset} a_S^2 = \Var[f(\bx)] \geq 1,
    \]
    and if $f$ is homogeneous we get the better bound $\Var[\odec{f}(\by,\bz)] \geq k$.  By our Theorem~\ref{thm:one-variable-full} on one-block decoupling, we have
    \[
        \Pr\Brak{\Abs{f\paren{\bx}} > t} \geq D_k^{-1} \Pr\Brak{\Abs{\odec{f}\paren{\by, \bz}} > C_k t},
    \]
    where $C_k = D_k = O(k)$.  The theorem is now an immediate consequence of Lemma~\ref{lem:decoupledtail}.
\end{proof}
\begin{remark} \label{rem:gauss-tight}
    A consequence of this proof is that --- assuming $D_k \leq \exp(O(k^2))$ --- it is impossible to asymptotically improve on our  $C_k = O(k)$ in Theorem~\ref{thm:one-variable-full} in the Gaussian setting~\textbf{H1}. Otherwise, we would achieve a bound of $\exp(-o(k^2))$ in Theorem~\ref{thm:gaussian-dfko-ineq}, contrary to the example in~\cite[Section~6]{DFKO07}.
\end{remark}
We can now obtain the sharp DFKO Inequality in the Boolean setting by using the Invariance Principle.
\begin{corollary}                                       \label{cor:improved-dfko-ineq}
    Theorem~\ref{thm:gaussian-dfko-ineq} holds when $\bx \sim \{\pm 1\}^n$ is uniform and we additionally assume that $\MaxInf[f] \leq \exp(-C t^2 k^2)$, or just $\exp(-C t^2 k)$ in the homogeneous case.  Here $C$~is a universal constant.
\end{corollary}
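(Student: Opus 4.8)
The plan is to transfer the Gaussian tail bound of Theorem~\ref{thm:gaussian-dfko-ineq} to the hypercube via the Invariance Principle~\cite{MOO10}, after disposing of a ``large variance'' case by elementary means. Since $f \co \{\pm 1\}^n \to \R$ has degree at most~$k$, it is automatically multilinear, $f(x) = \sum_{|S| \le k} a_S x_S$, and ``homogeneous'' is a property of the coefficients that survives replacing the inputs by a standard Gaussian vector $\bz \sim \Normal(0,1)^n$; moreover $\E[f(\bz)] = \E[f(\bx)]$ and $\Var[f(\bz)] = \Var[f(\bx)]$. I would split on the size of $\|f\|_2^2 = \sum_S a_S^2$.

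\textbf{Easy case: $\|f\|_2^2 > 2t^2$.} Here no invariance is needed. By hypercontractivity for degree-$k$ polynomials one has $\E[f(\bx)^4] \le 9^k\,\E[f(\bx)^2]^2$ (a standard consequence; see~\cite[Chapter~9]{OD14}), so Paley--Zygmund applied to the nonnegative random variable $f(\bx)^2$ gives
\[
    \Pr\bigl[\,f(\bx)^2 > \tfrac12\E[f(\bx)^2]\,\bigr] \ \ge\ \tfrac14\cdot\frac{\E[f(\bx)^2]^2}{\E[f(\bx)^4]} \ \ge\ \tfrac14\,9^{-k}.
\]
Since $\tfrac12\E[f(\bx)^2] = \tfrac12\|f\|_2^2 > t^2$, this already yields $\Pr[\,|f(\bx)| > t\,] \ge \tfrac14 9^{-k} = \exp(-O(k))$, which dominates the claimed $\exp(-O(t^2k^2))$ (resp.\ $\exp(-O(t^2k))$) for all $k,t \ge 1$.

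\textbf{Main case: $\|f\|_2^2 \le 2t^2$.} Now $1 \le \Var[f] \le 2t^2$, so the variance is bounded --- which is precisely what makes the Invariance Principle applicable. First I would apply Theorem~\ref{thm:gaussian-dfko-ineq} to $f(\bz)$, whose variance is $\ge 1$, at the level $t+1 \in [1,2t]$: this gives $\Pr[\,|f(\bz)| > t+1\,] \ge \exp(-O((t+1)^2k^2)) \ge \exp(-O(t^2k^2))$, with $k^2$ replaced by~$k$ when $f$ is homogeneous. Next I would fix a smooth mollifier $\psi\co\R\to[0,1]$ with $\psi\equiv 0$ on $[-t,t]$, $\psi\equiv 1$ outside $[-t-1,t+1]$, and $\|\psi^{(j)}\|_\infty = O(1)$ for $j\le 4$ (possible since all transitions occur over intervals of length~$1$). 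Then $\mathbf{1}[\,|u|\ge t+1\,] \le \psi(u) \le \mathbf{1}[\,|u| > t\,]$ pointwise, so
\[
    \Pr[\,|f(\bx)| > t\,] \ \ge\ \E[\psi(f(\bx))] \ \ge\ \E[\psi(f(\bz))] - \bigl|\,\E[\psi(f(\bx))] - \E[\psi(f(\bz))]\,\bigr| \ \ge\ \Pr[\,|f(\bz)| > t+1\,] - \mathrm{err}.
\]
By the Invariance Principle~\cite{MOO10} in its smooth-test-function form (see~\cite[Chapter~11]{OD14}), applied to the degree-$k$ multilinear~$f$ with $\MaxInf[f] = \tau$, one has $\mathrm{err} \le 2^{O(k)}\cdot\poly(\Var[f])\cdot\|\psi^{(4)}\|_\infty\cdot\tau^{\Omega(1)} \le 2^{O(k)}\cdot\poly(t)\cdot\tau^{\Omega(1)}$, using $\Var[f]\le 2t^2$. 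Taking $C$ a sufficiently large universal constant, the hypothesis $\tau \le \exp(-Ct^2k^2)$ makes this error at most $2^{O(k)}\poly(t)\exp(-\Omega(Ct^2k^2)) \le \tfrac12\Pr[\,|f(\bz)| > t+1\,]$, whence $\Pr[\,|f(\bx)| > t\,] \ge \tfrac12\exp(-O(t^2k^2)) = \exp(-O(t^2k^2))$. The homogeneous case is identical, using the Gaussian bound $\exp(-O(t^2k))$ and the weaker hypothesis $\tau \le \exp(-Ct^2k)$.

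\textbf{The main obstacle} is the bookkeeping in the main case: one has to pin down an explicit quantitative form of the Invariance Principle --- the exact $2^{O(k)}$ and $\poly(\Var[f])$ dependence, and which derivative of~$\psi$ the bound uses --- and of the mollifier, and then verify that these $2^{O(k)}$ and $\poly(t)$ overheads are swallowed by $\exp(-\Theta(t^2k^2))$ uniformly over $k,t\ge 1$, so that a single universal constant~$C$ works. This is a routine computation, of the same flavor as the Gaussian-to-Boolean transfer carried out in~\cite{KO12}; since homogeneity is unaffected by passing to~$\bz$, the improved $\exp(-O(t^2k))$ bound comes for free.
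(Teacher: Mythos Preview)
Your proposal is correct and takes essentially the same approach as the paper: transfer the Gaussian tail bound of Theorem~\ref{thm:gaussian-dfko-ineq} to the hypercube via the Invariance Principle of~\cite{MOO10}, with the influence hypothesis chosen so that the invariance error is negligible compared to the target $\exp(-O(t^2k^2))$ (resp.\ $\exp(-O(t^2k))$). The paper's own proof is a one-liner citing the L\'evy-distance form~\cite[Theorem~3.19, Hypothesis~4]{MOO10} directly rather than your smooth-mollifier route, and it does not explicitly separate out the large-$\|f\|_2$ case that you handle via Paley--Zygmund---but the substance is the same.
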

\begin{proof}
    This follows immediately from the L\'{e}vy distance bound in~\cite[Theorem~3.19, Hypothesis~4]{MOO10}.  We only need to ensure that the L\'{e}vy distance is noticeably less than the target lower bound we're aiming for.  (We also remark that the Invariance Principle transformation preserves variance and homogeneity.)
\end{proof}
Next, we obtain the sharp DFKO Fourier Tail Theorem.  Its deduction from the DFKO Inequality in~\cite{DFKO07} is unfortunately not ``black-box'', so we will have to give a proof.
\begin{corollary}                                     \label{cor:improved-dfko-tail}
    Suppose $f : \{\pm 1\}^n \to [-1,+1]$ is not an $(\eps, 2^{O(k^2/\eps)})$-junta.  Then
    \begin{equation}            \label{eqn:tail-bound-internal}
        \sum_{|S| > k} \wh{f}(S)^2 > \exp(-O(k^2)/\eps).
    \end{equation}
\end{corollary}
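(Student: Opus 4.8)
The plan is to argue by contradiction, following the route by which the Fourier Tail Theorem is deduced from the DFKO Inequality in~\cite{DFKO07}, but feeding in our sharpened Corollary~\ref{cor:improved-dfko-ineq} in place of the original inequality; this substitution is exactly what removes the spurious $\log k$ from the exponent. Set $\delta := \sum_{|S|>k}\wh{f}(S)^2$ and suppose towards a contradiction that $\delta \le \exp(-Ck^2/\eps)$ for a large absolute constant $C$ to be chosen. Let $g = f^{\le k}$, so $\|f-g\|_2^2 = \delta$; since $f$ is not an $(\eps,0)$-junta (being not a $(\eps,\cdot)$-junta at all) we have $\Var[f]\ge\eps$, hence $\Var[g]\ge\tfrac12\eps$ — and we may assume $\delta\le\tfrac14\eps$, as otherwise $\delta$ already beats the claimed bound. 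Note also that $g$ is ``almost bounded'': $\Pr[\,|g|>2\,]\le\Pr[\,|f-g|>1\,]\le\delta$, and more generally restrictions of $g$ inherit near-boundedness from $f$.

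The goal is to exhibit a junta on $2^{O(k^2/\eps)}$ coordinates within squared-$L^2$ distance $\eps$ of $f$, contradicting the hypothesis. Take $J$ to be the set of coordinates with $\Inf_i[g]\ge\tau$, where $\tau := \exp(-\Theta(k^2/\eps))$; since $\sum_i\Inf_i[g]\le k\|g\|_2^2 = O(k)$ we get $|J| = O(k/\tau) = 2^{O(k^2/\eps)}$. Writing $q := g - g^{\subseteq J} = \sum_{S\not\subseteq J}\wh{g}(S)\chi_S$ for the non-junta remainder, it suffices to prove $\Var[q] < \tfrac14\eps$: then $\|f - g^{\subseteq J}\|_2^2 \le 2\delta + 2\Var[q] < \eps$, and $g^{\subseteq J}$ is a $|J|$-junta, giving the contradiction.

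So suppose $\Var[q]\ge\tfrac14\eps$. Pass to a uniformly random restriction $\brho\in\{\pm1\}^J$ of the coordinates of $J$; the resulting polynomial $q_\brho$ in the surviving coordinates has mean $0$, satisfies $\E_\brho[\Var[q_\brho]] = \Var[q]\ge\tfrac14\eps$, and $\E_\brho[\Inf_i[q_\brho]] = \Inf_i[g]\le\tau$ for every $i\notin J$ — so a typical $q_\brho$ is a variance-$\Omega(\eps)$, degree-$\le k$ polynomial with minuscule influences. Applying Corollary~\ref{cor:improved-dfko-ineq} to $q_\brho/\sqrt{\Var[q_\brho]}$ with $t = \Theta(1/\sqrt\eps)$ yields $\Pr[\,|q_\brho|>\Theta(1)\,]\ge\exp(-O(t^2k^2)) = \exp(-O(k^2/\eps))$. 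On the other hand, for a typical $\brho$ the value $g^{\subseteq J}(\brho)$ is $O(1)$ (Markov, using $\|g^{\subseteq J}\|_2 = O(1)$), and since $g\approx f$ is bounded this forces $\Pr[\,|q_\brho|>\Theta(1)\,]\le O(\delta)$. Comparing the two bounds gives $\delta\ge\exp(-O(k^2/\eps))$, contradicting $\delta\le\exp(-Ck^2/\eps)$ once $C$ is large enough. Hence $\Var[q]<\tfrac14\eps$, completing the proof.

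The delicate point — and the reason the deduction is not black-box — is the claim that $q_\brho$ has small \emph{max}-influence for a non-negligible fraction of restrictions $\brho$: we control $\E_\brho[\Inf_i[q_\brho]]$ only for each fixed $i$ separately, whereas Corollary~\ref{cor:improved-dfko-ineq} needs a bound on $\MaxInf[q_\brho]$ holding simultaneously over all surviving coordinates. No union bound is available over the (possibly many) coordinates, and a single blanket restriction of all of $J$ at once can amplify influences by as much as $2^{|J|}$, which is hopeless since $|J| = 2^{\Theta(k^2/\eps)}$. Resolving this is the technical heart of the argument: one must extract $J$ more carefully — e.g.\ iteratively, or via a random restriction that keeps each coordinate of $J$ alive independently — and control the influences of the partially restricted residual using hypercontractivity, so that when the inequality is finally invoked the relevant max-influence is genuinely $\exp(-\Theta(k^2/\eps))$ and the above comparison goes through. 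This is precisely the portion of the proof of~\cite{DFKO07} that we must reproduce rather than cite; everything else is routine, and the only quantitative change is the substitution of our $\exp(-O(t^2k^2))$ bound for the old $\exp(-O(t^2k^2\log k))$ one, which upgrades the exponent from $O(k^2\log k)/\eps$ to $O(k^2)/\eps$.
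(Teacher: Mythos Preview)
Your overall architecture---define a ``big-influence'' set $J$, show the non-junta residual $q$ must have small variance, and derive a contradiction by applying the sharpened inequality to a restriction of $q$---matches the paper's. You also correctly isolate the real obstruction: after restricting the $J$-coordinates, the Boolean inequality (Corollary~\ref{cor:improved-dfko-ineq}) requires $\MaxInf[q_{\brho}]$ to be tiny, and you only control $\E_{\brho}[\Inf_i[q_{\brho}]]$ coordinate-by-coordinate, with no union bound over the surviving variables.

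Where your proposal diverges from the paper is in how this obstruction is resolved. You defer to ``reproducing the portion of~\cite{DFKO07}'' via iterative extraction or careful random restrictions plus hypercontractivity, but you do not carry this out---and the paper does \emph{not} do it either. Instead the paper sidesteps the max-influence issue entirely by applying a \emph{partial} Invariance Principle: it keeps the $J$-coordinates Boolean but hybridizes the non-$J$ coordinates to Gaussians (legitimate, since those are exactly the coordinates with influence $\le \exp(-Ak^2/\eps)$ in the degree-$k$ truncation $h$). For each fixed restriction $x_J$, the resulting function $\wt{h}_{x_J}$ is a polynomial in independent Gaussians, to which Theorem~\ref{thm:gaussian-dfko-ineq} applies \emph{with no influence hypothesis whatsoever}. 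One then only needs $\Var[\wt{h}_{x_J}] \ge \Omega(\eps)$ with probability $2^{-O(k)}$ over $x_J$, which is a single hypercontractivity application (Theorem~\ref{thm:hypercon}) to the degree-$2k$ polynomial $x_J \mapsto \sigma_{x_J}^2$. The L\'evy-distance bound from Invariance then transfers the tail estimate back to $h$, and the comparison with $|g|\le 2$ finishes.

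So the paper's key idea is: rather than forcing $\MaxInf$ down after restriction in the Boolean world, pass the small-influence coordinates to Gaussian space where the inequality needs no influence condition at all. Your route may well be completable along~\cite{DFKO07}'s lines (and might even yield the better junta size $2^{O(k)}/\eps^2$ they obtain), but as written it leaves the hardest step as an unexecuted citation, whereas the paper's Invariance trick gives a short, self-contained argument at the cost of the larger junta parameter $2^{O(k^2/\eps)}$.
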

\begin{proof}
    We use notation and basic results from~\cite{OD14}. Given $f : \{\pm 1\}^n \to [-1,+1]$, let $J = \{i \in [n] : \Inf^{\leq k}_i[f] > \exp(-Ak^2/\eps)\}$, where $A$ is a large constant to be chosen later.
    Since $\|f\|_2^2 \leq 1$ it follows easily that $|J| \leq 2^{O(k^2/\eps)}$.  Now define $g = f - f^{\subseteq J}$; note that $g$ has range in $[-2,+2]$ since $f^{\subseteq J}$ has range in $[-1,+1]$, being an average of~$f$ over the coordinates outside~$J$.  If $\|g\|^2_2 < \eps/2$ then $f$ is $\eps/2$-close to the $2^{O(k^2/\eps)}$-junta $f^{\subseteq J}$ and we are done.  Otherwise, $\|g\|^2_2 \geq \eps/2$ and we let $h = g^{\leq k}$.  If $\|h - g\|_2^2 > \eps/4$ then we immediately conclude that $\sum_{|S| > k} \wh{f}(S)^2 > \eps/4$, which is more than enough to be done.  Otherwise $\|h - g\|_2^2 \leq \eps/4$, from which we conclude $\|h\|_2^2 \geq \eps/4$.  Now~$h$ has degree at most~$k$ and satisfies $\Inf_i[h] \leq \exp(-Ak^2/\eps)$ for all $i \not \in J$.  Let $\wt{h}$ denote the mixed Boolean/Gaussian function which has the same multilinear form as~$h$, but where we think of the coordinates in~$J$ as being $\pm 1$ random variables and the coordinates not in~$J$ as being standard Gaussians.  We now ``partially'' apply the Invariance Principle~\cite[Theorem 3.19]{MOO10} to~$h$, in the sense that we only hybridize over the coordinates not in~$J$.  We conclude that the L\'{e}vy distance between $h$ and $\wt{h}$ is at most $\exp(-\Omega(A k^2 / \eps))$.  Our goal is now to show that 
    \begin{equation} \label{eqn:dfko-goal}
        \Pr[|\wt{h}| > 3] \geq \exp(-O(k^2/\eps)),
    \end{equation}
    where the constant in the $O(\cdot)$ does not depend on~$A$.  Having shown this, by taking $A$ large enough the L\'{e}vy distance bound lets us deduce~\eqref{eqn:dfko-goal} for~$h$ as well.  But then since $|g| \leq 2$ always, we may immediately deduce $\|g - h\|_2^2 \geq \exp(-O(k^2)/\eps)$ and hence~\eqref{eqn:tail-bound-internal}.

    It remains to verify~\eqref{eqn:dfko-goal}.  For each restriction $x_J$ to the $J$-coordinates, the function $\wt{h}_{x_J}$ is a multilinear polynomial in independent Gaussians with some variance~$\sigma^2_{x_J}$.   From Theorem~\ref{thm:gaussian-dfko-ineq} we can conclude that $\Pr[|\wt{h}_{x_J}| > 3] \geq \exp(-O(k^2/\sigma^2_{x_J}))$.  Thus if we can show $\sigma^2_{\bx_J} \geq \Omega(\eps)$ with probability at least $2^{-O(k)}$ when $\bx_J \in \{\pm 1\}^J$ is uniformly random, we will have established~\eqref{eqn:dfko-goal}.  But this follows similarly as in Lemma~\ref{lem:decoupledtail}.  Note that $\sigma^2_{x_J} = \E[\wt{h}_{x_J}^2]$, since $h$ has no constant term.  Now $\sigma^2_{x_J}$ is a degree-$2k$ polynomial in~$x_J$, and its expectation is simply $\|h\|_2^2 \geq \eps/4$, so Theorem~\ref{thm:hypercon} indeed implies that $\Pr[\sigma^2_{\bx_J} \geq \eps/4] \geq 2^{-O(k)}$ and we are done.
\end{proof}
\begin{remark}
    We comment that the dependence of $\MaxInf[f]$ on~$t$ in Corollary~\ref{cor:improved-dfko-ineq}, and the junta size in Corollary~\ref{cor:improved-dfko-tail}, are not as good as in~\cite{DFKO07} This seems to be a byproduct of the use of Invariance.
\end{remark}
A similar (but easier) proof can be used to derive the following Gaussian version of Corollary~\ref{cor:improved-dfko-tail}; alternatively, one can use a generic CLT argument, noting that the only ``junta'' a Gaussian function can be close to is a constant function:
\begin{corollary}   \label{cor:dfko-Gaussian-tail}
    Any $f : \R^n \to [-1,+1]$ satisfies the Hermite tail bound
    \[
        \sum_{|\alpha| > k} \wh{f}(\alpha)^2  > \exp(-O(k^2)/\Var[f]).
    \]
\end{corollary}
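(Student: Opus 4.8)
The plan is to imitate the proof of Corollary~\ref{cor:improved-dfko-tail}, exploiting the fact that in the Gaussian setting there is no ``junta'' hypothesis to manage, so the whole apparatus of extracting a high-influence coordinate set~$J$ and running a partial Invariance Principle disappears; what remains is a one-line dichotomy fed into Theorem~\ref{thm:gaussian-dfko-ineq}. Write $\sigma^2 = \Var[f]$; the asserted bound is vacuous when $\sigma^2 = 0$, so assume $\sigma^2 > 0$, and note $\sigma^2 \le \E[f^2] \le 1$ since $|f|\le 1$. Since $f$ is bounded it lies in $L^2$ of Gaussian space, so its Hermite expansion is legitimate; let $h = f^{\le k} = \sum_{|\alpha| \le k} \wh{f}(\alpha) H_\alpha$ be the Hermite truncation to degree at most~$k$. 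Then $h$ is an honest polynomial of degree at most~$k$ in the Gaussian coordinates, $T := \sum_{|\alpha| > k} \wh{f}(\alpha)^2 = \|f - h\|_2^2$ is precisely the quantity to be bounded below, and $\Var[f] = \Var[h] + T$, i.e.\ $\Var[h] = \sigma^2 - T$.

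First I would dispose of the easy case $T \ge \sigma^2/2$. Here the claimed inequality already holds: for $u \in (0,1]$ one has the elementary bound $u/2 \ge e^{-1/u}$ (the function $1/u + \ln u$ is decreasing on $(0,1]$ with value~$1$ at $u=1$), so $T \ge \sigma^2/2 \ge e^{-1/\sigma^2} \ge \exp(-O(k^2)/\sigma^2)$, absorbing the constant. Hence we may assume from now on that $T < \sigma^2/2$, which yields $0 < \sigma^2/2 < \Var[h] \le \sigma^2 \le 1$.

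The main step is then to feed $h$ into Theorem~\ref{thm:gaussian-dfko-ineq}. Rescale to $\bar h = h/\sqrt{\Var[h]}$, a polynomial of degree at most~$k$ with variance exactly~$1$, and apply the theorem with $t = 2/\sqrt{\Var[h]}$; this $t$ exceeds~$1$ because $\Var[h] \le 1 < 4$. We obtain $\Pr[\,|h| > 2\,] = \Pr[\,|\bar h| > t\,] \ge \exp(-O(t^2 k^2)) = \exp(-O(k^2/\Var[h])) \ge \exp(-O(k^2/\sigma^2))$, where the final step uses $\Var[h] > \sigma^2/2$. Finally, since $|f| \le 1$ pointwise, on the event $\{|h| > 2\}$ we have $|f - h| \ge |h| - |f| > 1$, so $T = \E[(f-h)^2] \ge \Pr[\,|h| > 2\,] \ge \exp(-O(k^2/\sigma^2))$, which is the desired bound.

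I do not expect a genuine obstacle: the steps requiring (routine) care are only the verification that the rescaling keeps us inside the ``$t \ge 1$'' hypothesis of Theorem~\ref{thm:gaussian-dfko-ineq}, the elementary estimate $\sigma^2/2 \ge \exp(-O(k^2)/\sigma^2)$ used in the easy case, and the remark that a bounded Gaussian function is in $L^2$ so that the splitting $\Var[f] = \Var[h] + T$ is valid. One could instead run the ``generic CLT argument'' alluded to in the text --- replace each Gaussian by a normalized sum of many independent $\pm 1$ bits, invoke the Boolean Corollary~\ref{cor:improved-dfko-tail}, and observe that the resulting (essentially symmetric) Boolean function can only be close to a \emph{constant} junta, so the junta clause is automatically satisfied --- but this requires tracking Lévy-distance errors and symmetrization bookkeeping, so I would present the direct reduction above instead.
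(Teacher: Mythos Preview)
Your argument is correct and is precisely the ``similar (but easier)'' direct proof the paper alludes to: drop the junta-extraction and Invariance steps from the proof of Corollary~\ref{cor:improved-dfko-tail}, split on whether the tail $T$ already carries half of $\Var[f]$, and in the remaining case apply Theorem~\ref{thm:gaussian-dfko-ineq} to the degree-$k$ truncation $h$, converting the event $\{|h|>2\}$ into a lower bound on $\|f-h\|_2^2$ via $|f|\le 1$. The CLT alternative you sketch at the end is the second route the paper mentions.
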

\noindent This strictly improves upon Corollary~\ref{cor:dfko-Gaussian}.

\ignore{	
\subsection{Tightness of the homogeneous inequality}
    In this section we consider the tightness of the $\exp(-O(t^2 k))$ bound in the homogeneous multilinear case of Theorem~\ref{thm:gaussian-dfko-ineq}.  We observe that neither the $\exp(-O(k))$ dependence nor the $\exp(-O(t^2))$ dependence can be improved.  Note that by CLT arguments, the same tightness holds in the Boolean homogeneous setting of Corollary~\ref{cor:improved-dfko-ineq}.
	\begin{proposition}
		\label{prop:tightk}
		For any $k$ and all sufficiently large~$n$, there is an $n$-variate degree-$k$ homogeneous multilinear polynomial $f: \R^n \to \R$ such that for $\bx \sim \Normal(0,1)^n$ a standard $n$-dimensional Gaussian we have $\Var[f(\bx)] \geq 1$ and
		\[
		  \Pr[|f(\bx)| > t] \leq \exp(-\Omega(kt^{2/k}))
		\]
		when $t \geq k^k/\sqrt{k!}$.
	\end{proposition}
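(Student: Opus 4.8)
The plan is to establish this tightness statement by exhibiting an explicit extremal polynomial and bounding its tail by hand. The cleanest choice is the single monomial $f(x) = x_1 x_2 \cdots x_k$, regarded as a polynomial in $x = (x_1, \dots, x_n)$ for any $n \geq k$: it is multilinear, homogeneous of degree exactly~$k$, and under $\bx \sim \Normal(0,1)^n$ the value $f(\bx) = \bx_1 \bx_2 \cdots \bx_k$ is a product of $k$ independent standard Gaussians, so $\Var[f(\bx)] = \prod_{i=1}^{k} \E[\bx_i^2] = 1$, giving the variance hypothesis with equality. (The ``spread-out'' variant $f(x) = \binom{n}{k}^{-1/2}\sum_{|S|=k} x_S$ works equally well: it has variance~$1$ and, for large~$n$, has distribution close to that of the degree-$k$ Hermite polynomial of a standard Gaussian, normalized to unit variance --- the homogeneous multilinear analogue of the Chebyshev-based extremal example of~\cite[Section~6]{DFKO07} --- but the monomial is technically simplest, so I would run the argument with it.)

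The core estimate is a Cram\'{e}r-type large-deviation bound for a product of Gaussians. If $\abs{\bx_1\cdots\bx_k} > t$ then $\sum_{i=1}^{k} \max\{\log\abs{\bx_i},\,0\} \geq \log\abs{\bx_1\cdots\bx_k} > \log t$, so for every $\lambda > 0$ an exponential Markov bound gives
\[ \Pr\Brak{\abs{f(\bx)} > t} \;\leq\; t^{-\lambda}\,\E\Brak{\max\{\abs{\bx_1},1\}^{\lambda}}^{k} \;\leq\; t^{-\lambda}\,(C\lambda)^{k\lambda/2}, \]
using the standard Gaussian moment bound $\E[\abs{\bx_1}^\lambda] \leq (C\lambda)^{\lambda/2}$ (valid for $\lambda \geq 2$). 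The exponent of the right-hand side is $-\lambda\log t + \tfrac{k\lambda}{2}\log(C\lambda)$; choosing $\lambda$ a small constant multiple of $t^{2/k}$ --- concretely $\lambda = t^{2/k}/(eC)$, so that $\log(C\lambda) = \tfrac{2}{k}\log t - 1$ --- makes this exponent equal to $-\tfrac{k\lambda}{2} = -\Omega\paren{k\,t^{2/k}}$, which is exactly the asserted $\Pr[\abs{f(\bx)} > t] \leq \exp(-\Omega(k\,t^{2/k}))$.

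It remains to see that the hypothesis $t \geq k^k/\sqrt{k!}$ is exactly what legitimizes this optimization, which needs the chosen $\lambda = \Theta(t^{2/k})$ to satisfy $\lambda \geq 2$, i.e.\ $t \geq \Omega(1)^{k/2}$. Since $\sqrt{k!} \leq k^{k/2}$ we have $k^k/\sqrt{k!} \geq k^{k/2}$, so the hypothesis yields the much stronger $t^{2/k} \geq k$ --- with enough slack that the constant hidden in $\Omega(k\,t^{2/k})$ may be taken universal over all~$k$. (For the finitely many small~$k$ that the ``$\lambda\geq2$'' analysis does not cover cleanly, the trivial degree-$k$ bound $\Pr[\abs{\bx_1\cdots\bx_k}>t] \leq \exp(-\Omega(t^{2/k}))$ already gives the claim once the $k$-dependence is absorbed into the constant.) I do not expect a real obstacle: with the monomial construction the whole proof collapses to a routine exponential-Markov computation, and the only delicate point is arranging the constants so that a single $\Omega(\cdot)$ serves every~$k$ and every $t \geq k^k/\sqrt{k!}$. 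The one spot needing genuine care is if one insisted on the ``DFKO-faithful'' elementary-symmetric construction instead: closeness in distribution to the normalized Hermite polynomial does not by itself control the far tail uniformly in~$t$, so one would bound the even moments of $\binom nk^{-1/2}\sum_{|S|=k}x_S$ directly (a finite combinatorial sum converging to the corresponding Hermite moment), apply Markov, and then invoke the elementary bound $\abs{h_k(x)} \leq 2\max\{\abs{x},k\}^k$ to reduce to the univariate Gaussian tail, arriving at the same $\exp(-\Omega(k\,t^{2/k}))$ since $(k!)^{1/k} = \Theta(k)$.
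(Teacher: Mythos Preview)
Your proof is correct and takes a genuinely different, more elementary route than the paper's. The paper works with the normalized degree-$k$ Hermite polynomial $h_k = H_k/\sqrt{k!}$ of a single Gaussian: it bounds $H_k(x) \leq x^k$ for $x \geq k$, so $\Pr[h_k(\bX_0) > t] \leq \Pr[\bX_0 > (t\sqrt{k!})^{1/k}] \leq \exp(-\Omega((k!)^{1/k} t^{2/k})) = \exp(-\Omega(k t^{2/k}))$, and then passes to a homogeneous multilinear polynomial by approximating $h_k(\bX_0)$ with the elementary symmetric polynomial $N^{-k/2}\sum_{j_1<\cdots<j_k}\bX_{j_1}\cdots\bX_{j_k}$, invoking $L^p$ convergence as $N\to\infty$. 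So in the paper the factor of $k$ in the exponent arises from the $\sqrt{k!}$ normalization of the Hermite polynomial, and there is a genuine limiting argument needed to produce a multilinear example. In your version the factor of $k$ comes instead from having $k$ independent Gaussian factors, and the construction is already multilinear for every $n\geq k$; the whole proof reduces to a Chernoff/Markov optimization. What the paper's route buys is the explicit connection to the Hermite (and hence Chebyshev) extremal family underlying the DFKO lower bounds, and an example with small individual influences; what your route buys is a self-contained two-line argument with no approximation step and no ``sufficiently large~$n$'' caveat. Both are perfectly valid proofs of the proposition as stated.
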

	\begin{proof}
		Consider about degree-$k$ Hermite polynomial $H_k(x) = (-1)^ke^{x^2/2}\frac{d^k}{dx^k}e^{-x^2/2}$.
		The explicit expression shows that, if $x \geq k$,
		\[
		H_k(x) = \sum_{m = 0}^{\lfloor k/2 \rfloor}(-1)^m \frac{k!}{m!(k-2m)!} \frac{x^{k-2m}}{2^m} \leq x^k.
		\]
		Therefore
		\[
		\Pr_{\bX_0} [H_k(\bX_0) \geq t] \leq \Pr_{\bX_0} [\bX_0^k \geq t] \leq \exp(-\Omega(t^{2/k})).
		\]
		when $t \geq k^k$.
		It is well-known (see\cite[Eq.~(11.9)]{OD14}) that with Gaussian input $\bX_0 \sim N(0,1)$,
		\[
		\E_{\bX_0}[H_k(\bX_0)] = 0, \qquad \Var_{\bX_0}[H_k(\bX_0)] = k!.
		\]
		Define the normalized Hermite polynomial as $h_k = \frac1{\sqrt{k!}} H_k$. We get the tail bound
		\[
		\Pr_{\bX_0} [h_k(\bX_0) \geq t] \leq \exp(-\Omega((t\sqrt{k!})^{2/k})) = \exp(-\Omega(kt^{2/k}))
		\]
		when $t \geq k^k/\sqrt{k!}$.
		If we define $\bX_0 = \frac{1}{\sqrt{N}}(\bX_1 + \dots + \bX_N)$ where $\bX_1, \dots, \bX_N \sim N(0, 1)$ are independent standard Gaussian random variables, and
		\[
		f_{k, N}(\bX_1, \dots, \bX_N) = N^{-k/2} \sum_{1 \leq j_1 < \dots, <j_k \leq N} \bX_{j_1}\cdots \bX_{j_k}.
		\]
		Hermite polynomial $h_k$ can be approximated by homogeneous multilinear polynomial $f_{k, N}$ with same degree $k$:
		\[
		h_k(\bX_0) = \lim_{N \to \infty} f_{k, N}(\bX_1, \dots, \bX_N)
		\]
		where the convergence holds in any $L^p$ (see \cite[Eq.~(38)]{adamczak2014}). Therefore
		\[
		\Pr[f_{k, N}(\bX_1, \dots, \bX_N) \geq t] \leq \exp(-\Omega(kt^{2/k}))
		\]
		for some large enough $N$.
		Notice that $\Var[f_{k, N}]$ is not exactly 1, but for large enough $N$, $.99 \leq \Var[f_{k, N}] \leq 1$, so the tail bound still holds after the function is normalized.
	\end{proof}
	
	\begin{proposition}
		\label{prop:tightk2}
		For any constant $k$, for any $t > 1$, there exist some degree-$k$ homogeneous Gaussian function $f: \R^n \to \R$ with $\Var[f] = 1$ such that
		\[
		\Pr_{\bX \sim N(0,1)^n}[|f(\bX)| \geq t] \leq \exp(-\Omega(t^2))
		\]
	\end{proposition}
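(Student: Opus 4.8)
The plan is to exhibit an explicit ``diagonal'' degree-$k$ polynomial that, after conditioning on most of its variables, degenerates to a single Gaussian whose variance concentrates near~$1$. Fix a large integer~$m$ (to be chosen as a function of~$t$ and~$k$ at the end — the quantifier order of the statement allows $f$ to depend on~$t$), take $km$ independent standard Gaussians organized into $k$ blocks $\bx^{(1)}, \dots, \bx^{(k)}$ of length~$m$, and set
\[
    f\bigl(\bx^{(1)}, \dots, \bx^{(k)}\bigr) = \frac{1}{\sqrt m} \sum_{i=1}^m \bx^{(1)}_i \bx^{(2)}_i \cdots \bx^{(k)}_i .
\]
This is multilinear and homogeneous of degree~$k$, and since distinct blocks are independent, $\Var[f] = \frac1m \sum_{i=1}^m \prod_{j=1}^k \E[(\bx^{(j)}_i)^2] = 1$, as required.

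The key step is to condition on the last $k-1$ blocks. Writing $\bW_i = \prod_{j=2}^k \bx^{(j)}_i$ and $\bW = (\bW_1, \dots, \bW_m)$, the $\bW_i$ are i.i.d.\ with $\E[\bW_i] = 0$ and $\E[\bW_i^2] = 1$, and conditionally on~$\bW$ the random variable $f = \frac1{\sqrt m}\sum_i \bx^{(1)}_i \bW_i$ is a centred Gaussian of variance $\|\bW\|_2^2/m$. Hence, splitting on whether $\|\bW\|_2^2/m$ exceeds~$2$ and using $\Pr[|\Normal(0,\sigma^2)| > t] \le e^{-t^2/2\sigma^2} \le e^{-t^2/4}$ when $\sigma^2 \le 2$,
\[
    \Pr\bigl[|f| > t\bigr] \;\leq\; \Pr\!\Bigl[\tfrac1m\|\bW\|_2^2 > 2\Bigr] \;+\; e^{-t^2/4}.
\]
It then remains to drive the first term below $e^{-t^2/4}$ by taking $m$ large relative to~$t$.

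For that I would invoke concentration of $\frac1m\sum_i \bW_i^2$ about its mean~$1$: the $p$-th moment of $\bW_i^2$ is $((2p-1)!!)^{k-1} \le (2p)^{(k-1)p}$, so $\bW_i^2$ is sub-Weibull with shape parameter $1/(k-1)$, and a standard Bernstein-type bound for sums of independent sub-Weibull variables gives $\Pr[\frac1m\sum_i \bW_i^2 > 2] \le \exp(-c_k\, m^{1/(k-1)})$ for some $c_k > 0$ and all large~$m$ (for $k = 2$ this is the familiar $e^{-\Omega(m)}$ sub-exponential bound). Choosing $m = \Theta_k(t^{2(k-1)})$ so that $c_k m^{1/(k-1)} \ge t^2/4$ makes this term at most $e^{-t^2/4}$, whence $\Pr[|f| > t] \le 2e^{-t^2/4} = \exp(-\Omega(t^2))$ — with, pleasantly, a constant in the exponent independent of~$k$. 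The only step needing any care is the sub-Weibull concentration inequality for $\frac1m\sum_i\bW_i^2$, but this is entirely routine (and could equally be obtained by an elementary fourth-moment or truncation argument, since all we need is that the deviation probability decays like a fixed positive power of~$m$); homogeneity, multilinearity, the variance computation, and the Gaussian conditioning are all immediate.
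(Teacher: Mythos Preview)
Your construction is exactly the paper's (up to relabeling of variables): the paper also takes
\[
    f_n(x) = \frac{1}{\sqrt n}\bigl(x_1\cdots x_k + x_{k+1}\cdots x_{2k} + \dots + x_{(n-1)k+1}\cdots x_{nk}\bigr),
\]
a normalized sum of $n$ disjoint degree-$k$ monomials. The difference is entirely in the tail analysis. The paper observes that $f_n$ is a normalized i.i.d.\ sum of mean-zero, variance-one summands with bounded third absolute moment $\rho^k$, and applies Berry--Esseen: the Kolmogorov distance to a standard Gaussian is $O(\rho^k/\sqrt n)$, so for $n$ large (depending on $t,k$) the tail $\Pr[|f_n|\ge t]$ is within $e^{-t^2}$ of the standard Gaussian tail, giving $\exp(-\Omega(t^2))$ directly.

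Your conditioning argument is correct as well, but trades one classical black box (Berry--Esseen) for another (sub-Weibull Bernstein). The Berry--Esseen route is a bit shorter and avoids any moment-growth bookkeeping; on the other hand, your approach makes transparent \emph{why} the tail is Gaussian --- conditionally it literally is --- and, as you note, the concentration step can be downgraded to Chebyshev if one is content to let $m$ be exponential in $t^2$. Either way the statement follows, and the constant in the exponent can be taken independent of~$k$.
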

	\begin{proof}
		Consider about function
		\[
		f_n(x) = \frac{1}{\sqrt{n}}(x_1 x_2 \dots x_k+ x_{k+1}x_{k+2}\dots x_{2k}+\dots + x_{(n-1)k+1}x_{(n-1)k+2} \dots x_{nk})
		\]
		We know that if $\bX_1, \dots, \bX_k \sim N(0,1)$ are independent Gaussians,  $\E[\bX_1\bX_2\dots\bX_k] = 0, \E[(\bX_1\bX_2\dots\bX_k)^2] = 1$ and $\E[|\bX_1\bX_2\dots\bX_k|^3]= \rho^k$ where $\rho = \E_{\bX \sim N(0,1)}[|\bX|^3]$ is some constant not related to $k$.
		By Berry-Esseen Theorem,
		\[
		\left|\Pr_{\bX \sim N(0,1)^{kn}}[|f_n(\bX)| \geq t] - \Pr_{\bX_0 \sim N(0,1)}[|\bX_0| \geq t]
		\right| \leq \frac{C\rho^k}{\sqrt{n}}.
		\]
		If we choose a sufficient large $n \geq \rho^{2k} \exp(t^6) $, function $f_n$ will approximate standard Gaussian distribution. Therefore,
		\[
		\Pr_{\bX \sim N(0,1)^{kn}}[|f_n(\bX)| \geq t] \leq \Pr_{\bX_0 \sim N(0,1)}[|\bX_0| \geq t] + \frac{C\rho^k}{\sqrt{n}} \leq   \exp(-t^2) - C\exp(-t^3) \leq \exp(-\Omega(t^2)) \qedhere
		\]
	\end{proof}
}
		
\section{Proofs of our one-block decoupling theorems}           \label{sec:proofs}
In this section we prove Theorem~\ref{thm:one-variable-full}.  The key idea of the proof is to express $\odec{f}(y, z)$ as a ``small'' linear combination of expressions of the form $f(\alpha_i x + \beta_i y)$, where $\alpha_i^2 + \beta_i^2 = 1$ (in the Gaussian case) or $|\alpha_i| + |\beta_i| = 1$ (in the Boolean case).  The following is the central lemma.

\begin{lemma}		\label{lem:summation}
    In the setting of Theorem~\ref{thm:one-variable-full}, there exists $m = O(k)$ and   $\alpha, \beta, c \in \R^{m}$ such that
    \begin{itemize}
        \item $\odec{f}(y, z) = \sum_{i=1}^{m} c_if(\alpha_iy + \beta_iz)$;
	    \item $\|c\|_1 \leq C_k$;
        \item $\alpha_i^2 + \beta_i^2 = 1$ for all $i \in [m]$ under \emph{\textbf{H1}}, and $|\alpha_i| + |\beta_i| = 1$ for all $i \in [m]$ under \emph{\textbf{H2}, \textbf{H3}};
	   \item $|\alpha_i|, |\beta_i| \geq 1/O(C_k)$ for all $i \in [m]$.
    \end{itemize}
\end{lemma}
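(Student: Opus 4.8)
The plan is to reduce the whole statement to a one‑dimensional \emph{numerical differentiation} problem. The starting point is the identity
\[
    \odec{f}(y,z) \;=\; \frac{\partial}{\partial\alpha}\Big|_{\alpha=0}\, f(\alpha y + z),
\]
which one checks by differentiating $f(\alpha y + z) = \sum_S a_S\prod_{i\in S}(\alpha y_i + z_i)$ term by term and setting $\alpha = 0$; equivalently, $\odec{f}(y,z)$ is the coefficient of $\alpha^1$ in the degree‑$\le k$ polynomial $\alpha\mapsto f(\alpha y + z)$. So the lemma is precisely the assertion that this \emph{one fixed} linear functional can be realized as a short combination $\sum_i c_i f(\alpha_i y + \beta_i z)$ of evaluations at points rescaled onto the unit circle (under \textbf{H1}) or the $\ell_1$‑sphere $|\alpha|+|\beta|=1$ (under \textbf{H2}, \textbf{H3}), with $\|c\|_1$ only polynomially large and none of the $|\alpha_i|,|\beta_i|$ too small.

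Under \textbf{H1} there is a clean route that never leaves the unit circle. Consider $g(\theta) = f(\cos\theta\,z + \sin\theta\,y)$: by the chain rule $g'(0) = \nabla f(z)\cdot y = \odec{f}(y,z)$, and since each factor $\cos\theta\,z_i + \sin\theta\,y_i$ has trigonometric degree $1$, $g$ is a trigonometric polynomial of degree at most $k$. Hence $g'(0)$ is an \emph{exact} linear combination of the samples $g(\theta_j)$ at $m = O(k)$ equally spaced nodes $\theta_j$, with weights $c_j = \tfrac1m K(\theta_j)$ for a kernel $K$ whose Fourier coefficients agree on $\{|l|\le k\}$ with those of the derivative of the Dirichlet kernel. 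Taking $K$ to be $-\tfrac{d}{d\theta}$ of a de la Vall\'{e}e Poussin–type smoothed Dirichlet kernel, rather than the plain Dirichlet kernel, gives $\|K\|_{L^1} = O(k)$ by the $L^1$ Bernstein inequality, and hence $\|c\|_1 = O(k)$ by a Marcinkiewicz–Zygmund estimate — this is what beats the $O(k\log k)$ that $\|D_k'\|_{L^1}$ would force. The points $(\alpha_j,\beta_j) = (\sin\theta_j,\cos\theta_j)$ lie on the unit circle automatically, and a small generic shift of all the $\theta_j$ (which leaves the quadrature exact) moves every $\theta_j$ at least $\Omega(1/k)$ away from each multiple of $\pi/2$, so $|\alpha_j|,|\beta_j| \ge \Omega(1/k) = 1/O(C_k)$.

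Under \textbf{H2}/\textbf{H3} the $\ell_1$‑sphere has corners and no smooth analogue of $g(\theta)$, so I would treat one homogeneous part at a time: for $f^{=d}$ one has $f^{=d}(\alpha y + \beta z) = \sum_{j=0}^d \alpha^j\beta^{d-j}P_j^{(d)}$ with $P_1^{(d)} = \odec{f^{=d}}(y,z)$, and the goal is to choose $(\alpha_i,\beta_i)$ with $|\alpha_i|+|\beta_i|=1$ and coefficients $c_i$ so that $\sum_i c_i\alpha_i^j\beta_i^{d-j} = \delta_{j,1}$ for \emph{all} $0\le j\le d\le k$ simultaneously. The device that lets one configuration serve all degrees is a sign‑pairing: with each point include $(-\alpha_i,\beta_i)$ carrying the opposite coefficient, which annihilates every even‑$j$ contribution — in particular the constant terms $P_0^{(d)} = f^{=d}(z)$ and, more generally, all even‑degree‑in‑$y$ parts. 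After factoring out the common power of $\alpha$, what remains is a Vandermonde/coefficient‑extraction problem for polynomials sampled at points $a_i\in(0,1)$ with $(\alpha_i,\beta_i) = (\pm a_i,\,1-a_i)$, which I would solve with Chebyshev‑type nodes held a controlled distance from the corners $0$ and $1$. The size of $\|c\|_1$ is then governed by the Markov–Bernstein extrapolation estimate — a degree‑$n$ polynomial controlled on an interval of width $R$ grows by a factor $\exp(O(n\sqrt{\epsilon/R}))$ when extended to distance $\epsilon$ outside it — so demanding that this factor stay $O(1)$ fixes how small $\min_i|\alpha_i|$, and hence $\|c\|_1$, must be. In the homogeneous case \textbf{H3} only even powers of $\alpha$ survive the pairing, which effectively halves $n$ and removes one of the competing constraints, improving $O(k^2)$ to $O(k^{3/2})$.

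The genuinely delicate step is this last one: the bookkeeping under \textbf{H2}/\textbf{H3} that simultaneously keeps $\|c\|_1$ at $O(k^2)$, resp.\ $O(k^{3/2})$, and keeps every $|\alpha_i|,|\beta_i|$ above $1/O(C_k)$. These requirements pull against each other — pushing the sample points toward the corner of the simplex improves the conditioning of the coefficient extraction but violates the coordinate lower bound, and backing off does the reverse — so the exponents $2$ and $3/2$ emerge only after one carefully optimizes the trade‑off. The \textbf{H1} case, by contrast, I would expect to be essentially routine once the trigonometric‑polynomial viewpoint and the smoothed kernel are in place.
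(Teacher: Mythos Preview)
Your \textbf{H1} route is correct and genuinely different from the paper's. The paper never leaves the algebraic setting: it writes down the $(k{+}1)\times(k{+}1)$ Vandermonde system $\sum_i c_i\alpha_i^{k-t}\beta_i^t=[t{=}k{-}1]$, chooses the slopes $\Delta_i=\beta_i/\alpha_i=k/i$ to run over a symmetric hyperharmonic set $\{\pm 1,\dots,\pm\tfrac{k-1}{2},\pm\tfrac12\}$, inverts explicitly, and bounds each $|c_i|$ by hand to reach $\|c\|_1\le 20k$. Your approach---observe that $g(\theta)=f(\sin\theta\,y+\cos\theta\,z)$ is a trigonometric polynomial of degree at most $k$ with $g'(0)=\odec{f}(y,z)$, then sample against a smoothed derivative kernel---is more conceptual and avoids all the case analysis. (You can even bypass Marcinkiewicz--Zygmund: since $\int_0^{2\pi}\tfrac1m\sum_j|K(\phi+2\pi j/m)|\,d\phi=\|K\|_1$, some shift $\phi$ already gives $\|c\|_1\le\|K\|_1=O(k)$, and a further exclusion of measure $o(1)$ keeps every $\theta_j$ at distance $\Omega(1/k)$ from the axes.) What the paper's method buys is that the same Vandermonde template, with $\Delta_i$ rescaled, is reused verbatim for \textbf{H2} and \textbf{H3}.

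Your \textbf{H2}/\textbf{H3} sketch is in the right spirit, and the sign-pairing reduction to the moment conditions $M_1=\tfrac12$, $M_2=\cdots=M_k=0$ is a clean alternative to the paper's $\alpha_i+\beta_i=1$ degree-recursion. But your stated mechanism for the \textbf{H3} gain does not hold up. You write that in the homogeneous case ``only even powers of $\alpha$ survive the pairing, which effectively halves $n$''; this contradicts your own earlier claim that the pairing kills the \emph{even}-$j$ terms, and in any case halving the degree in the Markov--Bernstein growth factor $\exp(O(n\sqrt{\epsilon/R}))$ does not move the exponent---you would still be forced to $\epsilon$ of order $1/k^2$ and hence $C_k=O(k^2)$. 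The paper's $O(k^{3/2})$ under \textbf{H3} comes from a different source: because only the single degree $k$ is in play, one needs only $|\alpha_i|+|\beta_i|=1$ (not $\alpha_i+\beta_i=1$), so a \emph{two-sided} node set $\alpha_i=i/(k^{3/2}+|i|)$ with $i$ ranging symmetrically over $\pm 1,\dots,\pm\tfrac{k-1}{2},\pm\tfrac12$ is available, and the explicit Vandermonde-inverse bounds then yield $\|c\|_1=O(k^{3/2})$. Under \textbf{H2} the across-degree recursion forces $\alpha_i>0$, the nodes become $\Delta_i=k^2/i^2$, and $O(k^2)$ results. So the exponent drop from $2$ to $3/2$ is about one- versus two-sided node placement, not about degree halving; your Chebyshev-node heuristic as written does not see this distinction.
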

	
With Lemma~\ref{lem:summation} in hand, the proof of Theorem~\ref{thm:one-variable-full} is quite straightforward in the Gaussian case, and not much more difficult in the Boolean case. We show these deductions first.

\begin{proof}[Proof of Theorem~\ref{thm:one-variable-full} under Hypothesis \emph{\textbf{H1}}]
	By Lemma~\ref{lem:summation}, for any convex nondecreasing function $\Phi : \R^{\geq 0} \to \R^{\geq 0}$ we have
	\begin{align*}
		\E\Brak{\Phi\Paren{\norm{\odec{f}\paren{\by,\bz}}}}  &= \E\Brak{\Phi\Paren{\Norm{\sum_{i=1}^m c_i f\paren{\alpha_i \by +\beta_i \bz}}}} \\
		&\leq \E\Brak{\Phi\Paren{\sum_{i=1}^m |c_i| \Norm{f\paren{\alpha_i \by +\beta_i \bz}}}} \\
		&\leq \sum_{i = 1}^{m}\frac{|c_i|_{\phantom{1}}}{\|c\|_1} \E\left[\Phi\left(\|c\|_1 \|f(\alpha_i \by + \beta_i \bz)\|\right)\right]\\
		&= \sum_{i=1}^{m}\frac{|c_i|_{\phantom{1}}}{\|c\|_1} \E[\Phi(\|c\|_1 \|f(\bx)\|)]\\
		&\leq \E[\Phi(C_k\|f(\bx)\|].
	\end{align*}
    Here the inequalities follow from the convexity and monotonicity of $\Phi$, and the second equality holds because $\alpha_i \by + \beta \bz \sim \Normal(0,1)^n$ due to $\alpha_i^2 + \beta_i^2 = 1$.
	
    As for the tail-bound comparison, by Lemma~\ref{lem:summation}, whenever $y, z$ are such that $\|\odec{f}(y, z)\| > C_k t$, the triangle inequality implies that there must exist at least one $i \in [m]$ with  ${\|f(\alpha_i y + \beta_i z)\| > t}$. It follows that there must exist at least one $i \in [m]$ such that
    \[
        \Pr[\|f(\alpha_i \by + \beta_i \bz)\| > t] \geq \frac1m \Pr[\|\odec{f}(\by, \bz)\| > C_k t].
    \]
    This completes the proof, since $\alpha_i \by + \beta_i \bz \sim \Normal(0,1)^n$ and $m = O(k)$.
\end{proof}
	
\begin{proof}[Proof of Theorem~\ref{thm:one-variable-full} under Hypotheses \emph{\textbf{H2}, \textbf{H3}}]
	We define $\pm 1$ random variables as follows:
	\[
    	\bx^{(i)}_j = \begin{cases}
    	\operatorname{sgn}(\alpha_i) \by_j & \text{with probability } |\alpha_i|,\\
    	\operatorname{sgn}(\beta_i) \bz_j & \text{with probability } |\beta_i|,
    	\end{cases}
   	\]
    for all $i \in [m]$ and $j \in [n]$ independently. Notice that each $\bx^{(i)}$ is distributed uniformly on $\{\pm 1\}^n$, though they are not independent. To prove the desired inequality concerning~$\Phi$, we can repeat the proof in the Gaussian case, except that we no longer have the identity
    \[
        \E\left[\Phi\left(\|c\|_1 \|f(\alpha_i \by + \beta_i \bz)\|\right)\right]
		= \E[\Phi(\|c\|_1 |f(\bx)|)].
    \]
    In fact we will show that the left-hand side is at most the right-hand side.  Notice that for all fixed $y,z \in \{\pm 1\}^n$, the multilinearity of $f$ implies that
    \begin{equation}                \label{eqn:its-multilinear}
        f(\alpha_i y + \beta_i z) = \E[f(\bx^{(i)}) \mid (\by,\bz) = (y,z)].
    \end{equation}
    Thus
    \begin{align*}
        \E\left[\Phi\left(\|c\|_1 \|f(\alpha_i \by + \beta_i \bz)|\right)\right] &= \E_{\by,\bz}\left[\Phi\left(\|c\|_1 \left\| \E_{\bx^{(i)} \mid \by,\bz} \left[f(\bx^{(i)})\right]\right\|\right)\right]\\ &\leq \E_{\by,\bz}\E_{\bx^{(i)}}\left[\Phi\left(\|c\|_1 \|f(\bx^{(i)})\|\right)\right] = \E\left[\Phi\left(\|c\|_1 \|f(\bx)\|\right)\right],
    \end{align*}
    as claimed, where we used convexity again.

    As for the tail-bound comparison, recall that we are now assuming $f$ has real coefficients.  As in the Gaussian case there is at least one  $i \in [m]$ with
    \[
        \Pr[|f(\alpha_i \by + \beta_i \bz)| > t] \geq \frac{1}{O(k)} \Pr[|\odec{f}(\by, \bz)| > C_k t].
    \]
    Now suppose $y, z$ are such that $|f(\alpha_i y + \beta_i z)| > t$ and consider the conditional distribution on~$\bx^{(i)}$.  If we can show that, conditionally, $\Pr[|f(\bx^{(i)})| > t] \geq k^{-O(k)}$ then we are done.  But from~\eqref{eqn:its-multilinear} we have that $\abs{\E[f(\bx^{(i)})]} > t$; therefore the desired result follows from Theorem~\ref{thm:hypercon} and the fact that $\min(|\alpha_i|, |\beta_i|) \geq 1/O(C_k) = 1/\poly(k)$.
\end{proof}

\subsection{Proof of Lemma~\ref{lem:summation}}

The proof of Lemma~\ref{lem:summation} involves minimizing $\|c\|_1$ by carefully setting the ratios of $\alpha_i$ and $\beta_i$ to be a hyperharmonic progression.
	
\begin{proof}[Proof of Lemma~\ref{lem:summation}]
    The main work involves treating the homogeneous case.
    \paragraph{Homogeneous case.}  Our goal for homogeneous $f$ is to write
	\[
	   \odec{f}(y, z) = \sum_{i=1}^{k+1} c_if(\alpha_iy + \beta_iz).
	\]
	Comparing the expressions term by term, it is equivalent to say that for any $S \subseteq [n]$ with $|S| = k$,
	\[
	   \sum_{j \in S} y_j z_{S/j} = \sum_{i = 1}^{k + 1} c_i \prod_{j \in S} (\alpha_i y_j + \beta_i z_j).
	\]
    We can further simplify this to the conditions
	\begin{equation}   \label{eqn:conditions}
	\sum_{i = 1}^{k + 1} c_i \alpha_i^{k-t} \beta_i^t = \begin{cases}
	   1 & \text{ if } t = k - 1\\
	   0 & \text{ otherwise}
	   \end{cases}
	\end{equation}
	for all integers $ 0 \leq t \leq k$.	Let us write $\Delta_i = \frac{\beta_i}{\alpha_i}$ and introduce the Vandermonde matrix
	\[
	V =
	\begin{bmatrix}
	1 & 1 & \dots & 1\\
	\Delta_1 & \Delta_2 &  \cdots& \Delta_{k+1}\\
	\cdots & \cdots & \cdots & \cdots\\
	\Delta_1^{k-1} & \Delta_2^{k-1} &  \cdots & \Delta_{k+1}^{k-1}\\
	\Delta_1^k & \Delta_2^k &  \cdots& \Delta_{k+1}^k
	\end{bmatrix}.
	\]
    We will also write $A$ for the diagonal matrix $\text{diag}(\alpha_1^k, \alpha_2^k, \dots, \alpha_{k+1}^k)$, and write $e_k$ for the indicator vector of the $k$th coordinate, $e_k = (0, 0, \dots, 0, 1, 0)$. Then the necessary conditions~\eqref{eqn:conditions} are equivalent to the matrix equation $VAc = e_k$.  Assuming all the $\Delta_i$'s are different, $V$ is invertible and there is an explicit formula for its inverse~\cite{MS58}.  This yields the following expression for the $c_1, \dots, c_{k+1}$ in terms of $\alpha$ and $\beta$:
	\begin{equation}   \label{eqn:c-def}
        c_i = (A^{-1}V^{-1}e_{k})_i =  \frac{1}{\alpha_i^k} \cdot \frac{\Delta_i - \sum_{j=1}^{k+1} \Delta_j}{\prod_{j=1, j \neq i}^{k+1} (\Delta_i - \Delta_j) }.
	\end{equation}
	
    \paragraph{The main illustrative case: Hypothesis H1 and $k$ odd.} We will now assume that $k$ is odd; this assumption will be easily removed later.  It will henceforth be convenient to replace our indices $1, \dots, k+1$ with the following slightly peculiar set of indices:
    \[
        I = \bigl\{\pm 1, \pm 2, \dots, \pm \tfrac{k-1}{2}, \pm \tfrac12\bigr\}.
    \]
    Now under Hypothesis~\textbf{H1}, we will choose
	\[
        \alpha_i = \frac{i}{\sqrt{k^2 + i^2}}, \quad \beta_i = \frac{k}{\sqrt{k^2 + i^2}} \quad\implies\quad \Delta_i = \frac{k}{i}
	\]
    for all $i \in I$.  These choices satisfy $\alpha_i^2 + \beta_i^2 = 1$ and $|\alpha_i|, |\beta_i| \geq 1/O(C_k)$, so it remains to prove that for~$c$ defined by~\eqref{eqn:c-def} we have $\|c\|_1 \leq O(k)$.
	
    Let us upper-bound all $|c_i|$. Since it easy to see that $|c_i| = |c_{-i}|$ for all $i \in I$, it will suffice for us to consider the positive $i \in I$.  For $1 \leq i \leq \frac{k-1}2$, we have
	\begin{align*}
		\left|\prod_{j \in I, j \neq i} (\Delta_i - \Delta_j) \right|
		&= (\Delta_{1/2} - \Delta_i) (\Delta_i-\Delta_{-1/2}) \cdot \prod_{j=1, j \neq i}^{(k-1)/2} \left|\Delta_i - \Delta_j \right| \cdot \prod_{j=-(k-1)/2}^{-1} (\Delta_i - \Delta_j) \\
		&= \left(2k - \frac{k}i\right)\left(2k + \frac{k}i\right) \cdot \prod_{j=1, j \neq i}^{(k-1)/2} \left|\frac{k}{i} - \frac{k}{j} \right| \cdot \prod_{j=1}^{(k-1)/2} \left(\frac{k}i + \frac{k}j\right) \\
		&= k^{k}\left(4-\frac1{i^2}\right) \cdot \prod_{j=1, j \neq i}^{(k-1)/2} \frac{|j-i|}{ij} \cdot \prod_{j=1}^{(k-1)/2}\frac{j+i}{ij}\\
		&= \frac{k^{k}}{i^{k-2}} \left(4-\frac1{i^2}\right) \frac{\left(\frac{k-1}2 + i\right)! \left(\frac{k-1}2 - i\right)!}{ \left(\frac{k-1}2 \right)!^2}.
	\end{align*}
	Thus from~\eqref{eqn:c-def},
	\begin{align*}
		|c_i|
		&= \left(\frac{\sqrt{k^2+i^2}}{i}\right)^k \cdot \frac{k}{i} \cdot \frac{i^{k-2}}{k^k} \cdot  \frac1{4-1/i^2} \cdot \frac{ \left(\frac{k-1}2 \right)!^2}{\left(\frac{k-1}2 + i\right)! \left(\frac{k-1}2 - i\right)!}\\
		&= \frac{k}{i^3} \left(1 + \frac{i^2}{k^{2}}\right)^{k/2} \frac1{4-1/i^2} \frac{ \left(\frac{k-1}2 \right)!^2}{\left(\frac{k-1}2 + i\right)! \left(\frac{k-1}2 - i\right)!}.
	\end{align*}
	When $ 1 \leq i \leq \sqrt{k}$, we have
	\begin{align*}
		|c_i| &= \frac{k}{i^3} \left(1 + \frac{i^2}{k^{2}}\right)^{k/2} \frac1{4-1/i^2}  \frac{ \left(\frac{k-1}2 \right)!^2}{\left(\frac{k-1}2 + i\right)! \left(\frac{k-1}2 - i\right)!} \leq \frac{k}{i^3} \left(1 + \frac{1}{k}\right)^{k/2}
		\leq \frac{\sqrt{e}k}{i^3}.
	\end{align*}
	For $\sqrt{k} \leq i \leq \frac{k-1}2$, consider the ratio between $(i+1)^3|c_{i+1}|$ and $i^3 |c_i|$; it satisfies
	\begin{align*}
		\frac{(i+1)^3|c_{i+1}|}{i^3|c_i|} &\leq \frac{(k^{2} + (i + 1)^2)^{k/2}}{(k^{2}+i^2)^{k/2}} \cdot \frac{\frac{k-1}2 - i}{\frac{k-1}2 + i +1} \\
		&=\left(1+\frac{2i+1}{k^2+i^2}\right)^{k/2}\cdot \frac{k-1-2i}{k+1+2i}\\
		&\leq \left(1+\frac{2i+1}{k^2}\right)^{k/2} \cdot \frac{k-1-2i}k\\
		&\leq e^{\frac{2i+1}{2k}}\left(1 - \frac{2i+1}k\right) \leq 1.
	\end{align*}
	The last inequality holds since $e^{x/2}(1-x) \leq 1$ for all $0 \leq x \leq 1$. Thus we have $(i+1)^3|c_{i+1}| \leq i^3|c_i|$, and hence by induction that
	\begin{equation}   \label{eqn:c-bound1}
	|c_i| \leq \frac{\sqrt{e}k}{i^3}   \quad \forall\ 1 \leq i \leq \tfrac{k-1}{2}.
	\end{equation}
	
	We now need to bound $c_{1/2}$.  Similarly to the above, we have
	\begin{align*}
		\left|\prod_{j \in I, j \neq \frac12} (\Delta_{1/2} - \Delta_j) \right|
		&= (\Delta_{\frac12} - \Delta_{-1/2}) \cdot \prod_{j=1}^{(k-1)/2} (\Delta_{1/2} - \Delta_j) \cdot \prod_{j=-(k-1)/2}^{-1} (\Delta_{\frac12} - \Delta_j) \\
		&= 4k  \cdot \prod_{j=1}^{(k-1)/2}  \left(2k - \frac{k}j \right) \cdot\prod_{j=1}^{(k-1)/2} (2k + \frac{k}j) \\
		&= 4k^{k} \cdot \prod_{j=1}^{(k-1)/2} \frac{2j-1}{j} \cdot \prod_{j=1}^{(k-1)/2}\frac{2j+1}{j}\\
		&= 4k^{k}  \frac{(k-2)!!k!!}{ \left(\frac{k-1}2 \right)!^2}
	\end{align*}
	Thus from~\eqref{eqn:c-def} we get
	\begin{align}
		|c_{1/2}| &= \frac{(\sqrt{k^2+(1/2)^2})^k}{(1/2)^k} \cdot 2k \cdot \frac{1}{4k^{k}} \cdot  \frac{ \left(\frac{k-1}2 \right)!^2}{(k-2)!!k!!}\nonumber\\
		& = \left(1 + \frac{1}{4k^2}\right)^{k/2} \left(\frac{(k-1)!!}{(k-2)!!}\right)^2 \leq  4k. \label{eqn:c-bound2}
	\end{align}
    Now combining~\eqref{eqn:c-bound1},~\eqref{eqn:c-bound2}, we obtain
	\[
	   \|c\|_1 = 2 \sum_{i = 1}^{(k-1)/2}|c_i| + 2|c_{1/2}|\leq 2\sqrt{e}\sum_{i =1}^{(k-1)/2} \frac{k}{i^3} + 8k \leq 20k,
	\]
    as needed.
	
    \paragraph{Handling even $k$.} 	If $k$ is even, we define our index set to be
    \[
        I = \bigl\{0, \pm 1, \pm 2, \dots, \pm \tfrac{k-2}{2}, \pm \tfrac12\bigr\}.
    \]
    For $i \in I \setminus \{0\}$ we define $\alpha_i$ and $\beta_i$ as before; we also define $\alpha_0 = 1$, $\beta_0 = 0$, and hence $\Delta_0 = 0$. It is easy to check that $c_0 = 0$ (and hence we haven't actually violated $|\beta_i| \geq 1/O(C_k)$), and the upper bounds for the other $|c_i|$ still hold.  This completes the proof of the homogeneous case under Hypothesis~\textbf{H1}.
	
    \paragraph{Hypotheses H3.} We explain the case of $k$ odd; the same trick as before can be used for even~$k$.  For Hypothesis~\textbf{H3} we use
	\[
	   \alpha_i = \frac{i}{k^{3/2}+|i|}, \quad \beta_i = \frac{k^{3/2}}{k^{3/2} + |i|} \quad\implies\quad \Delta_i = \frac{k^{3/2}}{i},
	\]
    which satisfy $|\alpha_i| + |\beta_i| = 1$ and $|\alpha_i|, |\beta_i| \geq 1/O(k^{3/2})$.  Analysis similar to before shows that $\|c\|_1 \leq O(k^{3/2})$.  This completely finishes the proof under Hypothesis~\textbf{H3}.

    \paragraph{Hypothesis H2, the homogeneous case.} Here we do something slightly different.  For even or odd~$k$ we let the index set be $I = \{1, 2, \dots, k, \frac12\}$ and then define
	\[
	   \alpha_i = \frac{i^2}{k^{2}+i^2}, \quad \beta_i = \frac{k^{2}}{k^{2} + i^2} \quad\implies\quad \Delta_i = \frac{k^{2}}{i^2}.
	\]
    Now we have $|\alpha_i| + |\beta_i| = \alpha_i + \beta_i = 1$ and $|\alpha_i|, |\beta_i| \geq 1/O(k^2)$.  Again, similar analysis shows that $\|c\|_1 \leq O(k^{2})$.
	
    \paragraph{Extending to the non-homogeneous case under H2.} Now we need to be concerned with the terms at degree $k' < k$.  Here a key observation is that, since $\alpha_i + \beta_i = 1$ for all $i$, the following holds for all $k' < k$:
	\[
	   \sum_{i} c_i \alpha_i^{k' - t} \beta_i^{t} = \sum_{i} c_i \alpha_i^{k' - t} \beta_i^t (\alpha_i + \beta_i) = \sum_i c_i \alpha_i^{k'-t+1}\beta_i^t + \sum_i c_i \alpha_i^{k'-t}\beta_i^{t+1}.
	\]
	Thus an induction shows that in fact
	\[
	\sum_{i} c_i \alpha_i^{k'-t} \beta_i^t = \begin{cases}
	k-k' & \text{ if } t = k'\\
	1 & \text{ if } t = k' - 1\\
	0 & \text{ otherwise}
	\end{cases}
	\]
	for all $k' \leq k$. This is almost exactly what we need to treat the non-homogeneous case using all the same choices for $c, \alpha, \beta$, except for the $t = k'$ case.  But we can use a simple trick to fix this:
	\[
	\frac12\sum_{i} c_i \alpha_i^{k'-t} \beta_i^t - \frac12\sum_{i} c_i (-\alpha_i)^{k'-t} \beta_i^t = \frac{1 - (-1)^{k'-t}}2 \sum_{i} c_i \alpha_i^{k'-t} \beta_i^t = \begin{cases}
	1 & \text{ if } t = k' - 1\\
	0 & \text{ otherwise}
	\end{cases}
	\]
	From this we get
	\[
	\odec{f}(y, z) = \sum_{i = 1}^{m}c_i f(\alpha_i y + \beta_i z)
	\]
	even in the non-homogeneous case, with all the desired conditions and $m = 2(k+1)$.
	
	\paragraph{Extending to the non-homogeneous case under H1.} The trick here for handling degree $k' < k$ is similar.  Using the fact that $\alpha_i^2 + \beta_i^2 = 1$ for all $i$, we get that for all $k' < k$,
	\[
	\sum_{i} c_i \alpha_i^{k' - t} \beta_i^{t} = \sum_{i} c_i \alpha_i^{k' - t} \beta_i^t (\alpha_i^2 + \beta_i^2) = \sum_i c_i \alpha_i^{k'-t+2}\beta_i^t + \sum_i c_i \alpha_i^{k'-t}\beta_i^{t+2}.
	\]
	Then by induction, the we conclude that
	\[
	\sum_{i = 1}^{k + 1} c_i \alpha_i^{k'-t} \beta_i^t = \begin{cases}
	1 & \text{ if } t = k' - 1\\
	0 & \text{ otherwise}
	\end{cases}
	\]
    holds for all $0 \leq k' \leq k$ such that $k-k'$ is even. We are therefore almost done: we have established the \textbf{H1} case of Lemma~\ref{lem:summation} for all polynomials with only odd-degree terms or only even-degree terms.  Finally, for a general polynomial $f$ we can  decompose it as $f = f_{\text{odd}} + f_{\text{even}}$, where $f_{\text{odd}}$ (respectively, $f_{\text{even}}$) contains all the terms in $f$ with odd (respectively, even) degree. We know that there exist some vectors $\alpha, \beta, c$ and $\alpha', \beta', c'$ satisfying
	\[
	\odec{f}_{\text{odd}}(y, z) = \sum_{i=1}^{k+1} c_i f_{\text{odd}}(\alpha_iy + \beta_iz), \quad \odec{f}_{\text{even}} (y,z)= \sum_{i=1}^{k+1} c_i' f_{\text{even}}(\alpha_i'y + \beta_i'z),
	\]
	and $\|c\|_1, \|c'\|_1 \leq 20k$.
	Thus
	\begin{align*}
		\odec{f}(y, z) &= \odec{f}_{\text{odd}}(y, z) +\odec{f}_{\text{even}} (y,z)\\
		&= \sum_{i=1}^{k+1} c_i f_{\text{odd}}(\alpha_iy + \beta_iz) + \sum_{i=1}^{k+1} c_i' f_{\text{even}}(\alpha_i'y + \beta_i'z)\\
		&= \sum_{i =1}^{k+1}\frac12 c_i (f(\alpha_iy + \beta_iz) -f(-\alpha_iy-\beta_iz)) + \sum_{i=1}^{k+1} \frac12 c_i' (f(\alpha_i'y + \beta_i'z)+f(-\alpha_i'y-\beta_i'z))\\
		&= \sum_{i = 1}^{4(k+1)} c_i'' f(\alpha_i''y + \beta_i''z),
	\end{align*}
	where $c'' = (c/2, -c/2, c'/2, c'/2), \alpha'' = (\alpha, -\alpha, \alpha', -\alpha'), \beta''=(\beta, -\beta, \beta', -\beta')$ and $\|c''\|_1 \leq 40k$.
\end{proof}

\section*{Acknowledgments}
The authors would like to thank Oded Regev for helpful discussions, and John Wright for permission to include~\eqref{eqn:one-liner}.

\ignore{

	\section{Introduction}
	\subsection{Decoupling}
	
	In statitics, one general approach called \emph{decoupling} is widely used to for handling complicate problems with dependent variables. The decoupling principle establishes some inequalities between the problems with dependent variables and those introducing enough independency. After that those problems can be analyzed by the theory of independent random variables which is much easier.
	
	For any multilinear polynomial $f : \R^n \to \R$ of degree $k$,
	\[
	f(x) = a_0 + \sum_{t= 1}^k \sum_{i \in I_n^{t}} a_{i_1,\dots,i_{t}} x_{i_1}\dots x_{i_{t}}
	\]
	the fully decoupled version of $f$ is $\tilde{f} : \R^{n \times k} \to \R$
	\begin{align*}
		\tilde{f}(x^1, \dots, x^k) &= a_0 + \sum_{t = 1}^k \frac{(k - t)!}{t!}\frac{(n-k)!}{(n-t)!}\sum_{i \in I_n^k}\sum_{r\in I_k^{t}}a_{i_{r_1}, \dots, i_{r_t}} x_{i_{r_1}}^{r_1} \dots x_{i_{r_t}}^{r_t}\\
		&=a_0 + \sum_{t = 1}^k \frac{(k - t)!}{t!}\sum_{i \in I_n^k}\sum_{r\in I_k^{t}}a_{i_1, \dots, i_t} x_{1}^{r_1} \dots x_{t}^{r_t}
	\end{align*}

	Roughly speaking, $\tilde{f}$ is that for each monomial of $f$, $\tilde{f}$ will assign those $k$ variables with the correpsonding coordinate in $k$ indepedent copies  $x^1, \dots, x^k \in \R^n$ uniformly randomly, i.e., if $f(x) = x_1x_2x_3$, then the decoupled version
	\[
	\tilde{f}(x^1, x^2, x^3) = \frac16(x^1_1x^2_2x^3_3 + x^1_1x^3_2x^2_3+x^2_1x^1_2x^3_3+x^2_1x^3_2x^1_3+x^3_1x^1_2x^2_3+x^3_1x^2_2x^1_3)
	\]
	
	In \cite{kwapien1987}, they show that there is an equation between fully decoupled version and the original function
	\[
	\tilde{f}(x^1,\dots, x^k) = \E_{\epsilon_1, \dots, \eps_k \in \{-1,1\}}\left[\frac{\epsilon_1 \epsilon_2 \dots  \epsilon_k}{k!} (\epsilon_1 + \dots + \epsilon_k)^k f\left(\frac{\epsilon_1x^1+\dots+\eps_kx^k}{\epsilon_1+\dots+\epsilon_k}\right) \right]
	\]
	
	It is easy to check $\tilde{f}$ has following properties:
	
	\begin{enumerate}
		\item $\tilde{f}$ is symmetric, i.e.,$\tilde{f}(x^1,\dots, x^k)$ is invariant under the permutation of $x^1,\dots, x^k$.
		\item $\tilde{f}(x, \dots, x) = f(x)$ for any $x \in \R^n$.
	\end{enumerate}
	
	This formula is also a well-known ``polarization-formula''.
	
	In Section 3 of \cite{decoupling2012}, they list the following decoupling inequality for polynomials with low degree on convex functions and tail probability (see also \cite{kwapien1987} for homogenuous case with Gaussian/Boolean inputs).
	\begin{theorem}\label{thm:fulldecouple}
		For natural numbers $n \geq k$, let $\{X_i\}_{i=1}^n$ be $n$ independent random variables with values in a measurable space $S$, and let $\{X_i^j\}_{i=1}^n$, $j = 1, \dots, k$ be $k$ independent copies of this sequence. Let $B$ be a separable Banach space and, $f: \R^n \to B$ be a multilinear polynomial of degree $k$ with coefficients in Banach space $B$, and if $\tilde{f}$ is its fully-decoupled version,
		then there exists $C_k = \exp(O(k^2 \log k))$, such that
	\[
	 \E\left[\Phi\left(\frac1{C_k}\left\|f(X)\right\|\right)\right] \leq \E\left[\Phi\left(\left\|\tilde{f}(X^1, \dots, X^k)\right\|\right)\right] \leq \E\left[\Phi\left(C_k\left\|f(X)\right\|\right)\right]
	\]
	and for all $t > 0$,
	\[
	\frac1{C_k}\Pr\left[\frac1{C_k}\left\|f(X)\right\| >  t\right] \leq
	\Pr[\left\|\tilde{f}(X^1, \dots, X^k)\right\| >t] \leq
	C_k\Pr[C_k\left\|f(X)\right\| >  t].
	\]
	
	In particular, for homogenuous function and Gaussian/Boolean inputs, the constant $C_k$ can be reduced to $\exp(O(k))$.
	
	\end{theorem}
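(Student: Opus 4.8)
The statement is the classical two-sided full-decoupling comparison for low-degree multilinear forms: the convex-function half is essentially de~la~Pe\~na--Gin\'e's Theorem~3.1.1, the tail half is due to de~la~Pe\~na--Montgomery-Smith/Gin\'e, and the sharpened $\exp(O(k))$ constant in the homogeneous Gaussian/Boolean case is Kwapie\'n's. The plan is to run everything off the polarization identity, to handle the homogeneous Gaussian/$\pm1$ case by hand, and to invoke the block-by-block induction for the general constant.

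\textbf{Step 1: polarization and reduction to the homogeneous case.} I would first record the polarization identity
\[
\tilde f(x^{(1)},\dots,x^{(k)}) \;=\; \E_{\eps_1,\dots,\eps_k\in\{\pm1\}}\!\left[\frac{\eps_1\cdots\eps_k}{k!}\Big(\sum_j\eps_j\Big)^{\!k} f\!\left(\frac{\sum_j\eps_j x^{(j)}}{\sum_j\eps_j}\right)\right],
\]
where $s^kf(v/s)$ with $s=\sum_j\eps_j$ is read as the polynomial $\sum_{|T|\le k}a_T v_T\,s^{k-|T|}$, so there is no singularity when $s=0$ (which occurs only for even $k$). This is verified by expanding both sides and matching, for each $S$ and each injection $b:S\to[k]$, the coefficient of $a_S\prod_{i\in S}x^{(b(i))}_i$ after a short character count; in the homogeneous degree-$k$ case it collapses to the cleaner
\[
\tilde f(x^{(1)},\dots,x^{(k)}) \;=\; \frac1{k!}\,\E_{\eps_1,\dots,\eps_k}\!\left[\eps_1\cdots\eps_k\, f\big(\eps_1 x^{(1)}+\cdots+\eps_k x^{(k)}\big)\right].
\]
A non-homogeneous $f$ is reduced to the homogeneous case by splitting $f=\sum_{j=0}^k f^{=j}$, using convexity to write $\Phi(\|\sum_j\widetilde{f^{=j}}\|)\le\frac1{k+1}\sum_j\Phi((k+1)\|\widetilde{f^{=j}}\|)$, applying the homogeneous bound to each $f^{=j}$, and closing with the degree-extraction estimate $\|f^{=j}\|\le O(1)^j\|f\|$ (Kwapie\'n's Lemma~2 for $\pm1$, a Hermite analogue for Gaussians; for general variables one instead runs the de~la~Pe\~na--Gin\'e induction directly on the inhomogeneous form).

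\textbf{Step 2: the bound $\tilde f \lesssim f$ in the homogeneous Gaussian/$\pm1$ case.} Feed i.i.d.\ copies $\bx^{(1)},\dots,\bx^{(k)}$ into the clean identity. In the Gaussian case $\eps_1\bx^{(1)}+\cdots+\eps_k\bx^{(k)}\sim\Normal(0,k)^n$ for every fixed sign pattern, so by homogeneity $f(\sum_j\eps_j\bx^{(j)})=k^{k/2}f(\bw_\eps)$ with $\bw_\eps\stackrel{d}{=}\bx$; in the $\pm1$ case, $\tfrac1k\sum_j\eps_j\bx^{(j)}\in[-1,1]^n$, and replacing it by an auxiliary $\pm1$ vector $\eta$ with conditional mean $\tfrac1k\sum_j\eps_j\bx^{(j)}$ together with multilinearity gives $f(\sum_j\eps_j\bx^{(j)})=k^k\,\E_\eta[f(\eta)]$, where — after integrating out the $\bx^{(j)}$ — the coordinates of $\eta$ are independent uniform $\pm1$, i.e.\ $\eta\stackrel{d}{=}\bx$. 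Either way $\tilde f(\bx^{(\cdot)})$ is a signed average of terms $\lambda_k f(\text{copy of }\bx)$ with $|\lambda_k|\le k^{k/2}/k!$ in the Gaussian case and $|\lambda_k|\le k^k/k!\le e^k$ in the $\pm1$ case, so by the triangle inequality, monotonicity of $\Phi$, and Jensen,
\[
\E\big[\Phi(\|\tilde f(\bx^{(\cdot)})\|)\big]\;\le\;\E\big[\Phi(e^k\|f(\bx)\|)\big],
\]
the convex inequality with $C_k=\exp(O(k))$. The reverse direction ($f\lesssim\tilde f$) is obtained in the same spirit from $\bx\stackrel{d}{=}\tfrac1{\sqrt k}\sum_j\bx^{(j)}$: this writes $f(\bx)$ as $k^{-k/2}$ times the ``rainbow'' part $k!\,\tilde f(\bx^{(\cdot)})$ plus a remainder supported on non-injective block-assignments, which is absorbed after a more laborious bookkeeping over partition types, keeping the constant $\exp(O(k))$.

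\textbf{Step 3: general variables, and from convex inequalities to tails.} For arbitrary independent $\bx_i$ there is no distribution-preserving linear combination, and I would invoke the de~la~Pe\~na--Gin\'e argument: decouple one block at a time, each round trading one ``copy'' of the variable sequence for a fresh independent one via a random-partition (or random-selection) identity of the form $\E_{\text{partition}}[\text{block-restricted form}]=p_k\cdot(\text{target})$ followed by a conditional Jensen step (setting coordinates to $0$ being realized as conditional expectation — this is where one really uses that the $\bx_i$ are mean-zero, or passes through the de~la~Pe\~na--Gin\'e centering device in general), and track the Stirling/multinomial factors, which cost $\exp(O(k\log k))$ per round and hence $\exp(O(k^2\log k))$ over the $k$ rounds, the reverse direction by iterating in the opposite order. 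Finally, to upgrade the convex comparisons to the tail comparisons $\Pr[\|\tilde f\|>C_kt]\le D_k\Pr[\|f\|>t]$ I would use the de~la~Pe\~na--Montgomery-Smith truncation scheme: split $f$ according to whether $\|f\|>t$, apply the $\Phi$-inequality with $\Phi(u)=(u-C_kt)_+^p$ (or an exponential $\Phi$) to the bounded part, and combine with the trivial estimate on $\{\|f\|>t\}$; the multiplicative overhead there is exactly what produces $D_k$. I expect the two main obstacles to be the block-by-block induction for general variables (no polarization shortcut, and keeping the per-round loss near-optimal is delicate) and this last passage from moment/convex control to genuine tail control, which is where the gap between $C_k$ and $D_k$ is born and which is the technically heaviest part of the classical proof.
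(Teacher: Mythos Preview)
The paper does not prove this theorem; both in the main text (Theorems~\ref{thm:classic-convex} and~\ref{thm:classic-tail}, Remark~\ref{rem:Kwapien}) and in the draft section containing this statement, it is quoted as a classical result and attributed to de~la~Pe\~na, de~la~Pe\~na--Montgomery-Smith, Gin\'e (cf.\ the de~la~Pe\~na--Gin\'e book), with the sharpened $\exp(O(k))$ constant in the homogeneous Gaussian/$\pm1$ case due to Kwapie\'n. There is therefore no in-paper proof to compare against.

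That said, your sketch is a faithful outline of exactly those cited arguments. Step~1 is Kwapie\'n's polarization identity (the paper records it verbatim in its draft section), and your reduction to homogeneous pieces via degree-extraction is precisely Kwapie\'n's Lemma~2, which the paper also invokes in the proof of Corollary~\ref{cor:classic-max}. Step~2 is Kwapie\'n's proof of the $k^{k/2}/k!$ Gaussian and $k^k/k!$ Boolean constants; your auxiliary-$\eta$ device for the $\pm1$ case is the same conditional-expectation trick the paper uses in its own proof of Theorem~\ref{thm:one-variable-full} under \textbf{H2}/\textbf{H3}. Step~3 correctly identifies the de~la~Pe\~na--Gin\'e block-by-block induction as the source of the $\exp(O(k^2\log k))$ constant for general variables, and the de~la~Pe\~na--Montgomery-Smith truncation as the mechanism for upgrading convex comparisons to tail comparisons. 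The two places you flag as ``main obstacles'' are exactly the places where the classical proofs are genuinely technical, and you have not filled them in---but neither does the paper, which simply cites the book.
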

	
	Decoupling inequality for convex function for multilinear polynomial with degree 2 or in general has a long history, see \cite{burkholder1983geometric}, \cite{zinn1985comparison}, \cite{mcconnell1986decoupling}, \cite{mcconnell1987decoupling}, \cite{de1987decoupling}, \cite{kwapien1987}, \cite{bourgain1987invertibility} and \cite{de1994order}. \cite{kwapien1987} first showed a decoupling inequality for tail probability for homogeneous case with Gaussian inputs. A decoupling inequality for tail probability of multilinear polynomial for independent symmetric random variables  is found in \cite{kwapien1992random}. The decoupling inequality is generalized to U-process in \cite{de1992decoupling}, \cite{victor1993bounds} and \cite{de1995decoupling}.
	
	Recent work has shown the importance of block-multilinear Boolean function. Let $f:\{-1,1\}^n \to \R$ be a multilinear polynomial of degree $k$. We say $f$ is \emph{block-multilinear} if we can divide all $n$ variables into $k$ different blocks, so that every monomial of $f$ is the product of at most one variable from each block. In \cite{lovett2010elementary}, they prove an anti-concentration result for polynomials in Gaussian variables starting from block-multilinear case. In \cite{kane2013prg}, they use a random hash function to partition the variables to make a low degree polynomial block-multilinear, and they presents a pseudorandom generator that fools Lipschitz functions of polynomial based on it.In \cite{aaronson2014}, they present a classical randomized algorithm for block-multilinear functions that queries only $O(\frac1{\epsilon^2}n^{(1-1/k)})$ variables and can estimate the output of $f$ within error $\eps$.
	
	\begin{theorem}[Query complexity for block-mulitilinear functions]
		\label{thm:queryforblock}
		Let $f : \{-1,1\}^{n} \to [-1,1]$ be any block-multilinear polynomial of degree $k$ bounded in $[-1, 1]$. Then there exists a classical randomized algorithm that, on input $x \in \{-1,1\}^{n}$, non adaptively queries $O(\frac1{\eps^2}(2e)^{2k}n^{1-1/k})$ bits of $x$, and then outputs an estimate $f^*$ such that with high probability,
		\[
		|f^* - f(x)| \leq \epsilon
		\]
	\end{theorem}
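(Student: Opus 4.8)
The plan is to derive the statement from Theorem~\ref{thm:aa1} (the Aaronson--Ambainis classical simulation of bounded block-multilinear polynomials, \cite{AA15}), which already provides such an algorithm when $f$ is in addition homogeneous. Thus the two things to supply are (i) the removal of the homogeneity hypothesis and (ii) a tracking of the $k$-dependent constant. We may assume $\eps \le 1$, since otherwise the trivial estimate~$0$ already has error $\|f\|_\infty \le 1 < \eps$ and uses no queries.

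Step (i): homogenize. Let $f$ be block-multilinear of degree at most~$k$ with blocks $B_1,\dots,B_k$, and write $b(i) \in [k]$ for the index of the block containing coordinate~$i$. Introduce one ``dummy'' coordinate $d_b$ into each block~$B_b$ (so $n+k$ coordinates in all), always intended to be fixed to~$+1$, and replace every monomial $x_S$ of~$f$ --- which, by block-multilinearity, uses exactly one coordinate from each of the $|S|$ blocks in some set $J(S) \subseteq [k]$ --- by $x_S \prod_{b \notin J(S)} d_b$. The resulting polynomial~$g$ on $n+k$ variables is homogeneous of degree exactly~$k$, is block-multilinear with respect to the augmented blocks $B_b \cup \{d_b\}$, and satisfies $g(x,+1,\dots,+1) = f(x)$. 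The one point to check is that $\|g\|_\infty \le 1$: substituting $d_b = \epsilon_b \in \{\pm 1\}$ and using $\prod_{b \notin J(S)} \epsilon_b = (\prod_b \epsilon_b) \prod_{i \in S} \epsilon_{b(i)}$, one gets $g(x,\epsilon) = (\prod_b \epsilon_b)\, f(x')$ with $x'_i = \epsilon_{b(i)} x_i \in \{\pm 1\}$, so $\|g\|_\infty = \|f\|_\infty \le 1$.

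Now run the algorithm. A query to~$g$ at an input of the form $(x,+1,\dots,+1)$ is simulated by querying the corresponding coordinate of~$x$, reading any dummy coordinate off for free as~$+1$; hence a nonadaptive estimator for~$g$ yields one for~$f$ using no more real queries. Apply Theorem~\ref{thm:aa1} to~$g$ with the same error parameter~$\eps$ and $N = n+k$ variables. When $n \ge k$ we have $N \le 2n$, and since $\eps \le 1$ gives $(1/\eps^2)^{1-1/k} \le 1/\eps^2$, the bound $2^{O(k)}(N/\eps^2)^{1-1/k}$ becomes $2^{O(k)} n^{1-1/k}/\eps^2$; the case $n < k$ is trivial, as one may simply read all $n$ coordinates (and $n \le k^{1/k} n^{1-1/k} \le 2 n^{1-1/k}$, within the claimed bound). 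Step (ii): the explicit constant. One route is to inspect the proof of Theorem~\ref{thm:aa1} and pin down its $2^{O(k)}$; a cleaner way to see where the exponent $2k$ comes from is to first pass through full decoupling, exactly as in the proof of Theorem~\ref{thm:our-aa}: replace~$g$ by its fully decoupled version, which by Corollary~\ref{cor:classic-max} has sup-norm at most~$(2e)^k$ (in fact $e^k$, being homogeneous), renormalize by this factor so that the new function maps into $[-1,1]$, and apply Theorem~\ref{thm:aa1} with the rescaled tolerance $(2e)^{-k}\eps$; squaring this tolerance (and raising to the power $1-1/k$) is precisely what produces the $(2e)^{2k}$ factor, while the extra variables cost only a further constant in~$k$ absorbed into $n^{1-1/k}$.

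The genuinely hard content is entirely inside Theorem~\ref{thm:aa1} itself --- the random-subsampling estimator of \cite{AA15} and the accompanying variance/concentration analysis that produces a \emph{sublinear} ($n^{1-1/k}$) query bound for bounded homogeneous block-multilinear polynomials. Everything surrounding it here --- homogenization by dummy coordinates, the free simulation of dummy queries, the case split on whether $n \ge k$, and the constant bookkeeping (optionally via the decoupling detour) --- is routine; the only step needing a moment's care is the verification above that padding with coordinates fixed to~$+1$ does not inflate $\|f\|_\infty$.
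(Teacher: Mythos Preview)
The paper does not actually prove this statement; it is quoted as a result of Aaronson and Ambainis~\cite{AA15}, and the explicit constant $(2e)^{2k}$ arises from tracking through \emph{their} argument together with the bug fix to their Lemma~21 recorded in the paper's footnote and appendix. So there is no ``paper's own proof'' to compare against beyond the one-line remark, in the proof of Theorem~\ref{thm:our-aa}, that homogeneity can be arranged by padding with $k$ dummy $+1$ coordinates. Your Step~(i) is exactly that remark, carried out carefully, and the verification that $\|g\|_\infty = \|f\|_\infty$ via the substitution $x'_i = \epsilon_{b(i)} x_i$ is correct and a nice touch the paper omits.

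Your Step~(ii), however, does not do what you claim. Once $g$ is homogeneous block-multilinear, Theorem~\ref{thm:aa1} applies to it \emph{directly}; passing first to the fully decoupled $\widetilde g$ is a pure detour that only multiplies the query count by an extra $(2e)^{2k}$ on top of the unspecified $2^{O(k)}$ already present in Theorem~\ref{thm:aa1}. In particular it does \emph{not} ``precisely produce'' the $(2e)^{2k}$ factor --- you still carry the black-box $2^{O(k)}$, so nothing is pinned down. The only way to obtain the specific constant in the statement is your ``first route'': open up the Aaronson--Ambainis estimator and track the constants (including the $e^k$ coming from $\|g\|_2 \le e^{k}\|g\|_1$ in the bug fix). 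The decoupling machinery of this paper is used to go from block-multilinear to general polynomials (Theorem~\ref{thm:our-aa}), not to analyze the block-multilinear case itself.
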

	
	Notice that a fully-decoupled function $\tilde{f}$ is always block-multilinear. It is possible to use decoupling inequality to extend some results on block-multilinear polynomials to general multilinear polynomials with Boolean/Guassian inputs.
	
	However, as far as we know, there are few study about the application of decoupling principle in analysis of boolean function. In \cite{khot2008} and \cite{barak2015}, they both use the decoupling principle of degree-3 in their algorithm.
	
	On the other hand, the downside of fully decoupling is that it will involving an exponential factor on $k$. Notice that even if $f$ is a linear map on a subset of coordinates, there are still some good properties, i.e., it is easy to analyze the tail probability. For any multilinear polynomial $f : \R^n\to\R$,
	\[
	f(x) = \sum_{S \subseteq[n]} \widehat{f}(S) x^S,
	\]
	we define the \emph{one-variable decoupled} version of $f$ is $\breve{f} : \R^n \times \R^n \to \R$ and
	\[
	\breve{f}(y, z) = \sum_{\emptyset \neq S \subseteq [n]} \widehat{f}(S) \sum_{i \in S} y_i z^{S/\{i\}}
	\]
	where $y, z \in \R^n$.
	Here $\breve{f}(y, z)$ is a linear map on $y$. In this paper, we will show a decoupling inequality for one-variable decoupling, with a factor only linear of $k$. We use the decoupling inequality to improve the results in \cite{DFKO}.

	\subsection{Tail bound and Fourier Tails}
	
	In the analysis of Boolean function, a kind of functions we are interested in is those functions only rely on a small number of input, called ``juntas''.
	\begin{definition}
		A function $f: \{-1, 1\}^n \to \R$ is called an $(\epsilon, j)$-junta if there eixts a function $g:\{-1, 1\}^n \to \R$ depending on at most $j$ coordinates such that $\|f-g\|_2^2 \leq \epsilon$.
	\end{definition}
	In \cite{fri98}, \cite{friedgut2002boolean} and \cite{bourgain2002distribution}(see also \cite{khot2006nonembeddability}), they show some junta-related theorems related to the $L_2$ Fourier mass on charaters of high degree for Boolean functions with output $\{-1, 1\}$. A natural question is: Does a theorem of the same flavour holds real-valued, bounded funcitons, $f:\{-1, 1\}^n \to [-1, 1]$? The answer is Yes! In \cite{DFKO}, they show the following theorem:
	
	\begin{theorem}
		\label{thm:junta-DFKO}
		Let $f : \{-1, 1\}^n \to [-1, 1]$, $k\geq 1$, and $\epsilon > 0$. Suppose
		\[
		\sum_{|S| > k} \widehat{f}(S)^2 \leq \exp(-O(k^2\log k)/\eps)
		\]
		Then $f$ is an $(\epsilon, 2^{O(k)}/\epsilon^2)$-junta.
	\end{theorem}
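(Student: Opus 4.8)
The plan is to prove the contrapositive: if $f:\{\pm1\}^n\to[-1,+1]$ is \emph{not} an $(\eps,2^{O(k)}/\eps^2)$-junta, then $W^{>k}[f]:=\sum_{|S|>k}\wh f(S)^2>\exp(-O(k^2\log k)/\eps)$. This is just the contrapositive packaging of the DFKO Fourier Tail Theorem, and its skeleton is exactly that of the proof of Corollary~\ref{cor:improved-dfko-tail}; the only difference is that in the anti-concentration step I invoke the classical DFKO Inequality as a black box rather than the sharpened Gaussian bound of Theorem~\ref{thm:gaussian-dfko-ineq}, which is what reintroduces the $\log k$ into the exponent.

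First I would isolate the influential coordinates. Fix a threshold $\tau=c\,\eps^2 2^{-Ck}$ with absolute constants $c,C$ to be chosen, and let $J=\{i\in[n]:\Inf_i^{\le k}[f]>\tau\}$. Since $\sum_i\Inf_i^{\le k}[f]=\sum_{|S|\le k}|S|\wh f(S)^2\le k\|f\|_2^2\le k$, we get $|J|\le k/\tau=2^{O(k)}/\eps^2$. Set $g=f^{\subseteq J}$, the restriction of $f$ to $J$; it depends only on the $J$-coordinates and, being an average of $f$, has range $[-1,+1]$. Because $f$ is not an $(\eps,2^{O(k)}/\eps^2)$-junta we must have $\|f-g\|_2^2>\eps$. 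Now $\wh{f-g}(S)=\wh f(S)\cdot[S\not\subseteq J]$, so $f-g$ has mean zero, satisfies $|f-g|\le2$ pointwise, and $W^{>k}[f-g]\le W^{>k}[f]$. If $W^{>k}[f]>\eps/2$ we are already done, so assume $W^{>k}[f]\le\eps/2$; then $h:=(f-g)^{\le k}$ has $\|h\|_2^2=\|f-g\|_2^2-W^{>k}[f-g]\ge\eps/2$, and its Fourier support lies in $\{S:1\le|S|\le k,\ S\not\subseteq J\}$.

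The main obstacle is that $h$ is neither bounded nor small-influence: a coordinate in $J$ may carry large influence in $h$, so the DFKO Inequality does not apply to $h$ directly. The fix is the random-restriction trick already used in Lemma~\ref{lem:decoupledtail} and in Corollary~\ref{cor:improved-dfko-tail}. Let $\rho$ be a uniformly random point of $\{\pm1\}^J$ and let $h_\rho$ denote the restriction of $h$ to the coordinates outside $J$. Then all live coordinates of $h_\rho$ lie outside $J$, so $\MaxInf[h_\rho]\le\tau$; moreover $\E[h_\rho]=0$ for every $\rho$ (since $\wh h(S)=0$ for $S\subseteq J$), and $\sigma_\rho^2:=\E[h_\rho^2]$ is a degree-$\le 2k$ polynomial in $\rho$ with $\E_\rho[\sigma_\rho^2]=\|h\|_2^2\ge\eps/2$. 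By Theorem~\ref{thm:hypercon}, $\Pr_\rho[\sigma_\rho^2\ge\eps/2]\ge\tfrac14 e^{-4k}$. For any such $\rho$, the scaled function $\sqrt{2/\eps}\,h_\rho$ has variance $\ge1$, degree $\le k$, and $\MaxInf\le 2\tau/\eps$; taking $t=3\sqrt{2/\eps}\ge1$, the hypothesis $2\tau/\eps\le 2^{-O(k)}/t^2$ of the DFKO Inequality is met once $\tau\le c'\eps^2 2^{-C'k}$ --- which is precisely why the threshold, and hence the junta size $2^{O(k)}/\eps^2$, must take this form. The DFKO Inequality then gives $\Pr[\,|h_\rho|>3\,]\ge\exp(-O(t^2k^2\log k))=\exp(-O(k^2\log k/\eps))$.

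Averaging over $\rho$ yields $\Pr_x[\,|h(x)|>3\,]\ge\tfrac14 e^{-4k}\exp(-O(k^2\log k/\eps))=\exp(-O(k^2\log k/\eps))$. On the other hand $h=(f-g)-(f-g)^{>k}$ with $|f-g|\le2$ and $\|(f-g)^{>k}\|_2^2=W^{>k}[f-g]\le W^{>k}[f]$, so by Markov $\Pr[\,|h|>3\,]\le\Pr[\,|(f-g)^{>k}|>1\,]\le W^{>k}[f]$. Comparing the two bounds gives $W^{>k}[f]\ge\exp(-O(k^2\log k/\eps))$, which is the desired conclusion. The only delicate part is the bookkeeping of the absolute constants appearing in $\tau$, in the $2^{-O(k)}$ of the DFKO hypothesis, and in the final $O(\cdot)$; but there is no circularity, since the constraints on $\tau$ are only of the form $\tau\le\eps/2$ and $\tau\le c'\eps^2 2^{-C'k}$, which can be satisfied simultaneously, and the final exponent bound then holds unconditionally whenever $f$ fails to be an $(\eps,2^{O(k)}/\eps^2)$-junta.
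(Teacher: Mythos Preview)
Your overall skeleton does mirror the paper's proof of Corollary~\ref{cor:improved-dfko-tail}, and most of the steps (the definition of $J$, the reduction to $h=(f-f^{\subseteq J})^{\le k}$, the use of Theorem~\ref{thm:hypercon} to lower-bound $\Pr_\rho[\sigma_\rho^2\ge\eps/2]$, and the final Markov step) are correct. However, there is a genuine gap at the anti-concentration step.

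You write: ``all live coordinates of $h_\rho$ lie outside $J$, so $\MaxInf[h_\rho]\le\tau$.'' The first clause is true, but the implication fails. What you know for $i\notin J$ is $\Inf_i[h]\le\Inf_i^{\le k}[f]\le\tau$; this equals the \emph{average} $\E_\rho\bigl[\Inf_i[h_\rho]\bigr]$, not a pointwise bound. For a fixed restriction $\rho$ the influence $\Inf_i[h_\rho]=\|(\mathrm D_i h)_\rho\|_2^2$ is a degree-$\le 2(k-1)$ polynomial in $\rho$ and can exceed its mean substantially. (A toy example: $J=\{1,2\}$, $h=\tfrac{1}{\sqrt2}(x_1+x_2)x_3$ has $\Inf_3[h]=1$, but $\Inf_3[h_{(1,1)}]=2$.) Since the Boolean DFKO Inequality you invoke requires a MaxInf hypothesis on $h_\rho$ itself, this step is not justified as written.

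This is exactly the obstacle the paper sidesteps in Corollary~\ref{cor:improved-dfko-tail}: rather than restrict first and then appeal to a Boolean inequality with a MaxInf hypothesis, the paper applies the Invariance Principle to~$h$ \emph{before} restricting, hybridizing only the non-$J$ coordinates to Gaussians. That step needs only $\Inf_i[h]\le\tau$ for $i\notin J$, which you do have. After hybridization, the restriction $\wt h_{x_J}$ is a polynomial in Gaussians, and the Gaussian anti-concentration bound (Theorem~\ref{thm:gaussian-dfko-ineq}, or for your target statement the Gaussian consequence of the classical DFKO Inequality) carries \emph{no} MaxInf assumption. If you want to stay purely Boolean and apply the DFKO Inequality to $h_\rho$, you would need an additional argument controlling $\MaxInf[h_\rho]$ for a positive fraction of~$\rho$ (e.g., via hypercontractive moment bounds on each $\Inf_i[h_\rho]$ combined with a bound on the total influence), and this typically forces $\tau$ smaller, which is why the paper remarks that the junta size in Corollary~\ref{cor:improved-dfko-tail} is worse than in~\cite{DFKO07}.
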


	Whereas the main theorem in \cite{DFKO} concerned showing largeness of Fourier tails for non-juntas, this was essentially a corollary of their main technical theorem, for the tail probability of Boolean function with low degrees.
	
	\begin{theorem}
		\label{thm:tail-DFKO}
		There is a universal constant $C$ such that the following holds: Suppose $f:\{-1, 1\}^n \to \R$ has degree at most $k$ and that $\Var[f] = 1$. Let $t \geq 1$ and suppose that $\MaxInf[f] \leq t^{-2}C^{-k}$. Then
		\[
		\Pr[|f|\geq t] \geq \exp(-Ct^2k^2\log k)
		\]
	\end{theorem}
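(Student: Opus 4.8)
The plan is to prove the sharper, $\log k$-free bound $\Pr[|f(\bx)|>t]\ge\exp(-O(t^2k^2))$, which a fortiori gives the stated estimate, by reducing to Gaussian space and then exploiting one-block decoupling --- anticoncentration for low-degree polynomials being much cleaner over Gaussians. Two preliminary reductions are needed. (i) Using the Invariance Principle \cite[Theorem~3.19, Hypothesis~4]{MOO10} in L\'evy-distance form, the small-influence hypothesis lets us replace uniform $\bx\in\{\pm1\}^n$ by a standard Gaussian vector up to a small additive term in L\'evy distance, which we arrange to be negligible against the target probability. (ii) Every $n$-variate polynomial of Gaussians is approximated in L\'evy distance, to any precision and with no increase in degree, by a \emph{multilinear} polynomial of Gaussians (\cite[Lemma~15]{Kan11}, or CLT-then-Invariance). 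So the core task becomes: given a multilinear $f(x)=\sum_{|S|\le k}a_Sx_S$ with $\Var[f(\bx)]\ge1$ under $\bx\sim\Normal(0,1)^n$, show $\Pr[|f(\bx)|>t]\ge\exp(-O(t^2k^2))$. One caveat: making the L\'evy error in (i) small enough forces us to assume $\MaxInf[f]\le\exp(-O(t^2k^2))$, slightly stronger than the literal $t^{-2}C^{-k}$; recovering that weaker hypothesis is exactly where the original \cite{DFKO07} argument --- random restrictions on the cube rather than passage through Gaussian space --- is needed, and I return to this below.

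For the core Gaussian task I would apply our one-block decoupling theorem. Since $f$ is multilinear, $\Var[\odec f(\by,\bz)]=\sum_{j=1}^k j\sum_{|S|=j}a_S^2\ge\sum_{S\ne\emptyset}a_S^2=\Var[f]\ge1$ for independent $\by,\bz\sim\Normal(0,1)^n$, and Theorem~\ref{thm:one-variable-full} under Hypothesis \textbf{H1} (where $C_k=D_k=O(k)$) gives
\[
    \Pr\bigl[|f(\bx)|>t\bigr]\ \ge\ D_k^{-1}\,\Pr\bigl[|\odec f(\by,\bz)|>C_k\,t\bigr]\ =\ \Omega(1/k)\cdot\Pr\bigl[|\odec f(\by,\bz)|>O(k)\,t\bigr].
\]
It then remains to lower-bound the tail of the one-block decoupled polynomial, which is the content of Lemma~\ref{lem:decoupledtail}. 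Concretely, write $\odec f(y,z)=\sum_{i=1}^n y_ig_i(z)$ and set $v(z)=\sum_i g_i(z)^2$, a polynomial of degree $\le2(k-1)$ with $\E[v(\bz)]=\sigma^2:=\Var[\odec f]\ge1$. Hypercontractivity, in the form of Theorem~\ref{thm:hypercon}, gives $\Pr[v(\bz)\ge\sigma^2]\ge\tfrac14 e^{-2(2k-1)}=\exp(-O(k))$. Conditioned on any such $z$, $\odec f(\by,z)\sim\Normal(0,v(z))$ with $v(z)\ge\sigma^2\ge1$, so the one-dimensional Gaussian tail bound gives $\Pr[|\odec f(\by,z)|>O(k)\,t]\ge\exp(-O(k^2t^2))$; independence of $\by$ and $\bz$ then yields $\Pr[|\odec f(\by,\bz)|>O(k)\,t]\ge\exp(-O(k+k^2t^2))$. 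Feeding this into the display above gives $\Pr[|f(\bx)|>t]\ge\exp(-O(k^2t^2))$ in the Gaussian setting, and the Invariance transfer of reduction (i) carries it to $\{\pm1\}^n$, completing the proof with room to spare against the $\log k$ in the stated bound.

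The main obstacle is keeping the $k$-dependence in the exponent polynomial. The ``$v$-part'' stays cheap thanks to the two-stage trick --- first $\Pr[v(\bz)\ge\sigma^2]\ge\exp(-O(k))$, then conditional Gaussian anticoncentration --- but the threshold $C_k\,t$ fed into the Gaussian tail contributes $\exp(-O(C_k^2t^2))$, so the whole bound hinges on the decoupling loss being $C_k=O(k)$, not $k^{O(k)}$; that is precisely Theorem~\ref{thm:one-variable-full} under \textbf{H1} (and by Remark~\ref{rem:gauss-tight} the $O(k)$ cannot be improved). The secondary obstacle is that this route cannot meet the original influence hypothesis $\MaxInf[f]\le t^{-2}C^{-k}$: the Invariance step needs $\MaxInf[f]$ as small as $\exp(-O(t^2k^2))$ (cf.\ Corollary~\ref{cor:improved-dfko-ineq} and the remark following Corollary~\ref{cor:improved-dfko-tail}), so matching the weaker hypothesis of the statement as literally worded requires the restriction-based argument of \cite{DFKO07} carried out directly over the cube --- we trade that stronger hypothesis for a substantially shorter proof and the sharper, $\log k$-free exponent.
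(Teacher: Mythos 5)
Your Gaussian core is sound and is in fact exactly the route this paper takes for its own sharpened variants: reduce to multilinear Gaussian polynomials, apply Theorem~\ref{thm:one-variable-full} under \textbf{H1} with $C_k=D_k=O(k)$, and finish with Lemma~\ref{lem:decoupledtail} (hypercontractivity on $v(z)=\sum_i g_i(z)^2$ plus the conditional one-dimensional Gaussian tail), giving $\Pr[|f(\bx)|>t]\ge\exp(-O(t^2k^2))$ over Gaussian space, and then transferring to the cube by Invariance. That is precisely Theorem~\ref{thm:gaussian-dfko-ineq} and Corollary~\ref{cor:improved-dfko-ineq}.

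However, as a proof of the theorem as stated it has a genuine gap, which you yourself flag but do not close: the stated hypothesis is only $\MaxInf[f]\le t^{-2}C^{-k}$, whereas your Invariance step needs $\MaxInf[f]\le\exp(-O(t^2k^2))$. This is not a cosmetic difference. The L\'evy-distance error from \cite{MOO10} is of the form $2^{O(k)}\tau^{\Omega(1/k)}$, so with $\tau=t^{-2}C^{-k}$ it is a constant (roughly $C^{-\Omega(1)}$), which completely swamps the target probability $\exp(-Ct^2k^2\log k)$; there is no choice of the universal constant that rescues the transfer. Functions with influence between $\exp(-\Theta(t^2k^2))$ and $t^{-2}C^{-k}$ are simply not covered by your argument, so what you prove is an incomparable theorem (stronger conclusion, much stronger hypothesis), not an \emph{a fortiori} strengthening. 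The missing ingredient for the weak influence hypothesis is the random-restriction argument of \cite{DFKO07} carried out directly on the cube --- restricting so that most Fourier weight falls to level one at the cost of a $1/O(\log k)$ variance loss, then invoking anticoncentration for linear forms with small coefficients --- which is exactly where the $\log k$ in the exponent comes from. Note that this paper does not prove the stated theorem either: it quotes it from \cite{DFKO07}, and its remark following Corollary~\ref{cor:improved-dfko-tail} explicitly concedes that the worse influence dependence in Corollary~\ref{cor:improved-dfko-ineq} is a byproduct of using Invariance. So either you must import the DFKO restriction argument to meet the hypothesis as worded, or you must restate the theorem with the stronger influence assumption, in which case your proposal is correct and matches the paper's approach.
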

	
	This theorem is interesting in its own right. It has most often been applied in the study of quantum query complexity, e.g.\cite{aaronson2014}, \cite{barak2015}.
	
	In \cite{DFKO}, they also show these theorems are sharp up to the $\log k$ factor in the exponential.
	
	\begin{theorem}[Tightness of Theorem~\ref{thm:tail-DFKO} and \ref{thm:junta-DFKO}]
		For any $t \geq 1$, there exists $f:\{-1,1\}^n \to \R$ having degree at most $k$, satisfying $\sum_{S \neq \emptyset} \widehat{f}(S)^2 = 1$ and $\sum_{S \ni i} \widehat{f}(S)^2 \leq t^{-2}C^{-k}$ for all $i$, and yet still
		\[
		\Pr[|f| \geq t] \leq \exp(-Ct^2k^2).
		\]
		Furthermore, there exists some other function $g: \{-1, 1\}^n \to \R$ with
		\[
		\sum_{|S| > k} \widehat{g}(S)^2 \leq \exp(-\Omega(k^2))
		\]
		where $g$ is not even a $(c, n/2)$-junta for some universal small constant $c$.
	\end{theorem}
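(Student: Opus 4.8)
The plan is to realize both witnesses, following~\cite[Section~6]{DFKO07}, as (a normalization of) the degree-$k$ Chebyshev polynomial $T_k$ applied to the normalized linear form $\bar{\bx}=n^{-1/2}(\bx_1+\dots+\bx_n)$ with $n$ taken very large, and to do the analysis in the single-Gaussian world via the CLT. The transfer is routine: for a univariate degree-$k$ polynomial $p$, the function $x\mapsto p(\bar{\bx})$ is \emph{exactly} a multilinear polynomial of degree $\le k$ in $\bx$ (use $\bx_i^2=1$ to multilinearize), all of its influences equal $\Theta(1/n)$ (so for $n$ large they lie below any prescribed threshold $t^{-2}C^{-k}$), its variance equals $\Var[p(\bar{\bx})]$, and, by the Berry--Esseen theorem (or the Invariance Principle~\cite{MOO10}), its distribution is within L\'evy distance $O(n^{-1/2})$ of $p(\bz)$, $\bz\sim\Normal(0,1)$; a bounded function $G$ of $\bar{\bx}$ is handled the same way, with its degree-$>k$ Fourier weight converging to the Hermite tail $\sum_{j>k}\widehat G(j)^2$ as $n\to\infty$~\cite[Ch.~11]{OD14}. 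So it suffices to build, over a single Gaussian, (a)~a degree-$k$ polynomial $F$ with $\Var[F(\bz)]\ge1$ and the required tail decay, and (b)~a bounded $G$ with $\Var[G(\bz)]=\Omega(1)$, tiny Hermite tail, and far from any constant.

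For the tail witness set $F(z)=\alpha^{-1}T_k(z)$, where $\alpha^2=\Var[T_k(\bz)]$. Because $T_k$ maps $[-1,1]$ into $[-1,1]$ and satisfies $T_k(\cosh\phi)=\cosh(k\phi)$ on $[1,\infty)$, a Laplace (saddle-point) evaluation of $\E[T_k(\bz)^2]$, whose integrand peaks near $|z|\asymp\sqrt k$, gives $\alpha^2=e^{\Theta(k\log k)}$, hence $\alpha^{2/k}=\Theta(k)$. For a threshold $t$ one has $|\bz|\le1\Rightarrow|F(\bz)|\le\alpha^{-1}<t$, so $\{|F(\bz)|>t\}$ is precisely $\{|\bz|>\cosh(\tfrac1k\operatorname{arccosh}(\alpha t))\}$; since $\operatorname{arccosh}(\alpha t)=\ln(2\alpha t)+o(1)$ and $\tfrac1k\ln(2\alpha t)\to\infty$, the relevant cutoff is $\asymp\sqrt k\,(\alpha t)^{1/k}$ and the Gaussian tail yields $\Pr[|F(\bz)|>t]=\exp(-\Theta((\alpha t)^{2/k}))=\exp(-\Theta(k\,t^{2/k}))$. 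In particular, for $t$ at the scale $k^{k}/\sqrt{k!}$ (equivalently $t^{2/k}=\Omega(k)$) this equals $\exp(-\Theta(k^2))$, which is the assertion underpinning the claimed tail bound and shows the $k^2$ in the exponent of Theorem~\ref{thm:gaussian-dfko-ineq}/the DFKO Inequality is of the right order up to the $\log k$. Transferring via Berry--Esseen, and rescaling $F$ by a $1\pm o(1)$ factor to make the variance exactly $1$, produces the desired $f$. (For the complementary point that the $t^2$ in the exponent is also unimprovable, one instead uses $f=n^{-1/2}\sum_{\ell}\bx_{(\ell-1)k+1}\cdots\bx_{\ell k}$, which by the CLT has essentially standard-Gaussian tails $\exp(-\Theta(t^2))$.)

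For the Fourier-tail witness take $G(z)=\psi\bigl(T_k(z/R)\bigr)$ with $R=\Theta(\sqrt k)$ and $\psi\colon\R\to[-1,1]$ a fixed smooth ``soft clip'' that is the identity on $[-\tfrac12,\tfrac12]$: then $G$ has range in $[-1,1]$, is globally $C^\infty$ (a smooth function of a polynomial), coincides with a genuine degree-$k$ polynomial wherever $|T_k(z/R)|\le\tfrac12$, and has $\Var[G(\bz)]=\Omega(1)$ (the bulk $|z|\le R$, of Gaussian mass $1-e^{-\Omega(k)}$, already sees $T_k(z/R)=\cos(k\arccos(z/R))$ oscillating, so $\psi$ of it equidistributes). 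The key estimate is that $G$ is ``band-limited at frequency $\asymp k/R\asymp\sqrt k$'', so its Hermite coefficients $\widehat G(j)=\E[G(\bz)h_j(\bz)]$ decay extremely fast once $j$ exceeds $k$; one quantifies this by contour-shifting, using the analyticity of $\psi\circ T_k(\cdot/R)$ in a thin complex neighborhood of $[-R,R]$ together with Plancherel--Rotach bounds on $h_j$, to get $\sum_{j>k}\widehat G(j)^2\le\exp(-\Omega(k^2))$. Finally, since $G(\bar{\bx})$ is symmetric under permuting the $n$ coordinates, its level-$s$ Fourier weight $W_s$ is spread evenly, so for any fixed $J$ with $|J|=n/2$ one has $\|G(\bar{\bx})-G(\bar{\bx})^{\subseteq J}\|_2^2=\sum_{s}\bigl(1-\tbinom{n/2}{s}/\tbinom ns\bigr)W_s\ge\tfrac12\sum_{s\ge1}W_s\,(1-o(1))=\tfrac12\Var[G(\bar{\bx})]\,(1-o(1))=\Omega(1)$; as the best $(n/2)$-junta approximation is $G(\bar{\bx})^{\subseteq J}$, this shows $G(\bar{\bx})$ is not a $(c,n/2)$-junta for a small universal $c$.

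The main obstacle is the pair of quantitative estimates. For $F$, one must carry out the saddle-point asymptotics of $\Var[T_k(\bz)]$ and of the Gaussian tail carefully enough to land on the exponent in the claimed form and, crucially, to be explicit about the regime of $t$ in which ``$\exp(-Ct^2k^2)$'' is being asserted: the raw Chebyshev example delivers $\exp(-\Theta(k\,t^{2/k}))$, which only has order $k^2$ in the exponent for $t$ of order $k^{\Theta(k)}$, so the honest content is that neither the $k^2$ nor (by the block-sum example) the $t^2$ in the DFKO exponent can be improved. For $G$, the crux is showing that a \emph{smooth} truncation of an oscillating degree-$k$ polynomial loses only $\exp(-\Omega(k^2))$ of its Hermite mass above degree $k$ --- this is exactly where the smoothness of $\psi$ matters, since a hard truncation produces a corner and hence only $\operatorname{poly}(1/k)$-size Hermite tails, and since naive smooth candidates such as $\cos$ of a linear form spread too much mass above degree $k$ to do better than $\exp(-\Theta(k\log k))$; pinning down the precise construction (and the precise exponent) is where one must follow~\cite[Section~6]{DFKO07} closely.
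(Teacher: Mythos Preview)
The paper does not itself prove this statement; it attributes it to \cite[Section~6]{DFKO07} and remarks only that the tight example ``is essentially the univariate, degree-$k$ Chebyshev polynomial.'' Your overall plan---Chebyshev polynomial of a single Gaussian, then transfer to the cube via the CLT/Invariance---matches this, and your symmetry argument at the end showing that a symmetric function of $\bar{\bx}$ is $\Omega(\Var)$-far from every $(n/2)$-junta is fine. But both of your concrete constructions carry the \emph{wrong scaling of the argument of $T_k$}, and as a result neither delivers the exponent claimed in the theorem.

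For the tail witness you take $F(z)=\alpha^{-1}T_k(z)$ with the argument of $T_k$ unscaled, and you correctly compute that this yields $\Pr[|F(\bz)|>t]=\exp(-\Theta(k\,t^{2/k}))$. At $t=1$ this is only $\exp(-\Theta(k))$, not $\exp(-\Theta(k^2))$; and, as you yourself note, it matches $\exp(-\Theta(t^2k^2))$ only at the single scale $t=k^{\Theta(k)}$. Reinterpreting the theorem as ``the $k^2$ and the $t^2$ are separately sharp'' is not a proof of the statement as written. The fix is to let the construction depend on $t$ (which the quantifier order permits): take $p(z)=T_k(z/L)$ with $L=\Theta(tk)$. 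On $[-L,L]$ one has $|p|\le 1$, and since $\bz/L$ is concentrated near $0$ one gets $p(\bz)\approx\cos\bigl(k\pi/2-(k/L)\bz\bigr)$ with $\Var[p(\bz)]=\Theta((k/L)^2)=\Theta(1/t^2)$. Rescaling by $\beta=\Theta(t)$ gives $f=\beta p$ with $\Var[f]=1$, $|f|\le\Theta(t)$ on $[-L,L]$, and hence $\Pr[|f(\bz)|>\Theta(t)]\le\Pr[|\bz|>L]=\exp(-\Theta(t^2k^2))$.

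The same error recurs in your Hermite-tail witness. With $R=\Theta(\sqrt{k})$, the set $\{|z|>R\}$ on which $G$ and the degree-$k$ polynomial $T_k(\cdot/R)$ disagree has Gaussian mass only $\exp(-\Theta(k))$, and a direct Laplace computation of $\int_{|z|>R}T_k(z/R)^2\phi(z)\,dz$ using $T_k(1+s)\approx\tfrac12 e^{k\sqrt{2s}}$ gives $\|G-T_k(\cdot/R)\|_2^2=\exp(-\Theta(k))$ as well. Since the degree-$>k$ Hermite mass of $G$ is bounded by $\|G-T_k(\cdot/R)\|_2^2$ (the latter being exactly the squared distance to a degree-$k$ polynomial), your construction gives only $\sum_{j>k}\widehat G(j)^2\le\exp(-\Theta(k))$, not $\exp(-\Omega(k^2))$. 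Your ``band-limited at frequency $\sqrt{k}$'' heuristic and the contour-shifting sketch do not rescue this; the bottleneck is the $L^2$ mass of the clipping error, which is pinned to $\exp(-\Theta(R^2))$. The correct choice is $R=\Theta(k)$: then $k/R=\Theta(1)$ so one still has $\Var[G]=\Theta(1)$, while the same computation now gives $\|G-T_k(\cdot/R)\|_2^2=\exp(-\Theta(k^2))$, which is the bound the theorem asserts.
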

	
	These functions are constructed by involving Chebyshev polynomials. The key idea of the proof of Theorem~\ref{thm:tail-DFKO} is using random restriction to approximate the function to a linear map. Most of the Fourier weights is brought down to the first level under random restriction and then function is close to a linear map with  variance at least $1/O(\log k)$ fraction of the variance of original function with high probability. In this paper, we get rid of the $\log k$ gap between upper and lower bounds via one-variable decoupling.
		
	\subsection{Our result}
	We show a new decoupling inequality for one-variable decoupling with Gaussian inputs.
	\begin{theorem}
	\label{thm:one-variable}
		Let $\bX, \bY, \bZ \in N(0, 1)^n$. Let $f : \R^n \to \R$ be a multilinear polynomial with degree $k$ and $\breve{f}$ be the one-variable decoupled version of $f$.
		
		Then for any convex non-decreasing function $\Phi : [0, +\infty) \to [0, +\infty)$,
		\[
		\E_{\bY, \bZ}\left[\Phi\left(|\breve{f}(\bY, \bZ)|\right)\right] \leq \E_{\bX}[\Phi(Ck|f(\bX)|)],
		\]
		and
		\[
		\Pr_{\bY, \bZ}[|\breve{f}(\bY, \bZ)| \geq t] \leq 4(k+1)\Pr_{\bX}\left[Ck|f(\bX)| \geq t\right]
		\]
		for some universal constant $C > 0$.
	\end{theorem}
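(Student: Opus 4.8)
The plan is to reduce first to the homogeneous case, turn the desired identity $\odec{f}(y,z) = \sum_i c_i f(\alpha_i y + \beta_i z)$ into a Vandermonde linear system for the coefficient vector $c$, and then choose the nodes $\alpha_i/\beta_i$ delicately enough that the (generically exponentially large) $\ell_1$-norm of the solution collapses to $\poly(k)$. In detail: if $f$ is homogeneous of degree $k$, then by linearity in the coefficients $a_S$ it suffices to match each monomial, i.e.\ for every $|S| = k$ to have $\sum_{j \in S} y_j z_{S \setminus j} = \sum_i c_i \prod_{j \in S}(\alpha_i y_j + \beta_i z_j)$. Expanding the product and grouping by the number $t$ of $z$-factors, this becomes the scalar conditions that $\sum_i c_i \alpha_i^{k-t}\beta_i^t$ equal $1$ when $t = k-1$ and $0$ otherwise, for $0 \le t \le k$. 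Setting $\Delta_i = \beta_i/\alpha_i$ and factoring out $\alpha_i^k$, this is exactly $V A c = e_k$ with $V$ the $(k{+}1)\times(k{+}1)$ Vandermonde matrix in the $\Delta_i$ and $A = \mathrm{diag}(\alpha_i^k)$; when the $\Delta_i$ are distinct, inverting via the classical Vandermonde-inverse formula gives the closed form $c_i = \alpha_i^{-k}(\Delta_i - \sum_j \Delta_j)/\prod_{j \ne i}(\Delta_i - \Delta_j)$.

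Everything then hinges on the choice of nodes. Under \textbf{H1} the constraints are $\alpha_i^2 + \beta_i^2 = 1$ and $|\alpha_i|,|\beta_i| \ge 1/O(C_k)$, with target $\|c\|_1 = O(k)$; the danger is that a generic set of $k+1$ nodes makes the Vandermonde system ill-conditioned and forces $\|c\|_1 = 2^{\Omega(k)}$. I would take a ``hyperharmonic'' progression of ratios: index by $I = \{\pm 1, \dots, \pm\tfrac{k-1}{2}, \pm\tfrac12\}$ (for $k$ odd) with $\alpha_i = i/\sqrt{k^2+i^2}$, $\beta_i = k/\sqrt{k^2+i^2}$, so that $\Delta_i = k/i$. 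The point is that the denominator $\prod_{j \ne i}(\Delta_i - \Delta_j)$ then telescopes into $k^k/i^{k-2}$ times a ratio of factorials, which cancels the $\alpha_i^{-k} \approx (\sqrt{k^2+i^2}/i)^k$ blow-up; what remains is $|c_i| \le O(k/i^3)$ for integer indices (a direct bound for $i \le \sqrt k$, then an argument that $i^3|c_i|$ is nonincreasing for larger $i$) and $|c_{1/2}| \le O(k)$ via a Wallis-type estimate on $(k-1)!!/(k-2)!!$, so $\|c\|_1 \le \sum_i O(k/i^3) + O(k) = O(k)$. Even $k$ is absorbed by adjoining one more node at $\Delta_0 = 0$ (i.e.\ $\alpha_0 = 1$, $\beta_0 = 0$), for which one checks $c_0 = 0$.

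The Boolean hypotheses and the non-homogeneous extension are then routine variations. For \textbf{H3} one uses the rescaled ratios $\Delta_i = k^{3/2}/i$ so that $|\alpha_i| + |\beta_i| = 1$ and $|\alpha_i|,|\beta_i| \ge 1/O(k^{3/2})$, getting $\|c\|_1 = O(k^{3/2})$ by the same computation; for \textbf{H2} one takes $\Delta_i = k^2/i^2$ on $I = \{1,\dots,k,\tfrac12\}$, obtaining $\alpha_i + \beta_i = 1$ and $\|c\|_1 = O(k^2)$. To pass from degree-$k$ homogeneous to general $f$: because $\alpha_i + \beta_i = 1$ (resp.\ $\alpha_i^2 + \beta_i^2 = 1$), an induction shows the same $c,\alpha,\beta$ already produce the correct coefficients at every lower degree $k' < k$ up to a single spurious term at the top degree $t = k'$ of that layer, which one kills by averaging $f(\alpha_i y + \beta_i z)$ against $f(-\alpha_i y - \beta_i z)$ to project onto the right parity. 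Under \textbf{H1} this handles only $k'$ of the parity of $k$, so one splits $f = f_{\mathrm{odd}} + f_{\mathrm{even}}$, treats each half, and recombines; this at most quadruples $m$ and doubles $\|c\|_1$, preserving $m = O(k)$ and $\|c\|_1 = C_k$.

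I expect the main obstacle to be precisely the calibration of the hyperharmonic nodes and the attendant estimates: one must verify that the telescoping of the Vandermonde denominators into ratios of factorials exactly absorbs the $\alpha_i^{-k}$ growth, and that these nodes simultaneously stay bounded away from both $0$ and $1$ so that $|\alpha_i|,|\beta_i|$ are only polynomially small. Once this choice is found, the homogeneous reduction, the parity trick, and the Boolean rescalings are all bookkeeping.
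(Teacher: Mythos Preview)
Your proposal is correct and follows essentially the same approach as the paper: the same Vandermonde reduction $VAc=e_k$, the same hyperharmonic node choice $\Delta_i = k/i$ on the index set $\{\pm 1,\dots,\pm\tfrac{k-1}{2},\pm\tfrac12\}$, the same $|c_i|\le O(k/i^3)$ and $|c_{1/2}|\le O(k)$ bounds, and the same parity and odd/even splitting tricks for even $k$ and for non-homogeneous $f$. The only step you do not spell out is the short deduction of the $\Phi$-inequality and the tail bound from the representation identity (via Jensen, pigeonhole, and the fact that $\alpha_i\by+\beta_i\bz\sim\Normal(0,1)^n$ when $\alpha_i^2+\beta_i^2=1$), which the paper dispatches in a few lines.
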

	Though Theorem~\ref{thm:one-variable} only holds for Gaussian inputs, it gives a better factor of $O(k)$ rather than $\exp(O(k))$ in fully decoupling. We will show the following application which is sharper than the bound given in \cite{DFKO}.
	
	\begin{theorem}
		\label{thm:tail}
		There is a universal constant $C$ such that the following holds: Suppose $f:\{-1, 1\}^n \to \R$ has degree at most $k$ and that $\Var[f] = 1$. Let $t \geq 1$ and suppose that $\MaxInf[f] \leq C^{-k^3t^2}$. Then
		\[
		\Pr[|f|\geq t] \geq \exp(-Ct^2k^2)
		\]
	\end{theorem}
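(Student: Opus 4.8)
The plan is to prove the Gaussian analogue of the statement first --- where the one-block decoupling results do almost all of the work --- and then to push the bound down to the uniform $\pm1$ cube via the Invariance Principle~\cite{MOO10}. The small-influence hypothesis $\MaxInf[f]\le C^{-k^3t^2}$ enters only at this last step: it is exactly what makes the L\'evy-distance error produced by Invariance negligible next to the target probability $\exp(-O(t^2k^2))$.

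\textbf{The Gaussian step.} First I would establish: if $g:\R^n\to\R$ has degree at most $k$, $\bx\sim\normal(0,1)^n$, and $\Var[g(\bx)]\ge1$, then $\Pr[|g(\bx)|>t]\ge\exp(-O(t^2k^2))$ for $t\ge1$ (this is Theorem~\ref{thm:gaussian-dfko-ineq}). As there, one first reduces to $g$ multilinear, $g=\sum_{|S|\le k}a_Sx_S$, since any polynomial of Gaussians is L\'evy-approximable by a multilinear one of no larger degree; homogeneity, which the reduction breaks, plays no role here. Pass to the one-block decoupling $\odec{g}(y,z)=\sum_i y_ig_i(z)$ with $\by,\bz$ independent standard Gaussians. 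Because
\[
\Var[\odec{g}(\by,\bz)]=\sum_{j=1}^k j\sum_{|S|=j}a_S^2\ \ge\ \sum_{S\ne\emptyset}a_S^2=\Var[g(\bx)]\ \ge\ 1,
\]
Theorem~\ref{thm:one-variable-full} under Hypothesis~\textbf{H1} (where $C_k=D_k=O(k)$) gives $\Pr[|g(\bx)|>t]\ge\frac{1}{O(k)}\Pr[|\odec{g}(\by,\bz)|>O(k)\,t]$. For the remaining factor I would rerun Lemma~\ref{lem:decoupledtail}: the degree-$\le2(k-1)$ polynomial $v(z)=\sum_ig_i(z)^2$ has mean $\sigma^2:=\Var[\odec{g}]\ge1$, so Theorem~\ref{thm:hypercon} gives $\Pr[v(\bz)>\sigma^2]\ge\exp(-O(k))$; conditioned on such a $z$, $\odec{g}(\by,z)\sim\normal(0,v(z))$ with $v(z)>\sigma^2\ge1$, so $\Pr[|\odec{g}(\by,z)|>u]\ge\exp(-O(u^2))$; taking $u=O(k)t$ and using independence of $\by,\bz$ yields $\Pr[|\odec{g}(\by,\bz)|>O(k)t]\ge\exp(-O(k))\exp(-O(k^2t^2))=\exp(-O(k^2t^2))$. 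Multiplying through finishes the Gaussian step.

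\textbf{Transfer to the cube.} Now let $f:\{\pm1\}^n\to\R$ have degree $\le k$ and $\Var[f]=1$. Its multilinear form, evaluated at i.i.d.\ standard Gaussians $\by$, has the same degree and variance, so the Gaussian step gives $\Pr[|f(\by)|>s]\ge\exp(-C_0s^2k^2)$ for $s\ge1$ with $C_0$ universal. By the Invariance Principle~\cite[Theorem~3.19]{MOO10}, the L\'evy distance $\delta$ between the law of $f$ under $\bx\sim\{\pm1\}^n$ and under $\by\sim\normal(0,1)^n$ is at most a $\poly(k)$ factor times $(\MaxInf[f])^{\Omega(1/k)}$; so taking the universal constant $C$ in the hypothesis $\MaxInf[f]\le C^{-k^3t^2}$ large enough forces $\delta\le\frac12\exp(-4C_0t^2k^2)$, say. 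Then, since $t\ge1$ makes $(t+\delta)^2=O(t^2)$,
\[
\Pr[|f(\bx)|>t]\ \ge\ \Pr[|f(\by)|>t+\delta]-\delta\ \ge\ \exp(-O(t^2k^2))-\tfrac12\exp(-O(t^2k^2))\ \ge\ \exp(-O(t^2k^2)),
\]
which is the claim after absorbing constants.

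\textbf{Main obstacle.} All of the genuine difficulty is upstream, in Theorem~\ref{thm:one-variable-full}: securing a one-block decoupling tail comparison whose constant is merely \emph{linear} in $k$ over the Gaussians, which rests on Lemma~\ref{lem:summation} --- writing $\odec{g}(y,z)$ as a short $\ell_1$-combination $\sum_ic_ig(\alpha_iy+\beta_iz)$ with the ratios $\beta_i/\alpha_i$ laid out along a harmonic-type progression so that $\|c\|_1=O(k)$. Within this particular deduction the one delicate point is to make the Invariance-Principle L\'evy bound explicit enough to see that a bound of $C^{-k^3t^2}$ (rather than $C^{-k^2t^2}$) on $\MaxInf[f]$ is exactly what the $\Omega(1/k)$-power loss in hybridization demands; the rest is bookkeeping, with $t\ge1$ absorbing the additive slack.
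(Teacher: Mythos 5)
Your proposal is correct and follows essentially the same route as the paper: prove the Gaussian tail bound first via the one-block decoupling theorem (Hypothesis \textbf{H1}, $C_k = D_k = O(k)$) together with Lemma~\ref{lem:decoupledtail}, then transfer to the Boolean cube via the Invariance Principle, using the $\MaxInf[f] \leq C^{-k^3 t^2}$ hypothesis precisely to make the L\'{e}vy-distance error negligible against the target $\exp(-O(t^2k^2))$. This matches the paper's proof of Theorem~\ref{thm:gaussian-dfko-ineq} and its deduction of Corollary~\ref{cor:improved-dfko-ineq}.
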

	\begin{theorem}
		\label{thm:junta}
		Let $f : \{-1, 1\}^n \to [-1, 1]$, $k\geq 1$, and $\epsilon > 0$. Suppose
		\[
		\sum_{|S| > k} \widehat{f}(S)^2 \leq \exp(-O(k^2)/\eps)
		\]
		Then $f$ is an $(\epsilon, 2^{O(k^3)}/\epsilon^2)$-junta.
	\end{theorem}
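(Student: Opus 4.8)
The plan is to prove the contrapositive: if $f$ is not an $(\eps, 2^{O(k^3)}/\eps^2)$-junta, then $\sum_{|S|>k}\wh{f}(S)^2 > \exp(-O(k^2)/\eps)$. This is essentially the content of Corollary~\ref{cor:improved-dfko-tail} (with a different accounting of the junta size), and I would reprove it by the same route: peel off the coordinates on which $f$ carries large low-degree influence, truncate what remains to degree $\le k$, transplant it to Gaussian space via a partial Invariance Principle, and there invoke the sharp Gaussian DFKO inequality (Theorem~\ref{thm:gaussian-dfko-ineq}); since $f$ is bounded by $1$, the fact that the transplanted polynomial must exceed $3$ in absolute value with non-negligible probability forces a large level-$(>k)$ Fourier weight.

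Concretely, fix a threshold $\tau=\tau(k,\eps)$ to be pinned down at the very end, and set $J=\{\,i:\Inf^{\le k}_i[f]>\tau\,\}$. Since $\sum_i\Inf^{\le k}_i[f]=\sum_{1\le|S|\le k}|S|\,\wh f(S)^2\le k\|f\|_2^2\le k$, we get $|J|\le k/\tau$, and $\tau$ will be chosen so that this is $\le 2^{O(k^3)}/\eps^2$. Put $g=f-f^{\subseteq J}$. As $f^{\subseteq J}$ is an average of $f$ over the coordinates outside $J$, it has range $[-1,1]$ and depends only on the $\le|J|$ coordinates of $J$; hence $g$ has range $[-2,2]$, and since $f$ is not an $(\eps,2^{O(k^3)}/\eps^2)$-junta we have $\|g\|_2^2=\|f-f^{\subseteq J}\|_2^2>\eps$. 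Fourier-analytically, $\wh g(S)=\wh f(S)$ for $S\not\subseteq J$ and $\wh g(S)=0$ for $S\subseteq J$, so $g$ has no constant term and $g^{>k}$ has Fourier support contained in that of $f^{>k}$. Let $h=g^{\le k}$. If $\|g-h\|_2^2=\|g^{>k}\|_2^2>\eps/2$ then $\sum_{|S|>k}\wh f(S)^2\ge\|g^{>k}\|_2^2>\eps/2$, far more than the target, and we are done; so assume $\|g-h\|_2^2\le\eps/2$, whence $\|h\|_2^2=\|g\|_2^2-\|g-h\|_2^2>\eps/2$, $h$ has degree $\le k$ and no constant term, and $\Inf_i[h]=\Inf^{\le k}_i[f]\le\tau$ for every $i\notin J$.

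Now let $\wt{h}$ be the mixed function with the same multilinear expansion as $h$ but with the $J$-coordinates kept uniform $\pm1$ and the others replaced by independent standard Gaussians; by a coordinate-restricted application of \cite[Theorem~3.19]{MOO10} (hybridizing only over the non-$J$ coordinates, exactly as in Corollary~\ref{cor:improved-dfko-ineq}) the L\'evy distance between $h$ and $\wt{h}$ is a quantity depending only on $k,\tau$ that tends to $0$ as $\tau\to0$. I would then bound $\Pr[|\wt{h}|>3]$ as in Lemma~\ref{lem:decoupledtail}: condition on the $\pm1$ restriction $x_J$; then $\wt{h}_{x_J}$ is a degree-$\le k$ polynomial in independent Gaussians with no constant term, so its variance is $\sigma^2_{x_J}=\E[\wt{h}_{x_J}^2]$, and $x_J\mapsto\sigma^2_{x_J}$ is a degree-$2k$ polynomial in $x_J$ with mean $\|h\|_2^2>\eps/2$, so Theorem~\ref{thm:hypercon} gives $\Pr_{\bx_J}[\sigma^2_{\bx_J}\ge\eps/2]\ge\tfrac14e^{-4k}=2^{-O(k)}$. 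On that event, Theorem~\ref{thm:gaussian-dfko-ineq} applied to $\wt{h}_{x_J}$ normalized to unit variance, at threshold $O(1/\sqrt\eps)$, yields $\Pr[|\wt{h}_{x_J}|>3]\ge\exp(-O(k^2/\sigma^2_{x_J}))\ge\exp(-O(k^2/\eps))$; averaging over $x_J$ gives $\Pr[|\wt{h}|>3]\ge 2^{-O(k)}\exp(-O(k^2/\eps))=\exp(-O(k^2/\eps))$, with the hidden constant independent of $\tau$. Fixing that constant first and then choosing $\tau$ small enough that the L\'evy distance drops below half of it, we get $\Pr[|h|>3]\ge\tfrac12\exp(-O(k^2/\eps))$. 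Finally, since $|g|\le2$ pointwise, $|h|>3$ forces $|g-h|\ge|h|-|g|>1$, so $\sum_{|S|>k}\wh f(S)^2\ge\|g-h\|_2^2\ge\Pr[|g-h|>1]\ge\exp(-O(k^2)/\eps)$, as desired.

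The main obstacle is reconciling the two demands on $\tau$ in the last paragraph: it must be small enough that the Invariance error is swamped by $\exp(-O(k^2)/\eps)$, yet large enough that $|J|\le k/\tau$ stays within the claimed junta size $2^{O(k^3)}/\eps^2$. Verifying that the constant inside $\exp(-O(k^2/\eps))$ genuinely does not depend on $\tau$ — so that $\tau$ can be chosen last — is precisely the subtle ``order of quantifiers'' in the DFKO argument \cite{DFKO07}, and the exact junta-size exponent one obtains is governed by whichever quantitative form of the Invariance Principle is used in the L\'evy-distance bound; obtaining exactly the stated $2^{O(k^3)}/\eps^2$ (rather than, say, the $2^{O(k^2/\eps)}$ of Corollary~\ref{cor:improved-dfko-tail}) is a matter of optimizing this trade-off. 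Everything else — the range bounds on $g$, the degree-$2k$ structure of $\sigma^2_{x_J}$, and the partial (coordinate-restricted) Invariance Principle — is routine given the cited results.
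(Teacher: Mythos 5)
Your argument follows essentially the same route as the paper's proof of the corresponding result, Corollary~\ref{cor:improved-dfko-tail}: threshold the degree-$\le k$ influences to form $J$, pass to $g=f-f^{\subseteq J}$ and $h=g^{\le k}$, transplant the non-$J$ coordinates to Gaussians by a coordinate-restricted application of \cite[Theorem~3.19]{MOO10}, lower-bound $\Pr[|\wt{h}|>3]$ by combining Theorem~\ref{thm:hypercon} (applied to the degree-$2k$ polynomial $x_J\mapsto\sigma^2_{x_J}$) with the sharp Gaussian bound of Theorem~\ref{thm:gaussian-dfko-ineq}, and convert this into level-$(>k)$ Fourier weight using $|g|\le 2$. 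Those steps are correct and are carried out at the same level of detail as in the paper.

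The genuine gap is the junta size, and it is not, as you suggest, ``a matter of optimizing this trade-off.'' To invoke the hypothesis that $f$ is not an $(\eps,2^{O(k^3)}/\eps^2)$-junta and conclude $\|f-f^{\subseteq J}\|_2^2>\eps$, you need $|J|\le k/\tau\le 2^{O(k^3)}/\eps^2$, i.e.\ $\tau\ge k\eps^2 2^{-O(k^3)}$. On the other hand, the constant $c$ in the target probability $\exp(-ck^2/\eps)$ is fixed \emph{before} $\tau$, and the L\'evy-distance bound of \cite[Theorem~3.19]{MOO10} decays only polynomially in $\tau$ (with an exponent that is at best an absolute constant, and in the stated form of order $1/k$); hence forcing the invariance error below $\exp(-ck^2/\eps)$ requires $\tau\le\exp(-\Omega(k^2/\eps))$. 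For fixed $k$ and small $\eps$ these two demands are flatly incompatible, since $\exp(-\Omega(k^2/\eps))\ll \eps^2 2^{-O(k^3)}$. So your argument certifies only a junta of size $k/\tau=2^{O(k^2/\eps)}$ (or $2^{O(k^3/\eps)}$ with the $\tau^{1/(4k+1)}$-type exponent), i.e.\ exponential in $1/\eps$ --- which is exactly what the paper's Corollary~\ref{cor:improved-dfko-tail} claims, and the paper explicitly remarks that this weaker junta size seems to be a byproduct of the use of Invariance. The same obstruction hits the alternative route of first proving the Boolean inequality (Corollary~\ref{cor:improved-dfko-ineq}, whose influence requirement $\exp(-Ct^2k^2)$ becomes $\exp(-Ck^2/\eps)$ at $t\sim 1/\sqrt{\eps}$) and then rerunning the junta argument of \cite{DFKO07}. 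To obtain a junta size of the form $2^{\mathrm{poly}(k)}/\eps^2$ one needs a tail bound on the cube whose influence hypothesis is only polynomial in $1/t^2$ (as in the original DFKO Inequality), which is precisely what is lost when the tail bound is imported from Gaussian space; your proposal as written therefore proves a weaker statement than the one claimed.
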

	
	We get rid of $\log k$ gap between upper and lower bound in \cite{DFKO}. These theorems are sharp and can be achieved by some function given in \cite{DFKO}. The tightness of Theorem~\ref{thm:tail} and \ref{thm:junta} also indicate Theorem~\ref{thm:one-variable} is tight.
	
	We also have a minor application of fully decoupling, which prove a conjecture in \cite{aaronson2014} that a classical query algorithm for fully decoupled (block-multilinear) functions also works for other Boolean functions.
	
	\begin{theorem}
		\label{thm:query}
		Suppose $f : \{-1,1\}^{n} \to [-1,1]$ has degree at most $k$. Then there exists a classical randomized algorithm that, on input $x \in \{-1,1\}^{n}$, non adaptively queries $O(\frac1{\eps^2}C^kn^{1-1/k})$ bits of $x$, where $C$ is a universal constant and then outputs an estimate $f^*$ such that with high probability,
		\[
		|f^* - f(x)| \leq \epsilon.
		\]
	\end{theorem}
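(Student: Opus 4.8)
The plan is to reduce the general bounded, degree-$k$ case to the block-multilinear case already handled by Aaronson and Ambainis (Theorem~\ref{thm:aa1}), exactly along the lines of the proof of Theorem~\ref{thm:our-aa}. First I would replace $f$ by its fully decoupled version $\dec{f}$, which lives on $kn$ variables and is block-multilinear of degree at most~$k$. The crucial quantitative input is Corollary~\ref{cor:classic-max}, which gives $\|\dec{f}\|_\infty \le (2e)^k$ for $f : \{\pm 1\}^n \to [-1,+1]$; this is exactly what keeps all parameter losses at the level $\exp(O(k))$ rather than $k^{O(k)}$.

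Next I would normalize and, if necessary, homogenize. Set $g = (2e)^{-k}\dec{f}$, so that $g : \{\pm 1\}^{kn} \to [-1,+1]$ is block-multilinear of degree at most~$k$, and, since $\dec{f}(x,\dots,x) = f(x)$ holds identically, we have $f(x) = (2e)^k\, g(x,\dots,x)$. If the black-box block-multilinear estimation routine additionally requires homogeneity, I would first pad every monomial of~$g$ with fresh dummy variables frozen to $+1$ so as to raise its degree to exactly~$k$; this changes neither the values of $g$ nor its $\ell_\infty$ bound, and queries to the dummy coordinates cost nothing.

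Now, given query access to an input $x \in \{\pm 1\}^n$ and target accuracy $\eps$, I would run the algorithm of Theorem~\ref{thm:aa1} on $g$ with accuracy parameter $\eps_1 = (2e)^{-k}\eps$, simulating each query to a coordinate of the block input $(x,\dots,x)$ by a single query to the corresponding coordinate of~$x$, and finally multiply the returned estimate $g^*$ by $(2e)^k$. Correctness is immediate: if $|g^* - g(x,\dots,x)| \le \eps_1$ with high probability, then $f^* := (2e)^k g^*$ satisfies $|f^* - f(x)| \le (2e)^k \eps_1 = \eps$. For the query count, Theorem~\ref{thm:aa1} applied to the $O(kn)$-variable function $g$ makes $2^{O(k)}\bigl(kn/\eps_1^2\bigr)^{1-1/k}$ queries; substituting $\eps_1 = (2e)^{-k}\eps$ turns $\eps_1^{-2}$ into $(2e)^{2k}\eps^{-2}$, and the factors $(2e)^{2k(1-1/k)}$ and $k^{1-1/k}$ are absorbed into $2^{O(k)}$, leaving $2^{O(k)}(n/\eps^2)^{1-1/k} = O\bigl(C^k\, n^{1-1/k}/\eps^2\bigr)$ for a universal constant~$C$ (using $\eps \le 1$).

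I do not expect a genuine obstacle: this is a purely black-box reduction. The only points needing care are (i) the $\|\dec{f}\|_\infty \le (2e)^k$ bound, which is Corollary~\ref{cor:classic-max} and ultimately rests on Kwapie\'{n}'s lemma bounding each homogeneous part of a bounded function; (ii) checking that simulating queries to $(x,\dots,x)$ by queries to~$x$ is legitimate --- it is, since the block variables are literal copies of the original variables; and (iii) the bookkeeping of the accuracy rescaling, which costs only an additional $\exp(O(k))$ factor in the query count and is absorbed into~$C^k$.
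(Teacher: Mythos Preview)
Your proposal is correct and follows essentially the same approach as the paper's proof (indeed, it is the proof of Theorem~\ref{thm:our-aa}, which is the main-text incarnation of this statement): pass to the full decoupling~$\dec{f}$, invoke Corollary~\ref{cor:classic-max} to bound $\|\dec{f}\|_\infty \le (2e)^k$, normalize, pad with dummy $+1$ variables to restore homogeneity, and apply Theorem~\ref{thm:aa1} with rescaled accuracy $\eps_1 = (2e)^{-k}\eps$. The bookkeeping and care points you flag are exactly the ones the paper addresses.
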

	
	\subsection{Orgainzation}
		Section~\ref{sec:prelimiaries} contains definitions and notations. Section~\ref{sec:one-variable} shows the one-variable decoupling inequalities, proving Theorem~\ref{thm:one-variable}. We prove Theorem~\ref{thm:tail} and \ref{thm:junta} in Section~\ref{sec:tailbound}, as an application of one-variable decoupling. We show another application for fully decoupling in Section~\ref{sec:application2}, proving Theorem~\ref{thm:query}. A bug fixing of a lemma in \cite{aaronson2014} can be found in Appendix~\ref{sec:appendix}.	
	\section{Preliminaries}
	\label{sec:prelimiaries}
	\subsection{Basic definitions and decoupling}
	We deonte $[n]$ as $\{1, \dots, n\}$. We denote
	\[
	I_n^k = \{(i_1, \dots, i_k : i_j \in \N, \leq i_j \leq n, i_{j_1} \neq i_{j_2} \text{ if } j_1 \neq j_2)\}
	\]
	We define $x^{S} = \prod_{i \in S} x_i$ for any $S \subseteq [n]$, and we express a multilinear polynomial $f : \R^n \to \R$ as
	\[
	f(x) = \sum_{S \subseteq [n]} \widehat{f}(S) x^S.
	\]
	Where $\widehat{f}(S)$ are real numbers. If we only focus on Boolean inputs, saying $x \in \{-1,1\}^n$, this expression is also known as the Fourier transform of $f$. It is easy to check that polynomial $f$ is an odd function if and only if $\widehat{f}(S) = 0$ for all $|S|$ even and 	$f$ is an even function if and only if $\widehat{f}(S) = 0$ for all $|S|$ odd. We introduce the notation
	\[
	\Var[f] = \sum_{S \neq \emptyset} \widehat{f}(S)^2 \qquad \Inf_i[f] = \sum_{S \ni i} \widehat{f}(S)^2 \qquad \MaxInf[f] = \max_i \Inf_i[f].
	\]
	
	Note that $\Var[f]$ is the variance of function $f$ on Gaussian space $N(0,1)^n$ or Boolean cube $\{-1, 1\}^n$.
	
	We say $f$ is homogeneous with degree $k$ if $\widehat{f}(S) = 0$ for all $|S| \neq k$.
	
	\begin{definition}[One-variable decoupling]\label{def:one-variable}
		For any multilinear polynomial $f : \R^n\to\R$,
		\[
		f(x) = \sum_{S \subseteq[n]} \widehat{f}(S) x^S,
		\]
		the \emph{one-variable decoupled} version of $f$ is $\breve{f} : \R^n \times \R^n \to \R$ and
		\[
		\breve{f}(y, z) = \sum_{\emptyset \neq S \subseteq [n]} \widehat{f}(S) \sum_{i \in S} y_i z^{S/\{i\}}
		\]
		where $y, z \in \R^n$.
	\end{definition}
	
	Here each term of $\breve{f}$ contains only one variable from $y$, so the one-variable decoupled version $\breve{f}(y, z)$ is a linear function on vector $y$. However unlike the full decoupling, $\breve{f}(x, x) = \sum_{S \subseteq[n]}|S|\widehat{f}(S)\prod_{i\in S} x_i \neq f(x)$.
	
	We can also express the mulitilinear polynomial $f$ of degree $k$ as
	\[
	f(x) = a_0 + \sum_{t = 1}^k \sum_{i \in I_n^{t}} a_{i_1,\dots,i_{t}} x_{i_1}\dots x_{i_{t}}
	\]
	where $a_i$ are symmetric in their subindices, that is $a_{i_1, \dots, i_{t}} = a_{\sigma({i_1}), \dots, \sigma({i_{t})}}$ for any permutation $\sigma$ on $1, \dots, t$. Or that is to say,
	\[
	a_0 = \widehat{f}(\emptyset), \quad a_{i_1, \dots, i_{t}} = \frac1{t!} \widehat{f}(\{i_1, \dots, i_{t}\})
	\]
	Then we define fully decoupling as following:
	\begin{definition}[Fully decoupling]
		For any multilinear polynomial $f : \R^n \to \R$ of degree $k$,
		\[
		f(x) = a_0 + \sum_{t= 1}^k \sum_{i \in I_n^{t}} a_{i_1,\dots,i_{t}} x_{i_1}\dots x_{i_{t}}
		\]
		the fully decoupled version of $f$ is $\tilde{f} : \R^{n \times k} \to \R$
		\begin{align*}
		\tilde{f}(x^1, \dots, x^k) &= a_0 + \sum_{t = 1}^k \frac{(k - t)!}{t!}\frac{(n-k)!}{(n-t)!}\sum_{i \in I_n^k}\sum_{r\in I_k^{t}}a_{i_{r_1}, \dots, i_{r_t}} x_{i_{r_1}}^{r_1} \dots x_{i_{r_t}}^{r_t}\\
		&=a_0 + \sum_{t = 1}^k \frac{(k - t)!}{t!}\sum_{i \in I_n^k}\sum_{r\in I_k^{t}}a_{i_1, \dots, i_t} x_{1}^{r_1} \dots x_{t}^{r_t}
		\end{align*}
	\end{definition}
	
	\subsection{Properities of Boolean/Gaussian function}
	In this paper, we will use $\bx, \by, \bz \dots$ to denote uniformly random variable on the Boolean cube $\{-1, 1\}^n$, and $\bX, \bY, \bZ$ to denote random standard Gaussian vectors in distribution $N(0,1)^n$.Suppose $\bx^1, \dots, \bx^M \sim \{-1, 1\}^n$ are independent random Boolean vectors. $\bX = \frac{1}{\sqrt{M}} (\bx^1+ \dots + \bx^M)$ is close to the standard Gaussian distribution $N(0,1)^n$ when $M \to \infty$. Since we can use the sum of a large number of random Boolean bits to ``simulate'' Gaussian inputs, multilinear polynomial functions on Gaussian space is a special case of Boolean function. A further key point is this ``simulating'' technique preserves polynomial degree. That is, if $f(\bX)$ is a multilinear polynomial degree-$k$ and we substitute $\bX = \frac{1}{\sqrt{M}} (\bx^1+ \dots + \bx^M)$, we get another multilinear polynomial $f'(\bx_1, \dots, \bx_M) = f\frac{1}{\sqrt{M}} (\bx_1+ \dots + \bx_M)$ where the degree of $f'$ is still $k$. Therefore the following lemma (Theorem~9.24 from \cite{Ryan2014}) holds for both Boolean and Gaussian input.
	\begin{lemma}
		\label{lem:exceedmean}
		Let $f : \R^n \to \R$ be a polynomial of degree at most $k$. Then
		\[
		\Pr_{\bx \sim \{-1,1\}^n}\left[f(\bx) \geq \E[f]\right] \geq \frac14 e^{-2k}
		\qquad
		\Pr_{\bX \sim N(0,1)^n}\left[f(\bX) \geq \E[f]\right] \geq \frac14 e^{-2k}
		\]
	\end{lemma}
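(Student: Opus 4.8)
The plan is to reduce to a normalized mean-zero polynomial and then combine the hypercontractive inequality for low-degree polynomials with a one-sided second-moment (Paley--Zygmund-flavoured) estimate, optimizing a free H\"older exponent to extract the constant $e^{-2k}$. First I would dispose of the trivial case: if $f$ is constant then $\Pr[f(\bx)\ge\E f]=1$ and there is nothing to prove, so assume $f$ is nonconstant. Replacing $f$ by $(f-\E f)/\|f-\E f\|_2$ changes neither the degree bound nor the event in question (on $\{\pm1\}^n$ a degree-$\le k$ polynomial has a unique multilinear representative, so this rescaling is legitimate), so we may assume $\E f=0$ and $\|f\|_2=1$; it then suffices to lower bound $\Pr[f(\bx)>0]$.

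The soft heart of the argument is the following. Since $\E f=0$, splitting the expectation over $\{f>0\}$ and $\{f\le 0\}$ gives $\E[f\cdot\mathbf{1}_{f>0}]=\E[|f|\cdot\mathbf{1}_{f\le 0}]=\tfrac12\|f\|_1$. For any conjugate pair $p,p'$ with $\tfrac1p+\tfrac1{p'}=1$, H\"older's inequality applied to $\E[f\cdot\mathbf 1_{f>0}]$ yields $\tfrac12\|f\|_1\le\|f\|_p\cdot\Pr[f>0]^{1-1/p}$, that is, $\Pr[f>0]\ge\bigl(\|f\|_1/2\|f\|_p\bigr)^{p/(p-1)}$. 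So it remains to produce an upper bound on $\|f\|_p$ and a lower bound on $\|f\|_1$.

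Hypercontractivity supplies both. The $(2,p)$-hypercontractive inequality for degree-$\le k$ polynomials of uniform $\pm1$ bits gives $\|f\|_p\le(p-1)^{k/2}\|f\|_2=(p-1)^{k/2}$ for every $p\ge 2$; interpolating $\|f\|_2$ between $\|f\|_1$ and $\|f\|_p$ by log-convexity of $L^q$-norms and feeding in this bound produces the matching lower bound $\|f\|_1\ge(p-1)^{-kp/(2(p-2))}$. Substituting both estimates and simplifying the exponents gives $\Pr[f>0]\ge(\tfrac12)^{p/(p-1)}(p-1)^{-kp/(p-2)}$ for \emph{every} $p>2$. Since a fixed number that exceeds $h(p)$ for all $p$ also exceeds $\sup_p h(p)$, I may take $p\to 2^+$: there $(\tfrac12)^{p/(p-1)}\to\tfrac14$ and $(p-1)^{-p/(p-2)}\to e^{-2}$, yielding $\Pr[f>0]\ge\tfrac14 e^{-2k}$ and a fortiori $\Pr[f\ge\E f]\ge\tfrac14 e^{-2k}$. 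For the Gaussian statement I would either rerun this argument verbatim, since degree-$\le k$ Gaussian chaoses obey the same $(2,p)$-hypercontractive bound, or transfer from the $\pm1$ case by writing $\bX=(\bx^{(1)}+\cdots+\bx^{(M)})/\sqrt M$: this exhibits $f(\bX)$ as a limit in distribution (and in every moment) of degree-$\le k$ polynomials of $\pm1$ bits, and since a nonconstant polynomial of Gaussians has no atoms, $\E f(\bX)$ is a continuity point of the limiting distribution function, so the tail bound survives the limit.

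The only genuine subtlety --- and the part worth being careful about --- is extracting the \emph{sharp} constant $e^{-2k}$ rather than the cruder $\tfrac14\cdot 9^{-k}$ that the degree-$k$ Bonami ($L^4$) bound $\|f\|_4\le 3^{k/2}\|f\|_2$ delivers; this forces one to invoke the full $(2,p)$-hypercontractive theorem with $p$ a free parameter and to justify the passage $p\to 2^+$ via the supremum remark above. Everything else --- the splitting identity, the H\"older step, the norm interpolation, and the Gaussian transfer --- is routine.
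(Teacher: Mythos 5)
Your proof is correct. The paper does not actually prove this lemma itself --- it quotes it from \cite{OD14} (Theorems 9.24, 10.23) --- and your argument is essentially the standard one given there: the splitting identity $\E[f\,\mathbf{1}_{f>0}]=\tfrac12\|f\|_1$ for mean-zero $f$, H\"older, and $(2,p)$-hypercontractivity (identical in the Gaussian setting, or via the CLT transfer you describe); your optimization over $p\to 2^+$ simply re-derives inline the $L^1$--$L^2$ comparison $\|f\|_2\le e^{k}\|f\|_1$ that the textbook proof invokes as a separate theorem before applying Cauchy--Schwarz.
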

	
	On the other hand, \cite{mossel2005noise} showed that any low-degree Boolean function with small influences can be approximable if we switch the input to a Gaussian space.
	
	\begin{lemma}[Invariance Principle]
	\label{lem:invariance}
	Assume $f : \R^n \to \R$ is multilinear polynomial satisfying $\Var[f] = 1$, $\deg(f) \leq k$ and $\MaxInf(f) \leq \tau$. Then
	\[
	\left|\Pr_{\bx \sim \{-1,1\}^n}[f(\bx) \geq t] - \Pr_{\bX \sim N(0,1)^n}[f(\bX) \geq t]\right| \leq O(k\tau^{1/(4k+1)})
	\]
	\end{lemma}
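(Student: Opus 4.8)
\textbf{Proof plan for Lemma~\ref{lem:decoupledtail}.}

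The plan is to decompose the event $\{|f(\by,\bz)| > u\}$ by first conditioning on $\bz$ and then using the Gaussianity of $f(\cdot,z)$ in its first argument.  Concretely, write $v(z) = \sum_{i=1}^n g_i(z)^2$.  This is a polynomial of degree at most $2(k-1)$ in the variables $z_1,\dots,z_n$, and by the variance identity~\eqref{eqn:variance-eq} we have $\E[v(\bz)] = \sigma^2$.  The first step is to apply Theorem~\ref{thm:hypercon} (in its Gaussian form, with degree parameter $2k-1$, say) to conclude that $v(\bz)$ exceeds its mean with probability at least $\tfrac14 e^{-2(2k-1)} = \exp(-O(k))$; that is, $\Pr[v(\bz) > \sigma^2] \geq \exp(-O(k))$.

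The second step exploits the fact that, conditioned on $\bz = z$, the random variable $f(\by,z) = \sum_i \by_i g_i(z)$ is a linear combination of independent standard Gaussians, hence itself distributed as $\normal(0, v(z))$.  Therefore, on the event $v(z) > \sigma^2$, a standard Gaussian tail bound gives $\Pr_{\by}[|f(\by,z)| > u] \geq \Pr_{\bg \sim \normal(0,\sigma^2)}[|\bg| > u] = \Omega(\exp(-u^2/(2\sigma^2)))$ (up to a polynomial-in-$(u/\sigma)$ factor that is absorbed into the $\Omega$ and into the exponent's constant).  The third and final step simply combines these two estimates using the independence of $\by$ and $\bz$: integrating the conditional lower bound over the event $\{v(\bz) > \sigma^2\}$ yields
\[
    \Pr[|f(\by,\bz)| > u] \;\geq\; \Pr[v(\bz) > \sigma^2] \cdot \Omega\!\left(e^{-u^2/2\sigma^2}\right) \;\geq\; \exp(-O(k)) \cdot \exp(-O(u^2/\sigma^2)) \;=\; \exp(-O(k + u^2/\sigma^2)),
\]
which is exactly the claimed bound.

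The only genuinely substantive point is the first step: one needs an anti-concentration-type statement asserting that the degree-$2(k-1)$ polynomial $v(\bz)$ is not too often far below its mean.  This is precisely what the hypercontractivity-based Theorem~\ref{thm:hypercon} supplies (noting that $v$ is nonconstant as long as $f \not\equiv 0$, and that a constant $v$ is trivial anyway), so there is really no obstacle here — the lemma is essentially a clean two-line argument once the right conditioning is set up.  A minor bookkeeping item is to make sure the polynomial-factor losses in the Gaussian tail estimate $\Pr[|\bg| > u] = \Omega(e^{-u^2/2\sigma^2})$ are legitimately absorbed; since we only claim an $\exp(-O(\cdot))$ bound with an unspecified constant, any such polynomial factor is harmless.
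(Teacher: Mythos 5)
There is a fundamental mismatch: the statement you were asked to prove is the Invariance Principle (a bound of the form $\bigl|\Pr_{\bx \sim \{-1,1\}^n}[f(\bx) \geq t] - \Pr_{\bX \sim \Normal(0,1)^n}[f(\bX) \geq t]\bigr| \leq O(k\tau^{1/(4k+1)})$ for degree-$k$ multilinear $f$ with all influences at most $\tau$), but your write-up is a proof plan for Lemma~\ref{lem:decoupledtail}, the tail-probability \emph{lower} bound $\Pr[|f(\by,\bz)| > u] \geq \exp(-O(k + u^2/\sigma^2))$ for a one-block decoupled polynomial of Gaussians. As an argument for Lemma~\ref{lem:decoupledtail} your plan is fine and coincides with the paper's own proof (condition on $\bz$, apply Theorem~\ref{thm:hypercon} to $v(\bz) = \sum_i g_i(\bz)^2$ to get $\Pr[v(\bz) > \sigma^2] \geq \exp(-O(k))$, then use that $f(\by,z) \sim \Normal(0,v(z))$ and a Gaussian tail estimate), but it does not prove, or even bear on, the stated lemma.

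The Invariance Principle is a two-sided comparison between the \emph{Boolean} and \emph{Gaussian} distributions of the same polynomial, and its proof requires entirely different machinery: a Lindeberg-style hybrid argument that replaces the $\pm 1$ inputs by Gaussians one coordinate at a time, a Taylor expansion of a smooth test function in which the error terms are third (or fourth) powers of the discrete derivatives, hypercontractivity to bound those higher moments by powers of the influences (this is where the hypothesis $\MaxInf(f) \leq \tau$ and the exponent $\tau^{\Theta(1/k)}$ enter), and finally an anti-concentration step (e.g.\ Carbery--Wright for Gaussian polynomials) to pass from closeness under smooth test functions to closeness of the cumulative distribution functions at the threshold $t$. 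Your argument, by contrast, works entirely within the Gaussian setting, uses the special one-block decoupled structure $f(y,z) = \sum_i y_i g_i(z)$ (which the stated lemma does not assume), and produces only a one-sided tail lower bound rather than a bound on the difference of two probabilities; none of these ingredients can be repurposed to yield the claimed $O(k\tau^{1/(4k+1)})$ error. In the paper this Invariance Principle is not proved at all but quoted from Mossel--O'Donnell--Oleszkiewicz, so if you were expected to supply a proof you would need to reproduce (a version of) their replacement argument, not the decoupled-tail computation you have written.
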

	
	\section{One-variable decoupling}\label{sec:one-variable}
	We will prove three variants of one-variable decoupling under some similar hypotheses. Specifically, they will be concerned with a multilinear polynomial $f$ over random variables $\bX$ and its one-varible-decoupled version $\breve{f}$ over random variables $\bY, \bZ$. We will lay out three hypotheses as following:
	\begin{itemize}
		\item[\textbf{H1}] Let $\bX, \bY, \bZ$ be independent length-$n$ vectors where each $\bX_i$(respectively $\bY_i$, $\bZ_i$) is a standard Gaussian $N(0,1)$ independently.
		\item[\textbf{H2}] Let $\bX, \bY, \bZ$ be independent length-$n$ vectors where each $\bX_i$(respectively $\bY_i$, $\bZ_i$) is a uniformly random $\pm 1$ bit independently. Let $f$ be homogeneous.
		\item[\textbf{H3}] Let $\bX, \bY, \bZ$ be independent length-$n$ vectors where each $\bX_i$(respectively $\bY_i$, $\bZ_i$) is a uniformly random $\pm 1$ bit independently.
	\end{itemize}
	
	\begin{theorem}
		\label{thm:one-variable-full}
		Let $\bX, \bY, \bZ \sim N(0, 1)^n$. Let $f : \R^n \to \R$ be a multilinear polynomial with degree $k$ and $\breve{f}$ be the one-variable decoupled version of $f$.
		
		Then for any convex non-decreasing function $\Phi : [0, +\infty) \to [0, +\infty)$,
		\[
		\E_{\bY, \bZ}\left[\Phi\left(|\breve{f}(\bY, \bZ)|\right)\right] \leq \E_{\bX}[\Phi(C_k|f(\bX)|)],
		\]
		and for tail bound of any $t \geq 0$,
		\[
		\Pr_{\bY, \bZ}[|\breve{f}(\bY, \bZ)| \geq t] \leq D_k \Pr_{\bX}\left[C_k |f(\bX)| \geq t\right]
		\]
		where
		\[
		C_k = \begin{cases}
		O(k) & \text{under \textbf{H1}},\\
		O(k^{3/2}) &  \text{under \textbf{H2}},\\
		O(k^2) &  \text{under \textbf{H3}},
		\end{cases}
		\]
		and
		\[
		D_k = \begin{cases}
		O(k) & \text{under \textbf{H1}},\\
		exp(O(k \log k)) &  \text{under \textbf{H2, H3}}.\\
		\end{cases}
		\]
	\end{theorem}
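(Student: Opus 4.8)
\textbf{Proof plan for Theorem~\ref{thm:one-variable-full}.}

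The strategy is to reduce everything to the structural identity $\odec{f}(y,z) = \sum_{i=1}^m c_i f(\alpha_i y + \beta_i z)$ supplied by Lemma~\ref{lem:summation}, where $m = O(k)$, $\|c\|_1 \leq C_k$, the pairs satisfy $\alpha_i^2 + \beta_i^2 = 1$ (under \textbf{H1}) or $|\alpha_i| + |\beta_i| = 1$ (under \textbf{H2}, \textbf{H3}), and $\min(|\alpha_i|,|\beta_i|) \geq 1/O(C_k)$. Once this identity is in hand, each of the two inequalities is a short argument, and the values of $C_k$ and $D_k$ are exactly those coming out of the lemma (with $D_k$ in the Boolean cases also absorbing a hypercontractivity loss).

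\textbf{Step 1: the Gaussian case \textbf{H1}.} Here $\alpha_i \by + \beta_i \bz \sim \Normal(0,1)^n$ exactly, since $\alpha_i^2 + \beta_i^2 = 1$ and Gaussians are rotationally stable; so $f(\alpha_i\by+\beta_i\bz)$ is distributed identically to $f(\bx)$. For the $\Phi$-inequality, apply the triangle inequality inside $\Phi$, then write $\|c\|_1\sum_i |c_i|\|f(\cdots)\|/\|c\|_1$ as a convex combination and use Jensen/convexity of $\Phi$ together with monotonicity to pull $\Phi$ inside; each resulting term equals $\E[\Phi(\|c\|_1\|f(\bx)\|)] \leq \E[\Phi(C_k\|f(\bx)\|)]$. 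For the tail bound: if $\|\odec f(y,z)\| > C_k t \geq \|c\|_1 t$, then by the triangle inequality some summand has $\|c_i\|\,\|f(\alpha_iy+\beta_iz)\| $ contributing more than $|c_i| t$, so some $i$ has $\|f(\alpha_i y + \beta_i z)\| > t$; a union bound over $m$ terms gives $\Pr[\|\odec f\| > C_k t] \leq \sum_i \Pr[\|f(\alpha_i\by+\beta_i\bz)\|>t] = m\Pr[\|f(\bx)\|>t]$, so $D_k = m = O(k)$.

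\textbf{Step 2: the Boolean cases \textbf{H2}, \textbf{H3}.} Now $\alpha_i\by + \beta_i\bz$ is not a $\pm1$ vector, so we cannot argue distributional equality directly. Instead, for each $i$ build a $\pm1$ vector $\bx^{(i)}$ coordinatewise by setting $\bx^{(i)}_j = \operatorname{sgn}(\alpha_i)\by_j$ with probability $|\alpha_i|$ and $\operatorname{sgn}(\beta_i)\bz_j$ with probability $|\beta_i|$ (using $|\alpha_i|+|\beta_i|=1$), independently across $j$; then $\bx^{(i)} \sim \{\pm1\}^n$. Multilinearity gives the conditional-expectation identity $f(\alpha_i y + \beta_i z) = \E[f(\bx^{(i)}) \mid (\by,\bz)=(y,z)]$. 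For the $\Phi$-inequality, Jensen over the inner randomness shows $\E[\Phi(\|c\|_1\|f(\alpha_i\by+\beta_i\bz)\|)] \leq \E[\Phi(\|c\|_1\|f(\bx^{(i)})\|)] = \E[\Phi(\|c\|_1\|f(\bx)\|)]$, and the rest of the convexity argument runs as before, yielding $C_k$ as in the lemma ($O(k^2)$ under \textbf{H2}, $O(k^{3/2})$ under \textbf{H3}). For the tail bound we lose more: as in Step~1 we find an $i$ with $\Pr[|f(\alpha_i\by+\beta_i\bz)|>t] \geq \frac1m\Pr[|\odec f(\by,\bz)|>C_kt]$; but now conditioning on $(y,z)$ with $|f(\alpha_iy+\beta_iz)|>t$ gives only $|\E[f(\bx^{(i)})\mid y,z]| > t$, and we must convert this into an actual tail event for $\bx^{(i)}$. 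This is where Theorem~\ref{thm:hypercon} enters: since each coordinate of $\bx^{(i)}$ equals $\pm1$ with probability at least $\min(|\alpha_i|,|\beta_i|) \geq 1/O(C_k) = 1/\poly(k)$, the hypercontractive bound gives $\Pr[|f(\bx^{(i)})| > t \mid y,z] \geq \frac14(e^2/2\lambda)^{-k} = k^{-O(k)}$; multiplying through gives $D_k = m \cdot k^{O(k)} = k^{O(k)}$ under \textbf{H2}, \textbf{H3}.

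\textbf{Main obstacle.} The genuinely delicate point is Lemma~\ref{lem:summation} itself — producing a representation of $\odec f$ as a combination of only $O(k)$ rescaled copies of $f$ with $\ell_1$-small coefficients and all $|\alpha_i|,|\beta_i|$ bounded below — but that is proved separately. Given the lemma, the only subtlety in the present theorem is the Boolean tail bound: one must resist trying to compare distributions directly and instead route through the martingale/conditioning identity, and then one must verify that the hypercontractivity constant degrades only to $k^{-O(k)}$ because the bias parameter $\lambda = \min(|\alpha_i|,|\beta_i|)$ is $1/\poly(k)$ rather than exponentially small — which is exactly why the lower bound $\min(|\alpha_i|,|\beta_i|) \geq 1/O(C_k)$ is built into the lemma. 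The non-homogeneous bookkeeping (decomposing $f$ into odd and even parts, doubling $m$) is handled inside the lemma and costs only constant factors, so it does not affect the stated asymptotics.
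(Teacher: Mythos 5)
Your proposal is correct and follows essentially the same route as the paper: reduce to the representation $\odec{f}(y,z)=\sum_i c_i f(\alpha_i y+\beta_i z)$ from Lemma~\ref{lem:summation}, use convexity/monotonicity of $\Phi$ plus rotational invariance of Gaussians (resp.\ the random-mixture vectors $\bx^{(i)}$ and the conditional-expectation identity) for the moment inequality, and pigeonhole plus Theorem~\ref{thm:hypercon} with bias parameter $\lambda \geq 1/O(C_k)$ for the tail bound, giving $D_k=O(k)$ in the Gaussian case and $k^{O(k)}$ in the Boolean cases. The only discrepancy is notational: your assignment of $C_k=O(k^2)$ to the general Boolean case and $O(k^{3/2})$ to the homogeneous Boolean case matches the paper's main-text labeling of \textbf{H2}/\textbf{H3}, which is swapped relative to the labeling in the statement as quoted.
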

	
	The key idea of the proof is to express $\breve{f}(y, z)$ as a summation of $f$ with inputs a combination of $y$ and $z$ with different weights:
	\[
	\breve{f}(y, z) = \sum_{i=1}^{m} c_i f(a_i y + b_i z).
	\]
	We also need some constraints on $a_i$ and $b_i$ to maintain Gaussian/Boolean properties.
	\begin{lemma}
		\label{lem:summation}
		For any integer $k$, there exist some vectors $a, b, c \in \R^{m}$ such that for any multilinear polynomial $f$ with degree $k$ and the one-variable decoupled version $\breve{f}$ corresponding to $f$,
		\[
		\breve{f}(y, z) = \sum_{i=1}^{m} c_if(a_iy + b_iz).
		\]
		satisfying $\|c\|_1 \leq C_k$, $m = O(k)$ and with constraints
		\begin{align*}
			a_i^2 + b_i^2 = 1 & \quad \text{under \textbf{H1}},\\
			|a_i| + |b_i| = 1 \text{ and } |a_i|, |b_i| \geq 1/O(k^{3/2}) & \quad \text{under \textbf{H2}},\\
			|a_i| + |b_i| = 1 \text{ and } |a_i|, |b_i| \geq 1/O(k^{2}) & \quad \text{under \textbf{H3}}.\\
		\end{align*}
	\end{lemma}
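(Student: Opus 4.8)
The plan is to turn the polynomial identity $\breve{f}(y,z) = \sum_{i=1}^m c_i f(a_iy + b_iz)$ into a one-dimensional moment-matching problem, solve it by inverting a Vandermonde matrix, and then choose the nodes so that the resulting coefficients $c_i$ decay fast enough that $\|c\|_1$ is only $\poly(k)$. First I would reduce to the homogeneous case of degree exactly $k$; the lower-degree parts get patched in at the end. For homogeneous $f(x) = \sum_{|S|=k} a_S x_S$, expanding $f(a_iy+b_iz) = \sum_{|S|=k}a_S\prod_{j\in S}(a_iy_j + b_iz_j)$ and comparing, monomial by monomial, the coefficient of each $y_Tz_{S\setminus T}$ (which in $\breve{f}$ is $1$ exactly when $|T|=1$ and $0$ otherwise), the identity becomes equivalent to the $k+1$ scalar conditions: $\sum_i c_i a_i^{k-t}b_i^t$ equals $1$ if $t = k-1$ and $0$ otherwise, for $0 \le t \le k$. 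Writing $\Delta_i = b_i/a_i$ and $d_i = c_i a_i^k$ turns this into the Vandermonde system $\sum_i d_i \Delta_i^t = [t = k-1]$; taking $m = k+1$ pairwise-distinct nodes $\Delta_i$ makes it uniquely solvable, and Lagrange interpolation (equivalently the explicit formula for the inverse of a Vandermonde matrix, \cite{MS58}) gives the closed form
\[
    d_i = \frac{\Delta_i - \sum_{j}\Delta_j}{\prod_{j \ne i}(\Delta_i - \Delta_j)}, \qquad c_i = \frac{d_i}{a_i^k}.
\]

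The remaining freedom lies in the choice of the nodes $\Delta_i$ together with the normalization of each pair $(a_i,b_i)$. The normalization constraints --- $a_i^2+b_i^2 = 1$ under \textbf{H1} and $|a_i|+|b_i| = 1$ under \textbf{H2}, \textbf{H3} --- and the lower bounds $|a_i|,|b_i| \ge 1/O(C_k)$ are easy to arrange once the $\Delta_i$ are fixed: one simply rescales $(a_i,b_i)$ onto the relevant curve and checks that no coordinate becomes too small. The real content is to pick the $\Delta_i$ so that $\sum_i|c_i|$ is small, and here I would let them run over a harmonic-type progression $\Delta_i \propto 1/i$. Concretely, under \textbf{H1} with $k$ odd, index the $k+1$ nodes by $I = \{\pm1,\pm2,\dots,\pm\tfrac{k-1}{2},\pm\tfrac12\}$ and take $\Delta_i = k/i$, i.e. $a_i = i/\sqrt{k^2+i^2}$, $b_i = k/\sqrt{k^2+i^2}$; the Boolean hypotheses use analogous progressions with the constant $k$ replaced by a larger power ($k^{3/2}$ or $k^2$) to absorb the weaker hypercontractivity. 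With such a progression the product $\prod_{j\ne i}(\Delta_i-\Delta_j)$ telescopes into a ratio of factorials (and double factorials for the $\pm\tfrac12$ node), and even $k$ is handled by appending the trivial node $(a_0,b_0)=(1,0)$, which by the $\pm$-symmetry of the remaining nodes automatically receives coefficient $c_0 = 0$.

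I expect this last estimate --- bounding $|c_i|$ from the closed form --- to be the main obstacle and the bulk of the work. One has to rewrite the quotient of differences in terms of factorials, split the range of $i$ (say $1 \le i \le \sqrt k$ versus $\sqrt k \le i \le \tfrac{k-1}{2}$), and run a ratio/monotonicity argument --- showing, e.g., $(i+1)^3|c_{i+1}| \le i^3|c_i|$, which reduces to an elementary inequality such as $e^{x/2}(1-x)\le 1$ on $[0,1]$ --- together with a separate factorial-ratio bound for the half-integer node. Summing the resulting series, which decays like $1/i^3$, then yields $\|c\|_1 = O(k)$, $O(k^{3/2})$, $O(k^2)$ in the three cases, as claimed.

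Finally, to handle non-homogeneous $f$ I would exploit the normalization itself. Under \textbf{H2}, \textbf{H3} we have $a_i + b_i = 1$, so multiplying the degree-$k$ identity by $(a_i+b_i)$ and inducting downward shows that $\sum_i c_i a_i^{k'-t}b_i^t$ equals $1$ at $t = k'-1$, equals $k-k'$ at $t = k'$, and $0$ otherwise; the spurious $t=k'$ term is killed by averaging the identity with its reflection $a_i \mapsto -a_i$, which annihilates all even values of $k'-t$, at the cost of doubling $m$. Under \textbf{H1} only $a_i^2+b_i^2 = 1$ is available, so the downward induction steps by degree $2$; one then splits $f = f_{\mathrm{odd}}+f_{\mathrm{even}}$, applies the construction to each homogeneous-parity part, and recombines via the same reflection trick, still with $m = O(k)$ and $\|c\|_1$ increased only by a constant factor.
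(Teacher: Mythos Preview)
Your proposal is correct and follows essentially the same approach as the paper's proof: the same reduction to the moment conditions $\sum_i c_i a_i^{k-t}b_i^t = [t=k-1]$, the same Vandermonde inversion yielding the closed form $c_i = a_i^{-k}(\Delta_i - \sum_j\Delta_j)/\prod_{j\ne i}(\Delta_i-\Delta_j)$, the same harmonic node choice $\Delta_i = k/i$ on the index set $\{\pm 1,\dots,\pm\tfrac{k-1}{2},\pm\tfrac12\}$ (with $k^{3/2}$ or $k^2$ replacing $k$ in the Boolean cases), the same split at $i\approx\sqrt{k}$ with the ratio test $(i+1)^3|c_{i+1}|\le i^3|c_i|$ via $e^{x/2}(1-x)\le 1$, the same dummy node $(1,0)$ for even~$k$, and the same downward-induction plus reflection/parity-split trick for the non-homogeneous extensions. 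The only detail you leave implicit is that in the general Boolean case one must take the positive-only index set $\{1,\dots,k,\tfrac12\}$ with $\Delta_i = k^2/i^2$ so that genuinely $a_i+b_i=1$ (not merely $|a_i|+|b_i|=1$), which is what makes the downward induction work; this is exactly what the paper does.
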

	
	With Lemma~\ref{lem:summation}, the proof of Theorem \ref{thm:one-variable-full} is quite straightforward. The Boolean case is slight more twisted, so we will prove Gaussian case first.
	
	\begin{proof}[Proof of Theorem~\ref{thm:one-variable-full}, Hypothesis \textbf{H1}]
		By Lemma~\ref{lem:summation}, for any convex non-decreasing function $\Phi : [0, +\infty) \to [0, +\infty)$,
		\begin{align*}
			\E_{\bY, \bZ }\left[\Phi\left(|\breve{f}(\bY, \bZ)|\right)\right] &= \E_{\bY, \bZ}\left[\Phi\left(\left|\sum_{i = 1}^{m}c_i f(a_i\bY+b_i\bZ)\right|\right)\right]\\
			&\leq \E_{\bY, \bZ}\left[\Phi\left(\sum_{i = 1}^{m}|c_i|| f(a_i\bY+b_i\bZ)|\right)\right]\\
			&\leq \sum_{i = 1}^{m}\frac{|c_i|}{\|c\|_1} \E_{\bY, \bZ}\left[\Phi\left(\|c\|_1 |f(a_i \bY + b_i \bZ)|\right)\right]\\
			&= \sum_{i=1}^{m}\frac{|c_i|}{\|c\|_1} \E_{\bX}[\Phi(\|c\|_1 |f(\bX)|)]\\
			&\leq \E_{\bX}[\Phi(C_k|f(\bX)|]
		\end{align*}
		The inequalities hold from the convexity and monotonicity of $\Phi$.	The second equality holds since $a_i \bY + b_i \bZ \sim N(0,1)^n$ under the constraint $a_i^2 + b_i^2 = 1$.
		
		For the tail bound, by Lemma~\ref{lem:summation}, for any $y, z$ satisfying $|\breve{f}(y, z)| \geq t$, there exists some $i$ such that $\|c\|_1 |f(a_i y + b_i z)| \geq t$. Then by Pigeonhole principle, there exists some $i^* \in [m]$ satisfying
		\[
		\Pr_{\bY, \bZ}\left[\|c\|_1 |f(a_{i^*} \bY +b_{i^*} \bZ)| \geq t \, | \, |\breve{f}(\bY, \bZ)| \geq t \right] \geq \frac1{m}.
		\]
		
		Therefore
		\begin{align*}
			\Pr_{\bX} [C_k|f(\bX)| \geq t] &\geq \Pr_{\bY, \bZ} [\|c\|_1 |f(a_{i^*}\bY + b_{i^*}\bZ)| \geq t]\\
			&\geq \Pr_{\bY, \bZ}\left[\|c\|_1 |f(a_{i^*} \bY +b_{i^*} \bZ)|\right] \geq t \, | \, |\breve{f}(\bY, \bZ)| \geq t] \Pr_
			{\bY, \bZ} [|\breve{f}(\bY, \bZ)| \geq t]\\
			&\geq \frac1{m} \Pr_{\bY, \bZ} [|\breve{f}(\bY, \bZ)| \geq t]
		\end{align*}
		
	\end{proof}
	
	\begin{proof}[Proof of Theorem~\ref{thm:one-variable-full}, Hypothesis \textbf{H2, H3}]
		We define random bits
		\[
		\bX^{(i)}_j = \begin{cases}
		\bY_j &\quad \text{with probability } |a_i|,\\
		\bZ_j &\quad \text{with probability } |b_i|,
		\end{cases}
		\]
		for all $i \in [m]$ and $j \in [n]$ independently. Notice since $\bY, \bZ$ are uniformly random from $\{-1, 1\}^n$, $\bX^{(i)}$ is also uniformly random in $\{-1, 1\}^n$. By Lemma~\ref{lem:summation}, for any convex non-decreasing function $\Phi : [0, +\infty) \to [0, +\infty)$,
		\begin{align*}
			\E_{\bY, \bZ }\left[\Phi\left(|\breve{f}(\bY, \bZ)|\right)\right] &= \E_{\bY, \bZ}\left[\Phi\left(\left|\sum_{i = 1}^{m}c_i f(a_i\bY+b_i\bZ)\right|\right)\right]\\
			&\leq \E_{\bY, \bZ}\left[\Phi\left(\sum_{i = 1}^{m}|c_i|| f(a_i\bY+b_i\bZ)|\right)\right]\\
			&\leq \sum_{i = 1}^{m}\frac{|c_i|}{\|c\|_1} \E_{\bY, \bZ}\left[\Phi\left(\|c\|_1 |f(a_i \bY + b_i \bZ)|\right)\right]\\
			&\leq \sum_{i=1}^{m}\frac{|c_i|}{\|c\|_1} \E_{\bX}[\Phi(\|c\|_1 |f(\bX)|)]\\
			&\leq \E_{\bX}[\Phi(C_k|f(\bX|)]
		\end{align*}
		The first, second and last inequality holds as in hypothesis \textbf{H1}. The third inequality holds because $f(a_i \bY + b_i \bZ) = \E_{\bX^{(i)} \sim \bY, \bZ}[f(\bX^{(i)})]$ when $f$ is multilinear and then
		
		\begin{align*}
			\E_{\bY, \bZ}\left[\Phi\left(\|c\|_1 |f(a_i \bY + b_i \bZ)|\right)\right]
			&=\E_{\bY, \bZ}\left[\Phi\left(\|c\|_1 \left|\E_{\bX^{(i)}} [f(\bX^{(i)})]\right|\right)\right]\\
			&\leq \E_{\bY, \bZ, \bX^{(i)}}\left[\Phi\left(\|c\|_1 \left| f(\bX^{(i)})\right|\right)\right]\\
			&=\E_{\bX}[\Phi(\|c\|_1 |f(\bX)|)].
		\end{align*}

		For tail bound inequality, similar as the proof with hypothesis \textbf{H1},  there exists some $i^* \in [m]$ satisfying
		\[
		\Pr_{\bY, \bZ}\left[\|c\|_1 |f(a_{i^*} \bY +b_{i^*} \bZ)| \geq t\right] \geq \frac1{m}\Pr_{\bY, \bZ} [|\breve{f}(\bY, \bZ)| \geq t].
		\]
		Notice if $a_{i^*} = 1$ and $b_{i^*} = 0$, then $\bX^{(i^*)} = \bY$ then the result is trivial. Otherwise, we have $|a_{i^*}|, |b_{i^*}| \leq 1/O(k^2)$. For any fixed $\bY$ and $\bZ$, $\bX_j^{(i^*)}$ is a constant value if $\bY_j = \bZ_j$, and $\bX_j^{(i^*)}$ is a $\lambda$-biased distribution on $\{-1, 1\}$ with $\lambda \geq 1/O(k^2)$ if $\bY_j \neq \bZ_j$. Then by Lemma~\ref{lem:exceedmean-general}, for any fixed $\bY, \bZ$, we have
		\[
		\Pr_{\bX^{(i^*)}}\left[|f(\bX^{(i^*)})| \geq
		|f(a_{i^*}\bY + b_{i^*}\bZ)| \right]
		=
		\Pr_{\bX^{(i^*)}}\left[ |f(\bX^{(i^*)})| \geq  \left| \E_{\bX^{(i^*)}}\left[f(\bX^{(i^*)})  \right]\right| \right] \geq \frac1{O(\exp(k\log k))}		
		\]	
		Therefore,
		\begin{align*}
			\Pr_{\bX}[C_k |f(\bX)| \geq t] &\geq \Pr_{\bY, \bZ, \bX^{(i^*)}}\left[|f(\bX^{(i^*)})| \geq
			|f(a_{i^*}\bY + b_{i^*}\bZ)| \, \bigg| \, \|c\|_1 |f(a_{i^*}\bY + b_{i^*}\bZ)| \geq t \right] \\
			& \qquad \qquad \qquad \Pr_{\bY, \bZ} [\|c\|_1 |f(a_{i^*} \bY +b_{i^*} \bZ)| \geq t]\\
			& \geq \frac1{O(\exp(k\log k))} \Pr_{\bY, \bZ} [|\breve{f}(\bY, \bZ)| \geq t]
		\end{align*}
	\end{proof}
	
	The proof of Lemma~\ref{lem:summation} is choosing $a, b$ smartly to minimize $\|c\|_1$. We will show that it can be done by setting the ratio of $a_i$ and $b_i$ to be a (hyper)harmonic progression.
	
	\begin{proof}[Proof of Lemma~\ref{lem:summation}]
		We first consider the homogeneous case of degree $k$. We will show that for homogeneous case, $\breve{f}$ can be written as a sum of $k + 1$ different terms of $f$.
		\[
		\breve{f}(y, z) = \sum_{i=1}^{k+1} c_if(a_iy + b_iz).
		\]
		If we compare two sides term by term, it is equivalent to say for any $S \subseteq [n]$ with size $k$,
		\[
		\sum_{j \in S} y_j z^{S/\{j\}} = \sum_{i = 1}^{k + 1} c_i \prod_{j \in S} (a_i y_j + b_i z_j).
		\]
		Furthermore, we can simplify the constraints as
		\[
		\sum_{i = 1}^{k + 1} c_i a_i^{k-t} b_i^t = \begin{cases}
		1 & \text{ if } t = k - 1\\
		0 & \text{ otherwise}
		\end{cases}
		\]
		for all integers $ 0 \leq t \leq k$.
		
		If we define the ratio $\Delta_i = \frac{b_i}{a_i}$ and Vandermonde matrix V as
		\[
		V =
		\begin{bmatrix}
		1 & 1 & \dots & 1\\
		\Delta_1 & \Delta_2 &  \dots& \Delta_{k+1}\\
		\dots & \dots & \dots & \dots\\
		\Delta_1^{k-1} & \Delta_2^{k-1} &  \dots & \Delta_{k+1}^{k-1}\\
		\Delta_1^k & \Delta_2^k &  \dots& \Delta_{k+1}^k
		\end{bmatrix},
		\]
		Diagonal matrix $A = \text{diag}(a_1^k, a_2^k, \dots, a_{k+1}^k)$ and the indicator for $k$-th coordinate $e_k = (0, 0, \dots, 0, 1, 0)^{T}$. Then the matrix form of the constraints is:
		\[
		VAc = e_{k}.
		\]
		If we assume all $\Delta_i$'s are different, then $V$ is invertible. There is a formula for computing the inverse of Vandermonde matrix, so we can get the formula of $c_i$ as
		\[
		c_i = (A^{-1}V^{-1}e_{k})_i =  \frac{1}{a_i^k} \cdot \frac{\Delta_i - \sum_{j=1}^{k+1} \Delta_j}{\prod_{j=1, j \neq i}^{k+1} (\Delta_i - \Delta_j) }
		\]
		for all $1 \leq i \leq {k+1}$.
		
		If $k$ is odd, for the sake of convenience, we will replace our index $i$ from $1$ through $k+1$ by $1, 2, \dots, (k-1)/2$, $-1, -2, \dots, -(k-1)/2$ and $1/2, -1/2$ (Here $1/2$ and $-1/2$ are weird but we need them for easier analysis of the bound of  $|c_i|$). For Hypothesis \textbf{H1}, we assign
		\[
		\Delta_i = \frac{k}{i}, \quad a_i = \frac{i}{\sqrt{k^2 + i^2}}, \quad b_i = \frac{k}{\sqrt{k^2 + i^2}}
		\]
		for all $i\in \{1, 2, \dots, (k-1)/2, -1, -2, \dots, -(k-1)/2, 1/2, -1/2\}$. Then we know that $a_i^2 + b_i^2 = 1$, and the rest is to prove that $\|c\|_1 \leq O(k)$.
		
		For $1 \leq i \leq \frac{k-1}2$, we have
		\begin{align*}
			\left|\prod_{j \neq i} (\Delta_i - \Delta_j) \right|
			&= (\Delta_{1/2} - \Delta_i) (\Delta_i-\Delta_{-1/2})\left| \prod_{j=1, j \neq i}^{(k-1)/2} (\Delta_i - \Delta_j) \right| \prod_{j=-(k-1)/2}^{-1} (\Delta_i - \Delta_j) \\
			&= \left(2k - \frac{k}i\right)\left(2k + \frac{k}i\right) \prod_{j=1, j \neq i}^{(k-1)/2} \left|\frac{k}{i} - \frac{k}{j} \right| \prod_{j=1}^{(k-1)/2} (\frac{k}i + \frac{k}j) \\
			&= k^{k}\left(4-\frac1{i^2}\right) \prod_{j=1, j \neq i}^{(k-1)/2} \frac{|j-i|}{ij} \prod_{j=1}^{(k-1)/2}\frac{j+i}{ij}\\
			&= \frac{k^{k}}{i^{k-2}} \left(4-\frac1{i^2}\right) \frac{\left(\frac{k-1}2 + i\right)! \left(\frac{k-1}2 - i\right)!}{ \left[ \left(\frac{k-1}2 \right)!\right]^2}
		\end{align*}
		
		Then by the formula of $c_i$
		\begin{align*}
			|c_i| &=  \cdot \frac{1}{|a_i^k|} \cdot |\Delta_i - \sum_j \Delta_j| \cdot \frac1{\prod_{j \neq i} |\Delta_i - \Delta_j|} \\
			&= \cdot \left(\frac{\sqrt{k^2+i^2}}{i}\right)^k \cdot \frac{k}{i} \cdot \frac{i^{k-2}}{k^k} \cdot  \frac1{4-1/i^2} \cdot \frac{ \left[ \left(\frac{k-1}2 \right)!\right]^2}{\left(\frac{k-1}2 + i\right)! \left(\frac{k-1}2 - i\right)!}\\
			&= \frac{k}{i^3} \left(1 + \frac{i^2}{k^{2}}\right)^{k/2} \frac1{4-1/i^2} \cdot \frac{ \left[ \left(\frac{k-1}2 \right)!\right]^2}{\left(\frac{k-1}2 + i\right)! \left(\frac{k-1}2 - i\right)!}
		\end{align*}
		When $ 1 \leq i \leq \sqrt{k}$, we have
		\begin{align*}
			|c_i| &= \frac{k}{i^3} \left(1 + \frac{i^2}{k^{2}}\right)^{k/2} \frac1{4-1/i^2} \cdot \frac{ \left[ \left(\frac{k-1}2 \right)!\right]^2}{\left(\frac{k-1}2 + i\right)! \left(\frac{k-1}2 - i\right)!}\\
			&\leq \frac{k}{i^3} \left(1 + \frac{1}{k}\right)^{k/2}\\
			&\leq \frac{\sqrt{e}k}{i^3}.
		\end{align*}
		For $\sqrt{k} \leq i \leq \frac{k-1}2$, consider the ratio between $(i+1)^3|c_{i+1}|$ and $i^3 |c_i|$,
		\begin{align*}
			\frac{(i+1)^3|c_{i+1}|}{i^3|c_i|} &\leq \frac{(k^{2} + (i + 1)^2)^{k/2}}{(k^{2}+i^2)^{k/2}} \cdot \frac{\frac{k-1}2 - i}{\frac{k-1}2 + i +1} \\
			&=\left(1+\frac{2i+1}{k^2+i^2}\right)^{k/2}\cdot \frac{k-1-2i}{k+1+2i}\\
			&\leq \left(1+\frac{2i+1}{k^2}\right)^{k/2} \cdot \frac{k-1-2i}k\\
			&\leq e^{\frac{2i+1}{2k}}\left(1 - \frac{2i+1}k\right) \leq 1.
		\end{align*}
		The last inequality holds since $e^{x/2}(1-x) \leq 1$ for all $0 \leq x \leq 1$. This means $(i+1)^3|c_{i+1}| \leq i^3|c_i|$. Then by induction, we know that
		\[
		|c_i| \leq \frac{\sqrt{e}k}{i^3}
		\]
		for all $1 \leq i \leq (k-1)/2$.
		
		The last coefficient we need to analyze is $c_{1/2}$, and similarly we have
		\begin{align*}
			\left|\prod_{j \neq *} (\Delta_{1/2} - \Delta_j) \right|
			&= (\Delta_{\frac12} - \Delta_{-1/2}) \prod_{j=1}^{(k-1)/2} (\Delta_{1/2} - \Delta_j) \prod_{j=-(k-1)/2}^{-1} (\Delta_{\frac12} - \Delta_j) \\
			&= 4k  \prod_{j=1, }^{(k-1)/2}  \left(2k - \frac{k}j \right) \prod_{j=1}^{(k-1)/2} (2k + \frac{k}j) \\
			&= 4k^{k} \prod_{j=1}^{(k-1)/2} \frac{2j-1}{j} \prod_{j=1}^{(k-1)/2}\frac{2j+1}{j}\\
			&= 4k^{k}  \frac{(k-2)!!k!!}{ \left[ \left(\frac{k-1}2 \right)!\right]^2}
		\end{align*}
		Then we get
		\begin{align*}
			|c_{1/2}| &= \frac{1}{a_{1/2}^k} \cdot (\Delta_{1/2} - \sum_j \Delta_j) \cdot \frac1{\prod_{j\neq *} |\Delta_{1/2} - \Delta_j|} \\
			&=  \frac{(\sqrt{k^2+1/2^2})^k}{(1/2)^k} \cdot 2k \cdot \frac{1}{4k^{k}} \cdot  \frac{ \left[ \left(\frac{k-1}2 \right)!\right]^2}{(k-2)!!k!!}\\
			& = \left(1 + \frac{1}{4k}\right)^{k/2} \left(\frac{(k-1)!!}{(k-2)!!}\right)^2\\
			& \leq  e^{1/8} \cdot 2(k-1) \leq 2e^{1/8}k.
		\end{align*}
		
		It is easy to see that these coefficients are symmetric. $|c_i| = |c_{-i}|$ for all $1 \leq i \leq \frac{k-1}2$ or $i = \frac12$, so we conclude that
		\[
		\sum_i |c_i| = 2 \sum_{i = 1}^{(k-1)/2}|c_i| + 2|c_{1/2}|\leq 2\sqrt{e}\sum_{i =1}^{(k-1)/2} \frac{k}{i^3} + 4e^{1/8}k \leq 20k
		\]
		
		If $k$ is even, we define $\Delta_i, a_i, b_i$ for $i = 1, \dots, (k-2)/2, -1, \dots, -(k-2)/2$ and $1/2, -1/2$ similar to the odd case $k - 1$, with one extra coordinate $a_0 = 1$, $b_0 = \Delta_0 = 0$. It is easy to check that $c_0 = 0$ and the proof of bounds for other $|c_i|$ still holds.
		
		Following a similar analysis as above, we can also show that if we set
		\[
		\Delta_i = \frac{k^{3/2}}{i}, \quad a_i = \frac{i}{k^{3/2}+|i|}, \quad b_i = \frac{k^{3/2}}{k^{3/2} + |i|},
		\]
		for all $i\in \{1, 2, \dots, (k-1)/2, -1, -2, \dots, -(k-1)/2, 1/2, -1/2\}$ (Assume $k$ is odd, and use the same trick as above for even $k$), then we have $|a_i| + |b_i| = 1$ and $\|c\|_1 \leq O(k^{3/2})$ (for Hypothesis \textbf{H2}). And if we set
		\[
		\Delta_i = \frac{k^{2}}{i^2}, \quad a_i = \frac{i^2}{k^{2}+i^2}, \quad b_i = \frac{k^{2}}{k^{2} + i^2},
		\]
		for all $i\in \{1, 2, \dots, k, \text{ and } 1/2\}$, then we have $a_i + b_i = 1$ and $\|c\|_1 \leq O(k^{2})$ (for Hypothesis \textbf{H3}).
		
		Now we are done for Hypothesis \textbf{H2}, since we assume $f$ is homogeneous under \textbf{H2}. For Hypothesis \textbf{H3},  a key observation is that, since $a_i + b_i = 1$ for all $i$,
		\[
		\sum_{i} c_i a_i^{k' - t} b_i^{t} = \sum_{i} c_i a_i^{k' - t} b_i^t (a_i + b_i) = \sum_i c_i a_i^{k'-t+1}b_i^t + \sum_i c_i a_i^{k'-t}b_i^{t+1}.
		\]
		for any $k' < k$.
		Then by induction, the same configuration of homogeneous case satisfies for the constraints
		\[
		\sum_{i} c_i a_i^{k'-t} b_i^t = \begin{cases}
		k-k' & \text{ if } t = k'\\
		1 & \text{ if } t = k' - 1\\
		0 & \text{ otherwise}
		\end{cases}
		\]
		for all $k' \leq k$. Now the case $t = k'$ is different from what we want, but an easy trick can get rid of it:
		\[
		\frac12\sum_{i} c_i a_i^{k'-t} b_i^t - \frac12\sum_{i} c_i (-a_i)^{k'-t} b_i^t = \frac{1 - (-1)^{k'-t}}2 \sum_{i} c_i a_i^{k'-t} b_i^t = \begin{cases}
		1 & \text{ if } t = k' - 1\\
		0 & \text{ otherwise}
		\end{cases}
		\]
		Therefore,
		\[
		\breve{f}(y, z) = \sum_{i = 1}^{m}c_i f(a_i y + b_i z)
		\]
		for general multilinear polynomial $f$ with degree $k$ with $|c| \leq O(k^2)$, $|a_i| + |b_i| = 1$ and $m = 2(k+1)$.
		
		For hypothesis \textbf{H1}, similarly, $a_i^2 + b_i^2 = 1$ for all $i$,
		\[
		\sum_{i} c_i a_i^{k' - t} b_i^{t} = \sum_{i} c_i a_i^{k' - t} b_i^t (a_i^2 + b_i^2) = \sum_i c_i a_i^{k'-t+2}b_i^t + \sum_i c_i a_i^{k'-t}b_i^{t+2}.
		\]
		Then by induction, the same configuration of homogeneous case satisfies for the constraints
		\[
		\sum_{i = 1}^{k + 1} c_i a_i^{k'-t} b_i^t = \begin{cases}
		1 & \text{ if } t = k' - 1\\
		0 & \text{ otherwise}
		\end{cases}
		\]
		for any $0 \leq k' \leq k$ satisfying $k-k'$ is even. Therefore Lemma~\ref{lem:summation} holds for all polynomials with terms in only odd or even degrees. For a general degree-$k$ polynomial $f$, we can  decompose $f = f_{\text{odd}} + f_{\text{even}}$, where $f_{\text{odd}}, f_{\text{even}}$ are terms of $f$ with only odd or even degrees. Then we know that there exist some vectors $a, b, c$ and $a', b', c'$ satisfying
		\[
		\breve{f}_{\text{odd}}(y, z) = \sum_{i=1}^{k+1} c_i f_{\text{odd}}(a_iy + b_iz), \quad \breve{f}_{\text{even}} (y,z)= \sum_{i=1}^{k+1} c_i' f_{\text{even}}(a_i'y + b_i'z)
		\]
		while $\|c\|_1, \|c'\|_1 \leq 20k$.
		Therefore
		\begin{align*}
			\breve{f}(y, z) &= \breve{f}_{\text{odd}}(y, z) +\breve{f}_{\text{even}} (y,z)\\
			&= \sum_{i=1}^{k+1} c_i f_{\text{odd}}(a_iy + b_iz) + \sum_{i=1}^{k+1} c_i' f_{\text{even}}(a_i'y + b_i'z)\\
			&= \sum_{i =1}^{k+1}\frac12 c_i (f(a_iy + b_iz) -f(-a_iy-b_iz)) + \sum_{i=1}^{k+1} \frac12 c_i' (f(a_i'y + b_i'z)+f(-a_i'y-b_i'z))\\
			&= \sum_{i = 1}^{4(k+1)} c_i'' f(a_i''y + b_i''z),
		\end{align*}
		
		where $c'' = (c/2, -c/2, c'/2, c'/2), a'' = (a, -a, a', -a'), b''=(b, -b, b', -b')$ and $\|c''\|_1 \leq 40k$.
		
	\end{proof}
	
	\section{Application of one-variable decoupling: Tail bounds for Boolean/Gaussian functions}
	\label{sec:tailbound}
	
	In this section, we will prove the following result using one-variable decoupling inequality.
	
	\begin{reptheorem}{thm:tail}
		There is a universal constant $C$ such that the following holds: Suppose $f:\{-1, 1\}^n \to \R$ has degree at most $k$ and that $\Var[f] = 1$. Let $t \geq 1$ and suppose that $\MaxInf[f] \leq C^{-k^3t^2}$. Then
		\[
		\Pr[|f|\geq t] \geq \exp(-Ct^2k^2)
		\]
	\end{reptheorem}
	
	With this better tail bound, repeat  the exact proof in Section 5 in \cite{DFKO}, we show a bounded real-valued function whose Fourier tail is very small must be close to a junta.
	
	\begin{reptheorem}{thm:junta}
		Let $f : \{-1, 1\}^n \to [-1, 1]$, $k\geq 1$, and $\epsilon > 0$. Suppose
		\[
		\sum_{|S| > k} \widehat{f}(S)^2 \leq \exp(-O(k^2)/\eps)
		\]
		Then $f$ is an $(\epsilon, 2^{O(k^3)}/\epsilon^2)$-junta.
	\end{reptheorem}
	
	\subsection{A lower bound on large deviations}
	
	We begin with a lower bound of tail probability for one-variable decoupled polynomials with Gaussian inputs.
	
	\begin{lemma}
		\label{lem:decoupledtail}
			There exists a universal constant $C$ such that the following holds:
			Let $f:\R^{n+m} \to \R$ with degree $k$ and $f$ is a linear map on a subset of inputs:
			\[
			f(y,z) = \sum_{i=1}^n y_i g_i(z).
			\]
			where $y = (y_1, \dots, y_n) \in \R^n$, $z \in \R^m$ and, $g_i:\R^m \to \R$ with degree at most $k - 1$ for all $i$. Suppose
			\[
			\Var[f]=\sum_{i}\sum_{S}\widehat{g_i}(S)^2 \geq 1.
			\] Then for Gaussian random variables $\bY \sim N(0, 1)^n, \bZ \sim N(0, 1)^m$,
			\[
			\Pr_{\bY, \bZ}[|f(\bY, \bZ)|\geq t] \geq \exp(-O(k + t^2)).
			\]
	\end{lemma}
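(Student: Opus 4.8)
The plan is to condition on the outcome of $\bZ$ and exploit the fact that, for each fixed $z$, the polynomial $f(\bY, z) = \sum_{i=1}^n \bY_i\, g_i(z)$ is simply a centered Gaussian. Set $v(z) = \sum_{i=1}^n g_i(z)^2$, a polynomial in $z_1, \dots, z_m$ of degree at most $2(k-1)$. Two observations drive the proof. First, $\E[v(\bZ)] = \sum_{i=1}^n \|g_i\|_2^2 = \Var[f] \geq 1$ by hypothesis (using that every monomial of $f$ contains a $\bY$-variable, so $\E[f] = 0$ and $\Var[f] = \E[f^2]$, together with the independence and unit variance of the $\bY_i$). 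Second, conditioned on $\bZ = z$, the random variable $f(\bY, z)$ is a fixed linear combination of the independent standard Gaussians $\bY_1, \dots, \bY_n$, and is therefore distributed exactly as $\Normal(0, v(z))$.

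Given these, the argument is a two-step combination. In the first step I would bound $\Pr[v(\bZ) \geq 1]$ from below: if $v$ is a constant polynomial then $v \equiv \E[v(\bZ)] \geq 1$ identically and there is nothing to do, and otherwise $v$ is a nonconstant polynomial of Gaussians of degree at most $2(k-1)$, so Theorem~\ref{thm:hypercon} gives $\Pr[v(\bZ) > \E[v(\bZ)]] \geq \tfrac14 e^{-4(k-1)} = \exp(-O(k))$; since $\E[v(\bZ)] \geq 1$ this yields $\Pr[v(\bZ) \geq 1] \geq \exp(-O(k))$. In the second step, for any $z$ with $v(z) \geq 1$ the conditional law $\Normal(0, v(z))$ has variance at least $1$, so the standard Gaussian lower tail estimate gives $\Pr_{\bY}[\,|f(\bY, z)| \geq t\,] \geq \Pr_{W \sim \Normal(0,1)}[\,|W| \geq t\,] \geq \exp(-O(t^2))$, where the $O(\cdot)$ absorbs an additive constant so the bound is valid for all $t > 0$. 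Combining these, and using the independence of $\bY$ and $\bZ$,
\[
    \Pr[\,|f(\bY, \bZ)| \geq t\,] \;\geq\; \exp(-O(t^2)) \cdot \Pr[v(\bZ) \geq 1] \;\geq\; \exp(-O(t^2)) \cdot \exp(-O(k)) \;=\; \exp(-O(k + t^2)).
\]

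I do not anticipate a genuine obstacle: once Theorem~\ref{thm:hypercon} and Gaussian anti-concentration are in hand this is essentially a two-line reduction. The only subtleties are bookkeeping ones — handling the degenerate case in which $v$ is constant (so that Theorem~\ref{thm:hypercon}, which assumes a nonconstant polynomial, does not literally apply), and using the Gaussian tail in the uniform form $\exp(-O(t^2))$ rather than $\Omega(e^{-t^2/2})$ (which fails for small $t$); both were addressed above. If one instead wants the version with a general variance $\sigma^2$ in place of the normalization $\Var[f] \geq 1$, the identical argument goes through after replacing the threshold $1$ by $\sigma^2$, yielding $\Pr[\,|f| \geq u\,] \geq \exp(-O(k + u^2/\sigma^2))$.
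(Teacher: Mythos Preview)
Your proposal is correct and follows essentially the same approach as the paper: define $v(z)=\sum_i g_i(z)^2$, use Theorem~\ref{thm:hypercon} on this degree-$2(k-1)$ polynomial to get $\Pr[v(\bZ)\ge 1]\ge\exp(-O(k))$, then use that $f(\bY,z)\sim\Normal(0,v(z))$ conditionally and apply the Gaussian tail lower bound. Your explicit handling of the degenerate constant-$v$ case and of the small-$t$ regime are minor refinements the paper glosses over, but the core argument is identical.
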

	\begin{proof}
		By Parseval's Theorem,
		\[
		\E_{\bZ}\left[\sum_{i=1}^ng_i(\bZ)^2\right] = \sum_{i = 1}^n \E_{\bZ}[g_i(\bZ)^2] = \sum_{i = 1}^n \sum_{S}\widehat{g_i}(S)^2 = \Var[f] \geq 1
		\]
		We denote $l_{z}(y) = f(y, z)$ for any $z \in \R^m$.$\sum_{i=1}^ng_i(\bZ)^2$ is a polynomial with degree at most $2(k-1)$ and by Lemma~\ref{lem:exceedmean},
		\[
		\Pr_{\bZ}[\Var[l_{\bZ}] \geq 1] \geq \Pr_{\bZ}\left[\sum_{i=1}^ng_i(\bZ)^2 \geq \E_{\bZ}\left[\sum_{i=1}^ng_i(\bZ)^2\right]\right] \geq \exp(-O(k)).
		\]
		Notice that $l_z$ is a sum of independent Gaussian variables. Therefore,
		\[
		\Pr_{\bY} [l_{z}(\bY) \geq t] \geq \exp(-O(t^2))
		\]
		when $\Var[l_{z}] \geq 1$. Therefore
		\[
		\Pr_{\bY, \bZ}[|f(\bY, \bZ)|\geq t] \geq \Pr_{\bZ}[\Var[l_{\bZ}] \geq 1] \Pr_{\bY, \bZ} [l_{\bZ}(\bY) \geq t | \Var[l_{\bZ}] \geq 1] \geq \exp(-O(k+t^2))
		\]
	\end{proof}
	
	\begin{remark}
		In \cite{LT13}, they show that a linear function with Boolean inputs has a similar tail bound when the coefficients are sufficiently small. Therefore Theorem~\ref{lem:decoupledtail} also holds for Boolean inputs, with constraints that $\Inf_{y_i}(f) = \sum_{S}\widehat{g_i}(S)^2 \leq t^{-2}C^{-k}$ for all $i$, where $C$ is a universal constant. The proof is similar but slightly twisted by using Bonami inequality and the lemma in \cite{LT13}.
	\end{remark}
	
	\begin{proof}[Proof of Theorem~\ref{thm:tail}]
		Theorem~\ref{thm:one-variable} shows us an inequality between the tail probability of a polynomial $f$ and its one-variable decoupled version $\breve{f}$ with Gaussian inputs. Notice that $\Var[\breve{f}] \geq \Var[f]$ and $\breve{f}(y, z)$ is a linear map on $y$. Therefore if $\Var[f] = 1$, combining Theorem~\ref{thm:one-variable} and Lemma~\ref{lem:decoupledtail} we get
		\[
		\Pr_{\bX}[|f(\bX)| \geq t] \geq \frac1{O(k)} \Pr_{\bY, \bZ} [|\breve{f}(\bY, \bZ)| \geq Ckt] \geq \exp(-O(t^2k^2))
		\]
		The tail probabilities for Gaussian and Boolean inputs are close when all influences are small. By Lemma~\ref{lem:invariance}, there exists some universal constant $C$, such that
		\[
		\left|\Pr_{\bx \sim \{-1,1\}^n}[f(\bx) \geq t] - \Pr_{\bX \sim N(0,1)^n}[f(\bX) \geq t]\right| \leq \exp(-Ct^2k^2)
		\]
		when $\MaxInf[f] \leq C^{-t^2k^3}$. Theorem~\ref{thm:tail} holds when we can choose $C$ sufficienly large to let the difference of tail probabilities between Boolean and Gaussian inputs very tiny comparing to the tail bound of Gaussian function.
		\end{proof}
	
	\begin{remark}
		When $f$ is a homogenuous function with degree $k$, it is easy to check that $\Var[\breve{f}]= k\Var[f]$. Then following the proof of Theorem~\ref{thm:tail} we get a better tail bound (reduce $k$ in exponential).
		\begin{corollary}
			\label{cor:homo}
			There is a universal constant $C$ such that the following holds: Suppose $f:\{-1, 1\}^n \to \R$ is homogenuous and has degree at most $k$ and that $\Var[f] = 1$. Let $t \geq 1$ and suppose that $\MaxInf[f] \leq C^{-t^2k^2}$. Then
			\[
			\Pr[|f|\geq t] \geq \exp(-Ct^2k)
			\]
		\end{corollary}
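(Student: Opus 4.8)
The plan is to follow the same two-stage route used for the general Boolean DFKO inequality: first prove the tail bound for $f$ evaluated at \emph{Gaussians}, where one-block decoupling with the $O(k)$ constant of Theorem~\ref{thm:one-variable-full} makes the argument short, and then transfer it to $\{\pm1\}^n$ with the Invariance Principle. The only new ingredient over the non-homogeneous case is that homogeneity multiplies the variance of the decoupling by $k$, and it is exactly this factor that upgrades the exponent from $t^2k^2$ to $t^2k$. (The general case $\deg f=d\le k$ reduces to the case $\deg f=k$ by running the whole argument with $d$ in place of $k$, since $\exp(-O(t^2d))\ge\exp(-O(t^2k))$.)

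For the Gaussian step, write $f=\sum_{|S|=k}a_Sx_S$ (a function on the cube is automatically multilinear), so $\sum_{|S|=k}a_S^2=\Var[f]=1$, and the same identity holds over Gaussians. Its one-block decoupling $\odec{f}(y,z)=\sum_{i=1}^n y_i g_i(z)$, with $g_i(z)=\sum_{S\ni i}a_S z_{S\setminus i}$, has monomials $y_i z_{S\setminus i}$ that are pairwise orthogonal under the Gaussian measure, so
\[
  \Var[\odec{f}(\by,\bz)] \;=\; \sum_{i=1}^n\|g_i\|_2^2 \;=\; k\sum_{|S|=k}a_S^2 \;=\; k .
\]
Now I would apply Theorem~\ref{thm:one-variable-full} under Hypothesis~\textbf{H1} ($C_k=D_k=O(k)$) together with Lemma~\ref{lem:decoupledtail} applied to $\odec{f}$ with $\sigma^2=k$: for $\bx\sim\Normal(0,1)^n$ and $t\ge1$,
\[
  \Pr[\,|f(\bx)|>t\,] \;\ge\; D_k^{-1}\Pr[\,|\odec{f}(\by,\bz)|>C_k t\,] \;\ge\; \tfrac1{O(k)}\exp\!\bigl(-O\bigl(k+(C_k t)^2/k\bigr)\bigr) \;=\; \exp\bigl(-O(kt^2)\bigr),
\]
since $(C_k t)^2/k=O(k^2t^2)/k=O(kt^2)$ and $t\ge1$. (Internally, Lemma~\ref{lem:decoupledtail} works because $v(z):=\sum_i g_i(z)^2$ has degree $\le2(k-1)$ and mean $k$, so Theorem~\ref{thm:hypercon} gives $\Pr[v(\bz)>k]\ge\exp(-O(k))$, and conditioned on such a $z$ the Gaussian $\odec{f}(\by,z)\sim\Normal(0,v(z))$ exceeds $C_k t$ in absolute value with probability $\exp(-O(kt^2))$.) This establishes the corollary in the Gaussian setting.

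To return to the cube, recall $f$ is multilinear of degree $\le k$ with $\Var[f]=1$ and $\MaxInf[f]\le C^{-t^2k^2}$. By the Invariance Principle~\cite{MOO10} in Lévy-distance form, the laws of $f(\bx)$ under uniform $\bx\in\{\pm1\}^n$ and under Gaussian $\bx$ differ by at most $2^{O(k)}\MaxInf[f]^{\Omega(1/k)}$ in Lévy distance, which for a large enough universal $C$ is below half the Gaussian lower bound $\exp(-O(t^2k))$ just obtained; hence $\Pr_{\bx\sim\{\pm1\}^n}[|f(\bx)|>t]\ge\exp(-O(t^2k))$, as claimed. I do not expect a single hard step: the genuinely difficult content — that one-block decoupling costs only $C_k=O(k)$ rather than $k^{O(k)}$ — is already packaged in Theorem~\ref{thm:one-variable-full}, and what remains is bookkeeping. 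The one place demanding care is the Gaussian step: one must check that $\Var[\odec{f}]=k$ cancels exactly one of the two factors of $k$ in $C_k^2=O(k^2)$ — without that cancellation one only gets $\exp(-O(t^2k^2))$ — and, secondarily, one must calibrate $C$ against the $\MaxInf^{\Omega(1/k)}$ rate in~\cite{MOO10}, which is what pins the hypothesis at $\MaxInf[f]\le C^{-t^2k^2}$.
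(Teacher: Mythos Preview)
Your proposal is correct and follows essentially the same route as the paper: establish the Gaussian tail bound via the one-block decoupling inequality (Theorem~\ref{thm:one-variable-full} under \textbf{H1}) combined with Lemma~\ref{lem:decoupledtail}, exploiting the key identity $\Var[\odec{f}]=k$ in the homogeneous case to cancel one factor of $k$ from $C_k^2$, and then transfer to the cube via the Invariance Principle with $\MaxInf[f]\le C^{-t^2k^2}$ calibrated to beat the target $\exp(-O(t^2k))$. The paper's own argument is just the one-sentence remark ``$\Var[\odec{f}]=k\Var[f]$, then follow the proof of Theorem~\ref{thm:tail}''; you have spelled out exactly that.
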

	\end{remark}
	
	\subsection{Tightness of the tail bound}
	In \cite{DFKO}, they present some Boolean/Gaussian function matching the bound of Theorem \ref{thm:tail} and \ref{thm:junta}.
	In this section, we will show that Corollary \ref{cor:homo} is also tight when considering either $k$ or $t$ is constant.
	\begin{theorem}
		\label{thm:tightk}
		For any $k$ , there exist some degree-$k$ homogeneous polynomial $f: \R^n \to \R$ with $\Var[f] = 1$ such that
		\[
		\Pr_{\bX \sim N(0,1)^n}[|f(\bX)| \geq t] \leq \exp(-\Omega(kt^{2/k})).
		\]
	\end{theorem}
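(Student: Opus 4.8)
The plan is to produce the extremal example as the normalized degree-$k$ (probabilists') Hermite polynomial $h_k$ applied to a single standard Gaussian, and then to multilinearize it without changing the degree or breaking homogeneity. First I would write out the explicit expansion
\[
    H_k(x) = \sum_{m=0}^{\lfloor k/2 \rfloor} (-1)^m \frac{k!}{m!\,(k-2m)!}\,\frac{x^{k-2m}}{2^m},
\]
and check that for $x \ge k$ the leading term dominates, so $0 \le H_k(x) \le x^k$ on $[k,\infty)$. Since $\bX_0 \sim \Normal(0,1)$ satisfies $\Pr[\bX_0^k \ge t] = \Pr[\bX_0 \ge t^{1/k}] \le \exp(-\Omega(t^{2/k}))$ for $t \ge k^k$, this yields $\Pr[H_k(\bX_0) \ge t] \le \exp(-\Omega(t^{2/k}))$ in that range, and symmetrically for the lower tail (using that $H_k$ is even or odd in $x$). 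Invoking the classical identities $\E[H_k(\bX_0)] = 0$ and $\Var[H_k(\bX_0)] = k!$, I pass to $h_k = H_k/\sqrt{k!}$, which has variance $1$, and conclude $\Pr[|h_k(\bX_0)| \ge t] \le \exp(-\Omega((t\sqrt{k!})^{2/k})) = \exp(-\Omega(k\,t^{2/k}))$ for $t \ge k^k/\sqrt{k!}$, where I used $k! \ge (k/e)^k$.

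Second, I would replace $h_k(\bX_0)$ by a genuine homogeneous multilinear polynomial of the same degree. Writing $\bX_0 = N^{-1/2}(\bX_1 + \cdots + \bX_N)$ with $\bX_1,\dots,\bX_N \sim \Normal(0,1)$ independent, set
\[
    f_{k,N}(\bX_1,\dots,\bX_N) = N^{-k/2}\sum_{1 \le j_1 < \cdots < j_k \le N} \bX_{j_1}\cdots \bX_{j_k},
\]
which is degree-$k$, homogeneous, and multilinear. It is a standard fact (the chaos approximation of a Hermite polynomial) that $f_{k,N}(\bX_1,\dots,\bX_N) \to h_k(\bX_0)$ in $L^p$ for every finite $p$ as $N \to \infty$; in particular $\Var[f_{k,N}] \to 1$, and $f_{k,N}$ converges in distribution to $h_k(\bX_0)$. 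Fixing $N$ large enough, the tail bound transfers up to an adjustment of the constant in the exponent, and rescaling $f_{k,N}$ by the (nearly $1$) factor $1/\sqrt{\Var[f_{k,N}]}$ restores the variance to exactly $1$ while again only affecting that constant.

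I expect the transfer in the last paragraph to be the only real subtlety: convergence in distribution alone does not automatically preserve an \emph{inequality} for the upper tail uniformly over $t$, so the cleanest route is to use the $L^p$ convergence quantitatively --- bound $\|f_{k,N} - h_k(\bX_0)\|_p$ and apply Markov's inequality to control the discrepancy between $\Pr[|f_{k,N}| \ge t]$ and $\Pr[|h_k(\bX_0)| \ge t]$ --- and then absorb the resulting error into the $\Omega(\cdot)$. Everything else (the Hermite expansion bound, the moment identities, and the multilinearization) is routine.
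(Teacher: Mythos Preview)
Your proposal is correct and takes essentially the same approach as the paper: use the normalized Hermite polynomial $h_k$ of a single standard Gaussian, bound its tail via the comparison $H_k(x) \le x^k$ for $x \ge k$, then multilinearize by replacing $\bX_0$ with $N^{-1/2}(\bX_1+\cdots+\bX_N)$ and passing to the elementary symmetric polynomial $f_{k,N}$ via $L^p$ convergence. The paper is in fact slightly less careful than you are on the final transfer step --- it simply asserts that the tail inequality passes to the limit --- so your remark about needing a quantitative $L^p$ bound (rather than mere convergence in distribution) to control the discrepancy uniformly in~$t$ is a genuine refinement.
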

	
	\begin{proof}
		Consider about function $g(x) = x^k$ Then we have
		\[
		\Pr_{\bX_0 \sim N(0,1)}[|g(\bX_0)| \geq t] = \Pr_{\bX_0 \sim N(0,1)}[|\bX_0| \geq t^{1/k}] \leq \exp(-\Omega(t^{2/k})).
		\]
		Here we can rewrite variable $\bX_0 = \frac{1}{\sqrt{n}}(\bX_1 + \bX_2+ \dots \bX_n)$ where $\bX_i \sim N(0,1)$. We define function
		\[
		f_n(x) = \frac{1}{\sqrt{{n \choose k}}}\sum_{S \in {[n] \choose k}} \prod_{i\in S} x_i
		\]
		and we can use $f_n(x)$ to approximate normalized $f$. Therefore,
		\[
		\lim_{n \to \infty} \Pr_{\bX \in N(0,1)^n}[|g_n(\bX)|\geq t] = \Pr_{\bX \in N(0,1)^n}\left[\frac{1}{\sqrt{(k-1)!}}\left|f\left(\frac{\sum_{i=1}^n\bX_i}{\sqrt{n}}\right)\right|\geq t \right] \leq \exp(-\Omega(kt^{2/k}))
		\]
	\end{proof}
	
	\begin{theorem}
		\label{thm:tightk}
		For any constant $k$, for any $t > 1$, there exist some degree-$k$ homogeneous Gaussian function $f: \R^n \to \R$ with $\Var[f] = 1$ such that
		\[
		\Pr_{\bX \sim N(0,1)^n}[|f(\bX)| \geq t] \leq \exp(-\Omega(t^2))
		\]
	\end{theorem}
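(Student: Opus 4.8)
The plan is to build an explicit homogeneous degree-$k$ polynomial that behaves, in distribution, almost exactly like a single standard Gaussian, so that it inherits the sharp $\exp(-\Omega(t^2))$ tail of $\Normal(0,1)$. Fix $t > 1$; I will take $n = kN$ variables for a large integer $N$ to be chosen at the end (depending on $t$ and $k$), partition the coordinates into $N$ consecutive blocks of size $k$, and set
\[
    f(x) \;=\; \frac{1}{\sqrt{N}} \sum_{j=1}^{N} x_{(j-1)k+1}\, x_{(j-1)k+2} \cdots x_{jk}.
\]
This is multilinear and homogeneous of degree $k$, as required. The point is that, unlike a single degree-$k$ monomial $x_1 \cdots x_k$ — whose tail under Gaussian inputs is only $\exp(-\Omega(t^{2/k}))$ — a normalized sum of \emph{many} independent such monomials is, by the central limit theorem, close in distribution to a standard Gaussian, and therefore has the much sharper tail $\exp(-\Omega(t^2))$; quantifying the rate of convergence is the whole content of the argument.

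To make this quantitative, write $\bX \sim \Normal(0,1)^{kN}$ and $W_j = \bX_{(j-1)k+1}\cdots \bX_{jk}$, so that $f(\bX) = \frac{1}{\sqrt N}\sum_{j=1}^N W_j$ is a normalized sum of i.i.d.\ random variables. Since each $W_j$ is a product of $k$ independent standard Gaussians, independence across coordinates gives $\E[W_j] = 0$, $\E[W_j^2] = \prod_i \E[\bX_i^2] = 1$, and $\E[|W_j|^3] = \prod_i \E[|\bX_i|^3] = \rho^k$, where $\rho = \E_{\bZ \sim \Normal(0,1)}[|\bZ|^3]$ is an absolute constant; in particular $\Var[f(\bX)] = \frac1N\sum_{j=1}^N \Var[W_j] = 1$, and (since $k$ is fixed) $\rho^k$ is a constant. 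Berry--Esseen applied to $\frac{1}{\sqrt N}\sum_j W_j$ then yields a universal constant $C$ with
\[
    \sup_{s \in \R} \left| \Pr\bigl[\,f(\bX) \le s\,\bigr] - \Pr_{\bZ \sim \Normal(0,1)}[\,\bZ \le s\,] \right| \;\le\; \frac{C \rho^k}{\sqrt N},
\]
and hence $\Pr[\,|f(\bX)| \ge t\,] \le \Pr[\,|\bZ| \ge t\,] + \frac{2C\rho^k}{\sqrt N} \le 2e^{-t^2/2} + \frac{2C\rho^k}{\sqrt N}$.

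It remains only to choose $N$. For the given $t$ (and the fixed $k$), picking $N$ large enough that $\frac{2C\rho^k}{\sqrt N} \le e^{-t^2/2}$ — e.g.\ any $N \ge 4C^2\rho^{2k}e^{t^2}$ — gives $\Pr[\,|f(\bX)| \ge t\,] \le 3e^{-t^2/2} = \exp(-\Omega(t^2))$, as claimed. I do not anticipate any real obstacle here: this is just a quantitative CLT argument, and the only point needing (minor) care is that the Berry--Esseen error carries the factor $\rho^k$, which is precisely why the statement is phrased ``for constant $k$'' — with $k$ fixed this is harmless, and it can always be absorbed by taking $N$ a suitable function of $t$ (if one wanted $k$ to grow as well, one would simply take $N$ correspondingly larger).
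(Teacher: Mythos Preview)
Your proof is correct and follows essentially the same approach as the paper: the same construction $f(x)=\frac{1}{\sqrt{N}}\sum_{j=1}^{N}\prod_{\ell=1}^{k}x_{(j-1)k+\ell}$, the same moment computations for the i.i.d.\ summands, and the same Berry--Esseen argument with the $\rho^k/\sqrt{N}$ error absorbed by choosing $N$ large in terms of $t$ and~$k$. If anything your write-up is a bit tidier (you track the factor of~$2$ for the two-sided tail and give an explicit sufficient $N$).
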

	\begin{proof}
		Consider about function
		\[
		f_n(x) = \frac{1}{\sqrt{n}}(x_1 x_2 \dots x_k+ x_{k+1}x_{k+2}\dots x_{2k}+\dots + x_{(n-1)k+1}x_{(n-1)k+2} \dots x_{nk})
		\]
		We know that if $\bX_1, \dots, \bX_k \sim N(0,1)$ are independent Gaussians,  $\E[\bX_1\bX_2\dots\bX_k] = 0, \E[(\bX_1\bX_2\dots\bX_k)^2] = 1$ and $\E[|\bX_1\bX_2\dots\bX_k|^3]= \rho^k$ where $\rho = \E_{\bX \sim N(0,1)}[|\bX|^3]$ is some constant not related to $k$.
		By Berry-Esseen Theorem,
		\[
		\left|\Pr_{\bX \sim N(0,1)^{kn}}[|f_n(\bX)| \geq t] - \Pr_{\bX_0 \sim N(0,1)}[|\bX_0| \geq t]
		\right| \leq \frac{C\rho^k}{\sqrt{n}}.
		\]
		If we choose a sufficient large $n \geq \rho^{2k} \exp(t^6) $, function $f_n$ will approximate standard Gaussian distribution. Therefore,
		\[
		\Pr_{\bX \sim N(0,1)^{kn}}[|f_n(\bX)| \geq t] \leq \Pr_{\bX_0 \sim N(0,1)}[|\bX_0| \geq t] + \frac{C\rho^k}{\sqrt{n}} \leq   \exp(-t^2) - C\exp(-t^3) \leq \exp(-\Omega(t^2))
		\]
	\end{proof}
	
	\section{Application of fully-Decoupling: Query complexity for fully-decoupled functions}
	\label{sec:application2}
	
	In \cite{aaronson2014}, they presented the following randomized algorithm which is a simulation of quantum algorithms.
	
	\begin{reptheorem}{thm:queryforblock}
		Let $f : \{-1,1\}^{n} \to [-1,1]$ be any block-multilinear polynomial of degree $k$ bounded in $[-1, 1]$. Then there exists a classical randomized algorithm that, on input $x \in \{-1,1\}^{n}$, non adaptively queries $O(\frac1{\eps^2}(2e)^{2k}n^{1-1/k})$ bits of $x$, and then outputs an estimate $f^*$ such that with high probability,
		\[
		|f^* - f(x)| \leq \epsilon
		\]
	\end{reptheorem}
	
	However they conjectured that the block-multilinearility condition can be removed. In this section, we will prove this conjecture via decoupling method.
	
	\begin{reptheorem}{thm:query}
		Suppose $f : \{-1,1\}^{n} \to [-1,1]$ has degree at most $k$. Then there exists a classical randomized algorithm that, on input $x \in \{-1,1\}^{n}$, non adaptively queries $O(\frac1{\eps^2}C^kn^{1-1/k})$ bits of $x$, where $C$ is a universal constant and then outputs an estimate $f^*$ such that with high probability,
		\[
		|f^* - f(x)| \leq \epsilon.
		\]
	\end{reptheorem}
	
	\begin{proof}[Proof of Theorem~\ref{thm:query} for non-block-multilinear case]
		Suppose function $f: \{-1,1\}^n \to [-1,1]$  with degree $k$. WOLOG we consider $f$ is homogeneous, otherwise we can add at most $k$ fake variables to make all terms be degree $k$. Notice in Theorem~\ref{thm:fulldecouple}, if we set convex function $\Phi(x) = |x|^m$, by letting $m \to \infty$, we can get a decoupling result for the maximum for homogeneous case:
		\[
		\max|\tilde{f}| \leq \exp(O(k)) \max |f|
		\]
		Therefore the output of $\tilde{f}$ is bounded in $[-C^k, C^k]$ for some constant $C$. Since $\tilde{f}/{C^k}$ is fully-decoupled(block-multilinear) and bounded in $[-1, 1]$, we can use the algorithm in Theorem~\ref{thm:queryforblock} to estimate $\tilde{f}/{C^k}$ using $O(\frac{1}{\epsilon^2} (2e)^{2k} n^{1-1/k})$ queries within error $\epsilon$. Another property for fully decoupling is that $\tilde{f}(x, x, \dots, x) = f(x)$. Therefore we can estimate $f$ in non-adaptive $O(\frac{1}{\epsilon^2}C'^kn^{1-1/k})$ queries within error $\epsilon$ with high probability where $C' = (2e)^2C$.
	\end{proof}

	\appendix
	
	\section{Bug fixing of the proof of Lemma 21 in \cite{aaronson2014}}
	\label{sec:appendix}
	
	There is a small bug in the proof of Lemma 21 in \cite{aaronson2014}. In the proof of Lemma 21, they denote
	\[
		V_i = \sum_{i_2, \dots, i_k \in [N]} a_{i_2, \dots, i_k}^2,
	\]
	and
	\[
		X_i = \sum_{i_2, \dots, i_k in [N]} a_{i_2, \dots, i_k} x_{i_2}\dots x_{i_k},
	\]
	where $x_{i_2}, \dots, x_{i_k}$ are uniformly random bits in $\{-1, 1\}$. Then $\E[X_i^2] = V_i$. They claim that ``By the concavity of the square root function, this means $\E[|X_i|] \geq \sqrt{V_i}$''. Here the direction of the inequality is incorrect. We can fix it by the following lemma in \cite{Ryan2014}.
	
	\begin{lemma}[Theorem 9.22 in \cite{Ryan2014}]
		\label{lemma:rootbound}
		Let $f : \{-1,1\}^n\to \R$ be a function  of degree at most $k$. Then
		\[
		\|f\|_2 \leq e^k \|f\|_1
		\]
	\end{lemma}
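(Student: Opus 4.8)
The plan is to derive this from the hypercontractive inequality for low-degree functions together with a H\"older interpolation of the $L^2$ norm between $L^1$ and $L^q$, and then to optimize the free parameter $q$. We may assume $f \not\equiv 0$, since otherwise both sides vanish; recall also that all norms in question are finite, as $f$ is defined on a finite set.

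First I would invoke the standard $(2,q)$-hypercontractive inequality (see, e.g., \cite{OD14}): for every $q \geq 2$ and every $g : \{\pm 1\}^n \to \R$ one has $\|T_\rho g\|_q \leq \|g\|_2$ with $\rho = 1/\sqrt{q-1}$. Applying this to $g = T_{1/\rho} f = \sum_{S} \rho^{-|S|}\wh{f}(S)\chi_S$ and using that $\deg f \le k$, so that $\rho^{-|S|} \le \rho^{-k}$ (as $\rho \le 1$), gives
\[
    \|f\|_q = \|T_\rho g\|_q \leq \|g\|_2 = \Bigl(\sum_S \rho^{-2|S|}\wh{f}(S)^2\Bigr)^{1/2} \leq \rho^{-k}\|f\|_2 = (q-1)^{k/2}\|f\|_2 .
\]
(An identical argument works in the Gaussian setting, as noted in the footnote to Theorem~\ref{thm:hypercon}.) Next, for any $q > 2$ write $\tfrac12 = (1-\lambda)\cdot 1 + \lambda\cdot\tfrac1q$ with $\lambda = \tfrac{q}{2(q-1)} \in (0,1)$, so H\"older's inequality gives $\|f\|_2 \le \|f\|_1^{1-\lambda}\|f\|_q^{\lambda}$. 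Combining this with the previous display,
\[
    \|f\|_2 \leq \|f\|_1^{1-\lambda}\bigl((q-1)^{k/2}\|f\|_2\bigr)^{\lambda},
\]
and since $0 < \|f\|_2 < \infty$ we may divide through to obtain $\|f\|_2^{1-\lambda} \le \|f\|_1^{1-\lambda}(q-1)^{k\lambda/2}$, i.e.
\[
    \|f\|_2 \leq (q-1)^{\frac{k}{2}\cdot\frac{\lambda}{1-\lambda}}\|f\|_1 = (q-1)^{\frac{kq}{2(q-2)}}\|f\|_1 ,
\]
using $\tfrac{\lambda}{1-\lambda} = \tfrac{q}{q-2}$.

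It remains to optimize $(q-1)^{\frac{q}{2(q-2)}}$ over $q > 2$, and I claim its infimum is exactly $e$. Substituting $u = q - 1 > 1$, the exponent of $e$ in this constant equals $\tfrac12 g(u)$ where $g(u) = \tfrac{u+1}{u-1}\ln u$. I would check that $g$ is nondecreasing on $(1,\infty)$ and that $g(u) \to 2$ as $u \to 1^+$. The limit is immediate since $\tfrac{\ln u}{u-1}\to 1$ and $u+1 \to 2$. For monotonicity, $g'(u) \ge 0$ is equivalent to $u^2 - 1 \ge 2u\ln u$, which follows from the elementary fact that $h(u) := u - \tfrac1u - 2\ln u$ satisfies $h(1) = 0$ and $h'(u) = 1 + \tfrac1{u^2} - \tfrac2u = \bigl(1 - \tfrac1u\bigr)^2 \ge 0$, so $h \ge 0$ on $(1,\infty)$. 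Hence $\inf_{q>2}(q-1)^{\frac{q}{2(q-2)}} = e$, so for every $\eps > 0$ there is some $q > 2$ with $(q-1)^{\frac{kq}{2(q-2)}} \le (e+\eps)^k$, yielding $\|f\|_2 \le (e+\eps)^k\|f\|_1$; letting $\eps \to 0$ completes the proof. The only mildly delicate point is precisely this final optimization: one cannot simply fix a convenient value like $q = 3$ or $q = 4$ (these give only $2^{3k/2}$ or $3^k$) but must pass to the limit $q \to 2^+$, which is exactly why the little monotonicity fact $h(u) \ge 0$ is needed.
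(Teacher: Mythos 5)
Your proof is correct: the hypercontractive estimate $\|f\|_q \leq (q-1)^{k/2}\|f\|_2$, the H\"older interpolation of $\|f\|_2$ between $\|f\|_1$ and $\|f\|_q$, and the passage to the limit $q \to 2^+$ (justified by your monotonicity computation for $h(u) = u - \tfrac1u - 2\ln u$) are all handled correctly, and the final $\eps \to 0$ step is legitimate since both sides are fixed finite quantities. The paper itself offers no proof of this lemma---it is quoted verbatim as Theorem~9.22 of~\cite{OD14}---and your argument is essentially the standard proof of that cited result.
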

	
	Since $X_i$ is a function of degree $k-1$, by plugging in this lemma we get
	\[
		\sqrt{V_i} = \sqrt{\E[X_i^2]} \leq e^k \E[|X_i|]
	\]
	Therefore $\sqrt{V_i}$ is still upper bounded by $\E[|X_i|]$, but we introduce a factor of $e^k$. The rest of the proof still holds and we need to add a factor of $e^k$ in the query complexity of the algorithm. But since the query complexity of the original theorem has already contained a factor exponential in $k$, this bug does not make a big difference.
}	
		
	\bibliographystyle{alpha}
	\bibliography{yuzhao}
	
\end{document}